\documentclass{article}

\usepackage{amssymb,amsmath,amsfonts,eurosym,geometry,ulem,graphicx,caption,color,setspace,sectsty,comment,footmisc,caption,natbib,pdflscape,subfigure,array,hyperref,upgreek,bbm}

\normalem

\usepackage{tabularx}
\usepackage{amsthm}
\usepackage{amssymb}
\usepackage{graphicx}
\usepackage{amsmath}
\usepackage{verbatim}
\usepackage{setspace}     
\usepackage{ulem}
\usepackage{textpos}
\usepackage{changepage}
\usepackage{url}
\usepackage{float}
\usepackage[para,online,flushleft]{threeparttable}
\usepackage{booktabs}
\usepackage{multirow}
\usepackage{palatino}
\usepackage{enumitem}
\usepackage[table,xcdraw]{xcolor}
\DeclareMathOperator*{\argmax}{arg\,max}

\tolerance=5000

\def\bs{\boldsymbol}

\def\t{^{\top}}

\onehalfspacing
\newtheorem{theorem}{Theorem}

\newtheorem{proposition}{Proposition}
\newtheorem{lemma}[theorem]{Lemma}

\usepackage{natbib} 

\usepackage{etoolbox}
\newcounter{bibcount}
\makeatletter
\patchcmd{\@lbibitem}{\item[}{\item[\hfil\stepcounter{bibcount}{[\thebibcount]}}{}{}
\setlength{\bibhang}{2\parindent}
\renewcommand\NAT@bibsetup%
[1]{\setlength{\leftmargin}{\bibhang}\setlength{\itemindent}{-\parindent}%
  \setlength{\itemsep}{\bibsep}\setlength{\parsep}{\z@}}
\makeatother

\newcommand{\R}{\mathbb{R}}
\newcommand{\E}{\mathbb{E}}

\newtheorem{assumption}{Assumption}

\newcolumntype{L}[1]{>{\raggedright\let\newline\\arraybackslash\hspace{0pt}}m{#1}}
\newcolumntype{C}[1]{>{\centering\let\newline\\arraybackslash\hspace{0pt}}m{#1}}
\newcolumntype{R}[1]{>{\raggedleft\let\newline\\arraybackslash\hspace{0pt}}m{#1}}

\begin{document}

\begin{titlepage}
\title{\textbf{Testing the Number of Components in Finite Mixture Normal Regression Models with Panel Data}
\author{Yu Hao\thanks{Address for correspondence: Yu (Jasmine) Hao, Faculty of Business and Economics, The University of Hong Kong. We are very grateful for the comments from Chun Pang Chow, Vadim Marmer, and Kevin Song. We are also grateful for the IAAE grant at the 2019 IAAE Conference. This research is supported by the Natural Science and Engineering Research Council of Canada. }\\
Faculty of Business and Economics\\
The University of Hong Kong\\
haoyu@hku.hk \and
Hiroyuki Kasahara\\
Vancouver School of Economics\\
The University of British Columbia\\
hkasahar@mail.ubc.ca
}}
\date{\today}
\maketitle
\begin{abstract}
\noindent  This paper introduces a likelihood ratio-based test for examining the null hypothesis of an $M_0$-component model versus an alternative $(M_0+1)$-component model in the context of normal mixture panel regression. We demonstrate that the first-order derivative of the density function for the variance parameter in the normal mixture panel regression model is linearly independent from its second-order derivative for the mean parameter, unlike the cross-sectional normal mixture. However, similar to the cross-sectional normal mixture, the likelihood ratio test statistic for the panel normal mixture remains unbounded. To manage this unboundedness, we use a penalized maximum likelihood estimator and derive the asymptotic distribution of the penalized likelihood ratio test and expectation-maximization  test statistics using a fourth-order Taylor expansion of the log-likelihood function for reparameterized parameters. A sequential hypothesis testing approach is developed for consistently estimating the number of components.
Simulation experiments reveal that the proposed tests have good finite sample performance. We apply these tests to estimate the number of production technology types for a finite mixture Cobb--Douglas production function model. Our findings suggest heterogeneity in output elasticities for intermediate goods, indicating firm-specific variation in production functions beyond Hicks-neutral productivity terms.
  \\
\bigskip
\end{abstract}
\setcounter{page}{0}
\thispagestyle{empty}
\end{titlepage}

\section{Introduction}\label{sec:1_intro}

Finite mixture models offer a natural representation of heterogeneity across a finite number of classes. Because of their flexibility, they have been used in empirical applications in various fields since the proposal of a two-component normal mixture model by \cite{Pearson1894}. In economics, finite mixtures are frequently used to model unobserved individual-specific effects in labor economics, health economics, and industrial organization, as well as in other fields.\footnote{For example,  \cite{Heckman1984} use the finite mixture model to provide an alternative method of accounting for unobserved heterogeneity in the analysis of the single-spell duration times of unemployed workers.
\cite{Keane1997} and \cite{Cameron1998} analyze a dynamic model of schooling and occupational choices with unobserved heterogeneous human capital.
Likewise, finite mixture models have been applied in health economics. \cite{Deb1997} develop a finite mixture negative binomial count model that accounts for the unobserved dispersion of medical care utilization by the elderly.
 \cite{Kamakura1989} and \cite{Andrews2003} model consumer segmentation in marketing in industrial organizations.} The theoretical properties of finite mixture models and examples of their applications have been discussed by numerous authors, such as \cite{Titterington1985}, \cite{Lindsay1995},  and \cite{McLachlan2000}.

The number of components is a crucial parameter in finite mixture models. In economic applications, the number of components often represents the number of unobservable types or abilities of individuals. Choosing an arbitrary number of parameters may lead to overestimation or underestimation of the level of heterogeneity. Using too few components may result in biased estimation, while using too many components can be computationally costly and the model becomes singular because of identification problems. Thus, developing a statistical procedure to determine the number of components is essential.

 Testing for the number of components in normal mixture regression models has long been an unsolved problem. The regularity conditions of the likelihood ratio test (LRT) for standard asymptotic analysis fail in finite mixture models because of issues such as non-identifiable parameters, the singularity of the Fisher information matrix, and the true parameter being on the boundary of the parameter space. Numerous papers have been written on the subject of the LRT for the number of components \citep[see, e.g.,][]{ghoshsen85book, chernofflander95jspi, lemdanipons97spl, chenchen01cjstat, chenchen03sinica, cck04jrssb, garel01jspi, garel05jspi, Chen14joe}, and the asymptotic distribution of the LRT statistic for general finite mixture models has been derived as a function of the Gaussian process \citep{dacunha99as, liushao03as, zhuzhang04jrssb, azais09esaim}. However, the key assumptions in these works are violated in cross-sectional normal regression models because normal mixtures possess additional undesirable mathematical properties: (i) the Fisher information for testing is not finite, (ii) the log-likelihood function is unbounded, and (iii) the second derivative of the density function for the mean parameter is linearly dependent on its first derivative for the variance parameter. \cite{Kasahara2015a} analyze the asymptotic distribution of the LRT statistics of a cross-sectional univariate finite mixture normal regression model, and \cite{Kasahara2019} develop a multivariate extension. \cite{Amengual2022} develop a score-type test for a cross-sectional normal mixture model.

This paper develops a likelihood ratio-based test for determining the number of components in finite mixture normal regression models with panel data, where outcome variables are conditionally independent across periods given the latent type within each unit. To the best of our knowledge, it is not known in the literature whether the aforementioned problems (i)--(iii) of the cross-sectional normal mixture still arise in the panel normal mixture. Furthermore, no likelihood-based test has yet been developed for testing the null hypothesis of an $M_0$-component model against an alternative $(M_0+1)$-component model for $M_0\geq 1$ in the panel normal regression mixture models with conditional independent errors.\footnote{\cite{Kasahara2014} develop a procedure to estimate a lower bound on the number of components consistently in finite mixture models in which each component distribution has independent marginals, which includes panel normal regression mixture models with conditionally independent errors as a special case.}

We show that problems related to (i) and (ii) arise, but the higher-order degeneracy of problem (iii) disappears in panel normal mixture models with conditional independence. Following \cite{chenli09as} and \cite{Kasahara2015a}, we consider a penalized likelihood ratio test (PLRT) and an expectation-maximization (EM) test to deal with the unboundedness and analyze the asymptotic distribution of the PLRT using reparameterization orthogonal to the direction in which the Fisher information matrix is singular. The likelihood ratio of an $(M_0+1)$-component model against the $M_0$-component model is approximated with local quadratic-form expansion with squares and cross-products of the reparameterized parameters. We demonstrate that the asymptotic null distributions of the penalized likelihood ratio test statistic (PLRTS) and the EM test statistic are characterized by the maximum of $M_0$ random variables, which we can easily simulate. Building on the PLRT and EM tests, we propose a sequential hypothesis testing approach for consistently estimating the number of components.
In simulations, our proposed PLRT and EM tests demonstrate favorable finite sample properties. Moreover, a sequential hypothesis testing approach accurately selects the correct number of components with high frequency, surpassing selection procedures based on the Akaike information criterion (AIC) and the Bayesian information criterion (BIC).

This paper makes several contributions to the literature. First, it analyzes the likelihood ratio-based test for the number of components in panel normal regression mixture models with conditional independence. \cite{Kasahara2015a} and \cite{Kasahara2019} analyze likelihood ratio-based tests for the number of components in cross-sectional univariate normal mixture regression models and multivariate normal mixture models, respectively. We demonstrate that the asymptotic distribution of the PLRT and EM tests for panel normal regression mixture models with conditionally independent errors differs from that of the univariate/multivariate normal mixture models in the aforementioned two papers because  higher-order dependency does not occur when the repeated measurement of the outcome variables is available in the panel data. Furthermore, we develop  a sequential hypothesis testing approach for consistently estimating the number of components.

Second, while it is well known that the log-likelihood function of normal mixture models is unbounded \citep{Hartigan1985}, it is unknown whether the related unboundedness problem arises in panel data. We show that the likelihood ratio test statistic is unbounded in panel normal mixture models with conditionally independent errors when the time dimension of panel data is finite. This unboundedness causes over-rejection of the LRT. We introduce a penalty function to prevent the likelihood ratio test statistics from being unbounded, using computational experiments to determine the data-driven penalty function. We also develop an R package {NormalRegPanelMixture} \citep{Hao2017} that contains the EM test module and asymptotic distribution simulation module.

Third, we conduct an empirical analysis of the number of production technology types using panel data from Japanese and Chilean manufacturing firms and provide strong evidence of substantial heterogeneity in production function coefficients across firms within narrowly defined industries. This is an important contribution to the literature on production function estimation, as most empirical applications assume the homogeneity of production function coefficients across firms using the standard production function estimation methods developed by \cite{olley1996dynamics}, \cite{levinsohn2003estimating}, and \cite{Ackerberg2015}. Our empirical finding suggests that it is essential to incorporate unobserved heterogeneity into the production function coefficients across firms in applications \citep{LiSasaki17arxiv, doraszelski2018measuring, Balat19mimeo, Kasahara2022esri}.

The EM test approach was introduced by \cite{lcm09bm} and \cite{chenli09as} to test homogeneity in finite mixture models. \cite{Li2010} develop an EM test for the null hypothesis of $M_0$ components applicable to general $M_0\geq 2$, and \cite{Kasahara2015a} propose an EM test for normal regression mixture models to test the null of $M_0\geq 2$. The EM approach is also applied to test homogeneity in multivariate mixtures \citep{Niu2011} and subgroup analyses \citep{Shen2015}. More recently, \cite{Liu2018} extend the EM test to mixtures of the general location-scale family distribution, and \cite{Kasahara2019} develop an EM test for multivariate normal mixture models. Building upon the literature, this paper develops an EM test for panel normal regression mixture models with conditionally independent errors. 

The identification and estimation of latent group structures in panel data has received attention in recent studies \citep{Kasahara2009, LinNg12jem, Bonhomme15ecma, AndoBai16jae, SuShiPhillips16ecma, LuSu17qe}. Finite mixture modeling provides a practical, model-based approach to determining unobserved group structures. Choosing the number of groups is often a prerequisite for classifying each individual's group membership. We can estimate the number of groups in panel data regression models by applying our proposed sequential hypothesis testing approach.

The rest of this paper is organized as follows.
In Section \ref{sec:2_model}, we define the finite normal mixture panel regression model.
In Section \ref{sec:3_test_1}, we demonstrate the PLRT for testing the homogeneity of a  normal mixture panel regression against a two-component model as a precursor to obtaining the general test of $M_0$ components.
Section \ref{sec:4_test} generalizes the result to testing $M_0$ components against $M_0 + 1$ components.
Section \ref{sec:5_EM} introduces the EM test for testing $M_0$ components against $M_0 + 1$ components. Section \ref{sec:local} derives the asymptotic distribution of the PLRT and EM tests under local alternatives.
Section \ref{sec:sht} develops a consistent estimator for the number of components based on sequential hypothesis testing.
Section \ref{sec:6_sim} presents the simulated results of the tests.
Section \ref{sec:7_app} provides an empirical application. In the following, $:=$ denotes ``equals by definition,'' and boldface letters denote vectors or matrices.  


\section{Heteroskedastic finite mixture panel normal regression model}\label{sec:2_model}
We consider finite mixture normal regression models with panel data,  where the panel length $T$ is fixed and the number of cross-sectional observations $n$ goes to infinity. Define $\bs{w} := \{y_{t},\bs{x}_{t},\bs{z}_{t}\}_{t=1}^T$ with $y_t \in \R, \bs{x}_t \in \R^q, \bs{z}_t \in \R^p$.
Given $M \ge 2$, denote the density of an $M$-component model that represents the conditional density function of $\{y_t\}_{t=1}^T$ given $\{\bs{x}_{t},\bs{z}_{t}\}_{t=1}^T$
as
\begin{equation}\label{eq:fm}
	\begin{split}
	f_M(\bs{w}; \bs{\vartheta}_M) & =   \sum_{j=1}^M \alpha_j  f(\bs{w};\bs{\gamma},\bs{\theta}_j),
	\end{split}
\end{equation}
where $\bs{\vartheta}_M=(\bs{\alpha}^\top,\bs\theta_1^\top,...,\bs\theta_M^\top,\gamma^\top)^\top\in\Theta_{\bs{\vartheta}_M}$, $\bs\alpha^\top :=(\alpha_1,...,\alpha_{M-1})$, $\alpha_M=1-\sum_{j=1}^{M-1}\alpha_j$ 
and
\begin{equation}\label{eq:f1}
	f(\bs{w};\bs{\gamma},\bs{\theta}_j) = \prod_{t=1}^T \frac{1}{\sigma_j} \phi\left(\frac{ y_{t} -\mu_j - \bs{x}^\top_{t}\bs{\beta}_j - \bs{z}^\top_{t} \bs{\gamma} }{\sigma_j } \right)
\end{equation}
is the $j$-th component density function with $\mu_j \in \Theta_{\mu} \subset \R$ , $\sigma_j^2 \in \Theta_{\sigma} \subset \R_{++}$, $\bs\beta_j \in \Theta_{\bs\beta} \subset \R^{q}$,   $\bs\gamma \in \Theta_{\bs\gamma} \subset \R^p$, and  $\phi(t) = (2\pi)^{-1/2} \exp(-\frac{t^2}{2})$ is the standard normal probability density function. We collect the component-specific  parameters into 
 $\bs{\theta}_j := (\mu_j, \sigma_j^2, \bs{\beta}_j^\top)^\top  \in \Theta_{\bs\theta}$, and the regression coefficient $\bs\gamma$ for a vector $\bs z$ is assumed to be common across components. 

The number of components, denoted by $M_0$, is defined as the smallest integer $M$ such that the data density of $\bs{w}$ admits the representation (\ref{eq:fm}). Consider a random sample of $n$ with a panel length of $T$ independent observations $\{\bs{W}_{i}\}_{i=1}^n$, where  $\bs{W}_i = \{ (Y_{it},\bs{X}\t_{it},\bs{Z}\t_{it} )\t \}_{t=1}^T$ from a true $M_0$-component density $f_M(\bs{w};\bs\vartheta_{M_0}^*)$ defined in equation (\ref{eq:fm}) with $\bs\vartheta_{M_0}^*=((\bs{\alpha}^*)^\top,(\bs\theta_1^*)^\top,...,(\bs\theta_{M_0}^*)^\top,(\gamma^*)^\top)^\top$. The superscript $*$ signifies the true parameter value. Because component distributions can be identified only up to permutation, we assume that $\mu_1^*<\mu_2^*<\cdots <\mu_{M_0}^*$ for identification.\footnote{More generally, we may consider a lexicographical order: $\bs\theta_1^*<\bs\theta_2^*<\cdots<\bs\theta_{M_0}^*$.}

Our goal is to test
$$
H_0: M = M_0\	\text{ against }\ H_A: M = M_0 +1.
$$

\section{Likelihood ratio test for   $H_0:M = 1$ against $H_A:M=2$}\label{sec:3_test_1}


We begin by developing the PLRT to test the null hypothesis $H_0: M=1$ against the alternative hypothesis $H_1: M=2$. Consider a random sample of $n$ with a panel length of $T$ independent observations $\{ \boldsymbol{W}_{i} \}_{i=1}^n$, where $\boldsymbol{W}_i = \{ (Y_{it}, \boldsymbol{X}_{it}^{\top}, \boldsymbol{Z}_{it}^{\top})^{\top} \}_{t=1}^T$, drawn from a true one-component density $f(\boldsymbol{w}; \boldsymbol{\gamma}, \boldsymbol{\theta})$ defined in equation (\ref{eq:f1}). Now consider a two-component mixture density function
\begin{equation*}
f_2(\boldsymbol{w}; \boldsymbol{\vartheta}_2) = \alpha f(\boldsymbol{w}; \boldsymbol{\gamma}, \boldsymbol{\theta}_1 ) + (1 - \alpha) f(\boldsymbol{w}; \boldsymbol{\gamma}, \boldsymbol{\theta}_2),
\end{equation*}
where $\boldsymbol{\vartheta}_2 = (\alpha, \boldsymbol{\theta}_1^{\top}, \boldsymbol{\theta}_2^{\top}, \boldsymbol{\gamma}^{\top})^{\top} \in \Theta_{\boldsymbol{\vartheta}_2}$ and $\alpha$ is the mixing probability of the first component. The two-component model can generate the true one-component density in two cases: (1) $\boldsymbol{\theta}_1 = \boldsymbol{\theta}_2 = \boldsymbol{\theta}^*$ and (2) $\alpha = 0$ or $1$. Consequently, the null hypothesis $H_0: M=1$ can be partitioned into two sub-hypotheses: $H_{01}: \boldsymbol{\theta}_1 = \boldsymbol{\theta}_2$ and $H_{02}: \alpha (1-\alpha) = 0$. The regularity conditions of the LRTS for a standard asymptotic analysis fail in finite mixture models: under $H_{01}$, $\alpha$ is not identified, and the Fisher information matrix for the other parameters becomes singular; under $H_{02}$, $\alpha$ is on the boundary of the parameter space, and either $\boldsymbol{\theta}_1$ or $\boldsymbol{\theta}_2$ is not identified.

As discussed in the introduction, analyzing the asymptotic distribution of the LRTS for the cross-sectional normal mixture is challenging because of its undesirable mathematical properties \citep[cf.][]{chenli09as}: (i) the Fisher information for testing $H_{02}$ is not finite, (ii) the log-likelihood function is unbounded \citep{Hartigan1985}, and (iii) the first-order derivative of $f_2(\boldsymbol{w}; \boldsymbol{\vartheta}_2)$ with respect to $\sigma_j^2$ is linearly dependent on its second-order derivative with respect to $\mu_j$. The presence of problems (i)--(iii) in panel normal mixture models with $T\geq 2$ is not well understood in the literature because, to the best of our knowledge, no studies have examined them so far.

Regarding problem (i), we note that the issue of the infinite Fisher information for testing $H_{02}$ also arises in the panel normal mixture model. For brevity, let us consider the case without $(\boldsymbol{X}, \boldsymbol{Z})$. The score for testing $H_{02}: \alpha=0$ takes the form  
\[
\left.\frac{\partial  f_2(\boldsymbol{W} ; \mu_1,\sigma_1^2,\mu_2,\sigma_2^2)}{\partial \alpha} \right|_{\alpha=0,\mu_2=\mu^*,\sigma_2^2=\sigma^{*2}}= 
 \frac{f(\boldsymbol{W};\mu_1,\sigma_1^2)}{f(\boldsymbol{W};\mu^*,\sigma^{*2} )}-1,
 \] 
where $f(\boldsymbol{W};\mu,\sigma^2)= \prod_{t=1}^T \phi((y_t-\mu)/\sigma)/\sigma$ and $\phi(\cdot)$ is the standard normal density function. When $\sigma_1^2> 2\sigma^{*2}$, $\mathbb{E}[\{f(\boldsymbol{W};\mu_1,\sigma_1^2)/f(\boldsymbol{W};\mu^*,\sigma^{*2} )-1\}^2]=\infty$. For more details, please refer to Proposition \ref{prop:infinite_fisher}. Because the infinite Fisher information causes difficulty in deriving the asymptotic distribution under $H_{02}$, this paper focuses on testing $H_{01}$. We define $ \Upsilon^*_1 := \{ (\alpha, \boldsymbol{\gamma}, \boldsymbol{\theta}_1, \boldsymbol{\theta}_2) \in  \Theta_{\vartheta_2} : \boldsymbol{\theta}_1 = \boldsymbol{\theta}_2 = \boldsymbol{\theta}^* \text{ and } \boldsymbol{\gamma} = \boldsymbol{\gamma}^* \}$, which is the subspace of $\Theta_{\vartheta_2}$ that corresponds to $H_{01}$. Note that because we focus on $H_{01}$, our test may not have power against the local alternatives with $\alpha_n\rightarrow 0$. We analyze the asymptotic distribution of the PLRTS under the contiguous local alternatives in Section \ref{sec:local}.

 Related to problem (ii), the LRTS in normal mixture models with panel data becomes unbounded as the sample size $n$ goes to $\infty$. Define the likelihood ratio statistic with respect to the true parameter under $H_0$ as
$$
LR_n^*(\boldsymbol{\vartheta}_{2}) := 2\left\{ \sum_{i=1}^n \log f_2(\boldsymbol{W}_i;\boldsymbol{\vartheta}_{2}) - \sum_{i=1}^n\log f(\boldsymbol{W}_i;\boldsymbol{\gamma}^*,\boldsymbol{\theta}^*)\right\},
$$
where $f_2$ is the density of the two-component finite mixture distribution in (\ref{eq:fm}) with $M=2$ and $((\boldsymbol{\gamma}^*)^{\top},(\boldsymbol{\theta}^*)^{\top})^{\top}$ is the true parameter value under $H_0$.   Let $\tilde{\boldsymbol{\vartheta}}_{2,n}$ be the maximum likelihood estimator for the two-component model, i.e., $\tilde{\boldsymbol{\vartheta}}_{2,n}=\arg\max_{\boldsymbol{\vartheta}_{2}\in \Theta_{\boldsymbol{\vartheta}_{2}}} LR_n^*(\boldsymbol{\vartheta}_{2})$. 
 \begin{proposition}\label{prop:unbounded_likelihood}
Suppose that the true model is described by the one-component model with $(\boldsymbol{\gamma},\boldsymbol{\theta})=(\boldsymbol{\gamma}^*,\boldsymbol{\theta}^*)$. Then, for any positive constant $M > 0$, $\Pr \Big( LR_n^*(\tilde{\boldsymbol{\vartheta}}_{2,n}) \le M \Big) \to 0$ as $n \to \infty$.
\end{proposition}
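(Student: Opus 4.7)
The plan is to exhibit an explicit data-dependent sequence $\bs{\vartheta}_{2,n}\in\Theta_{\bs{\vartheta}_2}$ along which $LR_n^*$ diverges; since $LR_n^*(\tilde{\bs{\vartheta}}_{2,n})\geq LR_n^*(\bs{\vartheta}_{2,n})$ the conclusion follows. This is the panel analogue of the \citet{Hartigan1985} unboundedness construction. In the cross-section one sets $\mu_1$ equal to a single observation and drives $\sigma_1^2\downarrow 0$; but for $T\geq 2$ the within-unit residual sum of squares is strictly positive for every fixed unit, so I instead hunt among the $n$ units for one whose $T$ observations happen to cluster unusually tightly.

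The key probabilistic input is a small-ball order-statistic bound. Let $U_{it} := Y_{it}-\mu^*-\bs{X}_{it}\t\bs{\beta}^*-\bs{Z}_{it}\t\bs{\gamma}^*$ and $S_i := \sum_{t=1}^T(U_{it}-\bar U_i)^2$; under $H_0$ the ratios $S_i/\sigma^{*2}$ are iid $\chi^2_{T-1}$. The near-origin behaviour $\Pr(\chi^2_{T-1}\leq s)\asymp s^{(T-1)/2}$ as $s\downarrow 0$, combined with a standard extreme-value argument, yields $n\,S_{(1)}^{(T-1)/2}=O_P(1)$ for $S_{(1)} := \min_i S_i$, equivalently $-\log S_{(1)} = \tfrac{2}{T-1}\log n + O_P(1)$.

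Let $i_n := \arg\min_i S_i$ and evaluate $LR_n^*$ at
\[
\alpha_n = 1/n,\ \ \sigma_{1,n}^2 = S_{(1)}/T,\ \ \mu_{1,n} = \bar Y_{i_n}-\bar{\bs{X}}_{i_n}\t\bs{\beta}^*-\bar{\bs{Z}}_{i_n}\t\bs{\gamma}^*,\ \ \bs{\beta}_{1,n}=\bs{\beta}^*,
\]
with $(\mu_{2,n},\sigma_{2,n}^2,\bs{\beta}_{2,n},\bs{\gamma}_n) = (\mu^*,\sigma^{*2},\bs{\beta}^*,\bs{\gamma}^*)$. For $i\neq i_n$, retain only the second mixture component to get $\log f_2(\bs{W}_i;\bs{\vartheta}_{2,n}) \geq \log(1-\alpha_n) + \log f(\bs{W}_i;\bs{\gamma}^*,\bs{\theta}^*)$; for $i=i_n$, retain only the first component: the residuals collapse to $U_{i_n,t}-\bar U_{i_n}$, the exponent evaluates explicitly with $\sigma_{1,n}^2=S_{(1)}/T$, and $\log f_2(\bs{W}_{i_n};\bs{\vartheta}_{2,n}) \geq \log\alpha_n - \tfrac{T}{2}\log S_{(1)} - C$ for an absolute constant $C$. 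Subtracting the true one-component log-likelihood and using that $\bar U_{i_n}=O_P(1)$ by the symmetry of the selection rule, I obtain
\[
LR_n^*(\bs{\vartheta}_{2,n}) \geq 2\bigl[(n-1)\log(1-\tfrac{1}{n})-\log n -\tfrac{T}{2}\log S_{(1)}\bigr] + O_P(1) = \tfrac{2}{T-1}\log n + O_P(1),
\]
which tends to $+\infty$ in probability, proving the claim.

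The main obstacle is the delicate balance between two competing $\log n$ orders: the penalty $-\log\alpha_n = \log n$ paid to keep the misfit to the other $n-1$ units bounded, versus the bonus $-\tfrac{T}{2}\log S_{(1)} = \tfrac{T}{T-1}\log n + O_P(1)$ earned by fitting the first component tightly to the extremal unit. Their net $\tfrac{1}{T-1}\log n$ is strictly positive exactly when $T\geq 2$; at $T=1$ the within-unit residual is identically zero and the Hartigan single-observation construction applies directly, so the $\chi^2_{T-1}$ extreme-value step is unnecessary. The only nontrivial probabilistic ingredient is therefore the small-ball rate for the $\chi^2_{T-1}$ order statistic; the covariate directions cause no extra difficulty because fixing $(\bs{\beta}_1,\bs{\gamma})=(\bs{\beta}^*,\bs{\gamma}^*)$ reduces the within-unit residuals to the intercept-only Gaussian case conditional on $(\bs{X},\bs{Z})$.
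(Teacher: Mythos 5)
Your proposal is correct and follows essentially the same route as the paper's proof: identify the unit $i^*$ minimizing the within-unit residual sum of squares, center the first component at that unit's sample mean with variance of order $S_{(1)}$, take $\alpha_n=1/n$, keep the second component at the truth, and exploit the $\chi^2_{T-1}$ small-ball rate $\Pr(\chi^2_{T-1}\le s)\asymp s^{(T-1)/2}$ to show the gain $-\tfrac{T}{2}\log S_{(1)}=\tfrac{T}{T-1}\log n+O_P(1)$ beats the penalty $\log n$. Your version is slightly more quantitative (it pins down the $\tfrac{2}{T-1}\log n$ divergence rate) and handles the covariates marginally more cleanly by setting $\bs\beta_{1,n}=\bs\beta^*$, but these are cosmetic differences from the paper's argument.
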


To deal with unboundedness, we consider a penalized maximum likelihood estimator (PMLE) as in \cite{Chen2009a} using the following penalty function: 
\begin{equation}\label{eq:penalty}
\tilde p_n({\boldsymbol{\vartheta}}_M):= \sum_{j=1}^M p_n(\sigma_j^2)\quad\text{with}\quad p_n(\sigma_j^2) := - a_n  \{ {{\sigma}_{0}^2}/{\sigma_j^2} + \log({\sigma_j^2}/{{\sigma}_{0}^2}) - 1\}.  
\end{equation}
This penalty function circumvents the problem of unbounded log likelihood by preventing a variance parameter estimate from nearing zero. The parameter $a_n$ is selected such that the penalty's impact becomes asymptotically negligible for the distribution of the PMLE. Refer to conditions C1--C3 in the proof of Proposition \ref{prop:vartheta_convergence_M}. 

Let 
\[
\hat{\boldsymbol{\vartheta}}_2 = \arg \max_{{\boldsymbol{\vartheta}}_2 \in \Theta_{\boldsymbol{\vartheta}_2} } \sum_{i=1}^n \log f_2(\boldsymbol{W}_i ; \boldsymbol{\vartheta}_2 )+\tilde p_n({\boldsymbol{\vartheta}}_2)
\]
denote the PMLE under the two-component model. Define a set of parameter values for the two-component density that generates the true one-component density by $\Theta^*_2:= \{ (\alpha,\boldsymbol{\gamma},\boldsymbol{\theta}_1,\boldsymbol{\theta}_2) \in  \Theta_{\vartheta_2}: \boldsymbol{\theta}_1 = \boldsymbol{\theta}_2 = \boldsymbol{\theta}^* \text{ and } \boldsymbol{\gamma} = \boldsymbol{\gamma}^*; \alpha=1  \text{ and } \theta_1=\theta^*; \alpha=0 \text{ and } \theta_2=\theta^* \}$.
$\boldsymbol{\theta}_1$ and $\boldsymbol{\theta}_2$ are component-specific parameters and $\boldsymbol{\gamma}$ is a parameter vector common across components. 
The following proposition establishes the consistency of the PMLE.

\begin{assumption}\label{assumption:1}
(a) $\boldsymbol{X}$ and $\boldsymbol{Z}$ have finite second moments, and $\text{Pr}(\boldsymbol{X}^\top \boldsymbol{\beta}  + \boldsymbol{Z}^\top \boldsymbol{\gamma}\neq \boldsymbol{X}^\top \boldsymbol{\beta}^* + \boldsymbol{Z}^\top \boldsymbol{\gamma}^*) > 0 $ for $ (\boldsymbol{\beta}^\top, \boldsymbol{\gamma}^\top)^\top \neq ( (\boldsymbol{\beta}^*)^\top, (\boldsymbol{\gamma}^* )^\top )^\top$. (b) $a_n> 0$ and $a_n = o(n^{1/4})$ in the penalty function (\ref{eq:penalty}).
\end{assumption}

\begin{proposition}\label{prop:vartheta_convergence}
Suppose that Assumption \ref{assumption:1} holds. Then, under the null hypothesis $H_0: M_0 = 1$, $\inf_{\boldsymbol{\vartheta}_2 \in \Theta_{2}^*} |\hat{\boldsymbol{\vartheta}}_2 - \boldsymbol{\vartheta}_2| \to_p 0$.
\end{proposition}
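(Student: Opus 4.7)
The plan is to use a Wald-style consistency argument applied to the penalized criterion $Q_n(\boldsymbol{\vartheta}_2) := \sum_{i=1}^n \log f_2(\boldsymbol{W}_i;\boldsymbol{\vartheta}_2) + \tilde p_n(\boldsymbol{\vartheta}_2)$, with the twist that the maximizer is required to converge only to the \emph{set} $\Theta_2^*$ rather than to a single point, since the two-component parameterization is not identified at the truth. The three ingredients are (a) a compactification of the effective parameter space produced by the penalty, (b) a uniform law of large numbers on that compact set, and (c) an identification step that pins down the population-level maximizers as $\Theta_2^*$.

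For (a), I would show that the PMLE lies, with probability tending to one, in a compact subset $K\subset\Theta_{\boldsymbol{\vartheta}_2}$ on which each $\sigma_j^2$ is bounded away from zero and from infinity. The penalty $p_n(\sigma_j^2) = -a_n\{\sigma_0^2/\sigma_j^2 + \log(\sigma_j^2/\sigma_0^2) - 1\}$ diverges to $-\infty$ at both $0$ and $\infty$ while peaking at $\sigma_0^2$, which delivers exactly this compactification: outside $K$ the penalty cost exceeds the possible gain in $\sum_i\log f_2$, using the sharp upper bound on the supremum of the log-likelihood along the degenerate boundary developed for the cross-sectional normal mixture in \cite{Chen2009a}. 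For (b), within $K$ the penalty is $o(n)$ uniformly because $a_n=o(n^{1/4})$, so $n^{-1}Q_n(\boldsymbol{\vartheta}_2) = n^{-1}\sum_{i=1}^n\log f_2(\boldsymbol{W}_i;\boldsymbol{\vartheta}_2) + o(1)$ uniformly on $K$, and a standard uniform strong law yields $n^{-1}\sum_i \log f_2(\boldsymbol{W}_i;\boldsymbol{\vartheta}_2) \to \mathbb{E}^*[\log f_2(\boldsymbol{W};\boldsymbol{\vartheta}_2)]$ uniformly over $K$, where $\mathbb{E}^*$ is under the true one-component law. For (c), Jensen's inequality gives $\mathbb{E}^*[\log f_2(\boldsymbol{W};\boldsymbol{\vartheta}_2) - \log f(\boldsymbol{W};\boldsymbol{\gamma}^*,\boldsymbol{\theta}^*)]\le 0$ with equality iff $f_2(\boldsymbol{w};\boldsymbol{\vartheta}_2)=f(\boldsymbol{w};\boldsymbol{\gamma}^*,\boldsymbol{\theta}^*)$ for almost every $\boldsymbol{w}$; identifiability of two-component mixtures of normal product densities, together with Assumption \ref{assumption:1}(a) (which forces $\boldsymbol{\gamma}=\boldsymbol{\gamma}^*$ by ruling out the coincidence of two distinct linear predictors in $\boldsymbol{X},\boldsymbol{Z}$), then implies $\boldsymbol{\vartheta}_2\in\Theta_2^*$. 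The stated set-distance consistency follows from the usual argmax-continuity consequence of uniform convergence, applied to the function $\boldsymbol{\vartheta}_2\mapsto \inf_{\boldsymbol{\vartheta}\in\Theta_2^*}|\boldsymbol{\vartheta}_2-\boldsymbol{\vartheta}|$.

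The main obstacle is step (a), because the unpenalized log-likelihood can diverge along sequences where one component's variance shrinks to zero while its mean tracks an observation. To handle this I would split $\Theta_{\boldsymbol{\vartheta}_2}\setminus K$ into a degenerate region (some $\sigma_j^2$ near $0$) and an explosive region (some $\sigma_j^2$ very large), bound the supremum of the log-likelihood on each, and calibrate $a_n$ so that the penalty term $-a_n\sigma_0^2/\sigma_j^2$ dominates the degenerate peak with probability approaching one while remaining $o(n)$ on $K$. The panel structure actually mitigates the problem: for $T\ge 2$, a degenerate component centered on a single $\boldsymbol{Y}_i$ must simultaneously match all $T$ coordinates of that observation, an event that is suppressed by the conditional independence across $t$. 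This makes the divergent peaks strictly less severe than in the cross-sectional case, so the quantitative argument from \cite{Chen2009a} adapts directly once the panel product form of $f$ is substituted in.
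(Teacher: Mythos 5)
Your proposal is correct and follows essentially the same route as the paper: the paper's proof of this proposition simply repeats the proof of Proposition \ref{prop:vartheta_convergence_M}, which treats the panel model as a $T$-variate normal mixture with diagonal covariance $\mathrm{diag}(\sigma_j^2,\ldots,\sigma_j^2)$ and invokes the penalized-MLE consistency theorems of \cite{Chen2009a} (as corrected by Alexandrovich, 2014) after verifying their conditions C1--C3 for the penalty (\ref{eq:penalty}) under $a_n=o(n^{1/4})$. Your step (a) --- the bound on the log-likelihood along the degenerate boundary --- is exactly the content of those cited theorems, so deferring to \cite{Chen2009a} at that point, as you do, is precisely what the paper does as well.
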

It should be noted that $f_2(\boldsymbol{w}; \boldsymbol{\vartheta}_2^*) = f(\boldsymbol{w};  \bs\gamma^*,\boldsymbol\theta^*)$ for any $\boldsymbol{\vartheta}_2^* \in \Theta_{2}^*$. Consequently, Proposition \ref{prop:vartheta_convergence} suggests that the PMLE $\hat{\boldsymbol{\vartheta}}_2$ converges in probability to a set of parameters for which the true density function $f(\boldsymbol{w}; \bs\gamma^*,\boldsymbol\theta^*)$ emerges within the space of two-component density functions.

For problem (iii), we show that in normal mixture models with panel data, the first-order derivative of $f_2(\bs{w}; \bs{\vartheta}_2)$ with respect to $\sigma_j^2$ is \textit{not} linearly dependent with its second-order derivative with respect to $\mu_j$ (See Proposition \ref{prop:expansion}(c)). 
Consequently, the panel mixture model (\ref{eq:fm}) with the component density function (\ref{eq:f1}) is strongly identifiable, and  the best rate of convergence for estimating the mixing distribution is $n^{-1/4}$ when the number of components is unknown \citep[cf.][]{chen95as}.  See Proposition \ref{prop:t2_distribution}(a).
 In contrast, the strong identifiability does not hold for the cross-sectional normal mixture, and its convergence rate becomes as slow as $n^{-1/8}$ when the number of components is over-specified  \citep[cf.][]{Kasahara2015a}.
 
As in any finite mixture models, however, the standard asymptotic analysis breaks down in testing $H_{01}: \bs{\theta}_1 = \bs{\theta}_2 = \bs{\theta}^*$  because $\alpha$ is not identified under $H_{01}$; in addition, the first-order derivative at the true value $\bs\vartheta^*_2 = (\alpha, (\bs\theta^*)\t, (\bs\theta^*)\t, (\bs\gamma^*)\t)\t$ is linear dependent as
\begin{align}\label{eq:dependence}
 \nabla_{\bs{\theta}_1} \log f_2(\bs{w};\bs\vartheta^*_2)
= \frac{\alpha}{1 - \alpha}   \nabla_{\bs{\theta}_2} \log f_2(\bs{w};\bs\vartheta^*_2).
\end{align}
To deal with this linear dependency, we analyze the asymptotic distribution of the LRTS by developing a higher-order approximation for the log-likelihood function.

To extract the direction of the Fisher information matrix singularity, we adapt the reparameterization approach by \cite{Kasahara2012} and consider the following one-to-one reparameterization of $\bs{\theta}_1$ and $\bs{\theta}_2$ given $\alpha$:
\begin{equation}\label{eq:m0_repar2}
\begin{pmatrix}
\bs{\lambda} \\
\bs{\nu}
\end{pmatrix} := \begin{pmatrix}
\bs{\theta}_1 - \bs{\theta}_2 \\
\alpha \bs{\theta}_1 + (1 - \alpha) \bs{\theta}_2
\end{pmatrix} \text{ so that }
\begin{pmatrix}
\bs{\theta}_1 \\
\bs{\theta}_2
\end{pmatrix} = \begin{pmatrix}
\bs{\nu} + ( 1- \alpha) \bs{\lambda} \\
\bs{\nu} - \alpha \bs{\lambda}
\end{pmatrix},
\end{equation}
where $\bs{\nu}$ and $\bs{\lambda}$ are both $(q+2) \times 1$ reparameterized parameter vectors with $\bs{\nu}=(\nu_\mu,\nu_\sigma,\bs{\nu}_{\bs\beta}\t)\t$ and  $\bs\lambda = (\lambda_{\mu},\lambda_{\sigma}, (\bs\lambda_{\beta})\t)\t= ( \mu_1 - \mu_2, \sigma_1^2 - \sigma_2^2, (\bs \beta_1 - \bs\beta_2)\t)\t$. We also write $\bs{\theta}$ and $\bs{\lambda}$ as $\bs{\theta} =(\theta_1,\theta_2,\theta_3,...,\theta_{q+2})\t:= (\mu,\sigma^2,\beta_1,...\beta_q)\t$ and $\bs{\lambda}=(\lambda_1,\lambda_2,\lambda_3,...,\lambda_{q+2})\t := (\lambda_\mu,\lambda_\sigma,\lambda_{\beta_1},...,\lambda_{\beta_q})\t$. 

This reparameterization is essential for analyzing the asymptotic distribution of the PLRTS in light of the linear dependency in  (\ref{eq:dependence}). The reparameterized parameter $\bs{\lambda}$ captures a deviation from the one-component model, where its first-order derivatives of the log density are identically equal to zero under $H_0: M=1$. Consequently, this reparameterization facilitates the derivation of an approximate quadratic-form criterion function, which is based on the fourth-order Taylor series approximation of the log-likelihood function, to characterize the asymptotic distribution of the LRTS.

  
Define the space for reparameterized parameters as
$$
\bs{\psi} := (\bs{\gamma}\t,\bs{\nu}\t,\bs{\lambda}\t)\t \in \Theta_{\bs\psi},
$$
where $\Theta_{\bs{\psi}} = \{ \bs{\psi}: \bs{\gamma} \in \Theta_{\bs\gamma}, \bs{\nu} + ( 1 - \alpha) \bs{\lambda} \in \Theta_{\bs\theta}, \bs{\nu} - \alpha \bs{\lambda} \in \Theta_{\bs\theta}\}.$
Under the null hypothesis $H_{01}: \bs{\theta}_1  = \bs{\theta}_2 = \bs{\theta}^*$, we have $\bs{\lambda} = (0,\ldots, 0)\t$ and $\bs{\nu} = \bs{\theta}^*$. We rewrite the reparameterized parameters under the null hypothesis as $(\bs{\psi}^{*})\t = ((\bs{\gamma}^*)\t, (\bs{\theta}^*)\t,0,\ldots,0)\t$.
Under the reparameterized parameter space, the density function and its logarithm are expressed as
\begin{align}\label{eq:repar}
	g(\bs{w};\bs{\psi},\alpha) & = \alpha f(\bs{w};\bs{\gamma},\bs{\nu}  + (1 - \alpha) \bs{\lambda}) + (1 - \alpha)  f(\bs{w};\bs{\gamma},\bs{\nu} - \alpha \bs{\lambda})\quad\text{and}\\
 l(\bs{w};\bs{\psi},\alpha)  & = \log g(\bs{w};\bs{\psi},\alpha).\nonumber
\end{align}
Write $\bs{\psi}$ as $\bs{\psi} = (\bs{\eta}\t,\bs{\lambda}\t)\t$ with $\bs{\eta} = (\bs{\gamma}\t,\bs{\nu}\t)\t$, where $\bs{\eta}^* = ((\bs{\gamma}^*)\t,(\bs{\nu}^* )\t)\t$ and $\bs{\lambda}^* = \bs{0}$. Denote the parameter spaces of $\bs\eta$ and $\bs\lambda$ by $\Theta_{\bs\eta}\subset \mathbb{R}^{p+q+2}$ and $\Theta_{\bs\lambda}\subset \mathbb{R}^{q+2}$, respectively.

Under this reparameterization, the first-order derivatives of the reparameterized log density with respect to the reparameterized parameters $\bs{\eta}$ are identical to those under the one-component model, and the first-order derivative with respect to $\bs{\lambda}$ is a zero vector:
\begin{equation}
 \label{eq:g_nabla}
\begin{split}
\nabla_{\bs{\eta}\t} l(\bs{w};\bs{\psi}^*,\alpha)   = \frac{\nabla_{(\bs{\gamma}\t,\bs{\theta}\t)\t} 	f(\bs{w};\bs\gamma^*,\bs\theta^*)}{f(\bs{w};\bs\gamma^*,\bs\theta^*)} \quad\text{and}\quad
 \nabla_{\bs{\lambda}} l(\bs{w};\bs{\psi}^*,\alpha)   = \bs{0} .
 \end{split}
\end{equation}
With $\nabla_{\bs\lambda} l(\bs{w};\bs{\psi}^*,\alpha) = 0$, the Fisher information matrix is singular, and the standard quadratic approximation fails.
Consequently, the information on $\bs{\lambda}$ is provided by the second-order derivative of $l(\bs{w};\bs{\psi},\alpha)$ with respect to $\bs{\lambda}$.
We use the second-order derivative with respect to $\bs{\lambda}$ to identify $\bs{\lambda}$:
\begin{equation}
 \label{eq:s2}
\nabla_{\bs{\lambda} \bs{\lambda}\t} l(\bs{w};\bs{\psi}^*,\alpha)  = \alpha (1-\alpha) \frac{ \nabla_{\bs{\theta} \bs{\theta}\t}f(\bs{w};\bs{\gamma}^*,\bs{\theta}^*)}{f(\bs{w};\bs{\gamma}^*,\bs{\theta}^*)}.
\end{equation}
When $\alpha$ is bounded away from $0$ and $1$,  the elements of $\nabla_{\bs{\lambda} \bs{\lambda}\t} l(\bs{W};\bs{\psi}^*,\alpha)$ are mean-zero random variables.

Note that unlike the cross-sectional models analyzed by \cite{Kasahara2015a},  there exists no collinearity between these first- and second-order derivatives for the panel models.  This distinction is indeed important, as it highlights the differences in the asymptotic distribution of the LRTS for the panel models compared with the cross-sectional models. The absence of collinearity between the first- and second-order derivatives in the panel models leads to different convergence rates and asymptotic properties.

Let $f^*$ and $\nabla f^*$ denote $f(\bs{W};\bs{\gamma}^*,\bs{\theta}^*)$ and $\nabla f(\bs{W};\bs{\gamma}^*,\bs{\theta}^*)$, respectively.
Define the vector $\bs{s}(\bs{W})$ as
\begin{align}\label{eq:s_1}
	 &\bs{s}(\bs{W}) = \begin{pmatrix}
		\bs{s}_{\bs{\eta}}(\bs W)   \\
		\bs{s}_{\bs{\lambda} \bs\lambda}(\bs W)
\end{pmatrix},
\quad \text{where}  \underset{( p + q +2 ) \times 1}{\bs{s}_{\bs{\eta} }(\bs W) }:= \frac{\nabla_{(\bs{\gamma}\t,\bs{\theta}\t)\t} f^*}{f^*} \quad \text{ and } \quad \underset{ (( q + 2 ) (q + 1) / 2 ) \times 1 }{\bs{s}_{\bs{\lambda} \bs\lambda}(\bs W) } :=\frac{\widetilde{\nabla}_{\bs\theta\bs\theta\t} f^*}{f^*}.
\end{align}
The term $\widetilde{\nabla}_{\bs\theta\bs\theta\t} f^*$ denotes the second-order derivatives of the density function $f^*$ with respect to the parameters $\bs{\theta}$. The coefficients $c_{jk}$ are used to adjust the scaling of these second-order derivatives. The function $\bs{s}(\bs{w})$ comprises the second-order derivatives of the log-likelihood function with respect to the reparameterized parameter $\bs\lambda$. This function, $\bs{s}_{\bs{\lambda} \bs\lambda}(\bs w)$, serves as a score function for identifying $\bs{\lambda}$. Consequently, $\bs{s}(\bs{w})$ is referred to as a score function.
An explicit expression for the score function $\bs{s}(\bs{w})$ can be derived using Hermite polynomials, as elaborated in Appendix \ref{sec:appendix_score_1}.  


Collect the relevant normalized reparameterized parameters and define $\bs{t}(\bs{\psi},\alpha)$ as
\begin{equation}
 \label{eq:t_1}
 \bs{t}(\bs{\psi},\alpha)  = \begin{pmatrix}
 \bs{t}_{\bs{\eta}} \\
 \bs{t}_{\bs{\lambda} }(\bs{\lambda},\alpha)
 \end{pmatrix}= \begin{pmatrix}
	 \bs{\eta} - \bs{\eta}^* \\
	  \alpha ( 1- \alpha)  \bs{v} (\bs{\lambda} )
 \end{pmatrix},
\end{equation}
where $v(\bs{\lambda})$  is a vector of unique elements of $\bs\lambda\bs\lambda\t$   given by
\begin{equation}
\label{eq:v} 
v(\bs{\lambda})   = (\lambda_1\lambda_1,...,\lambda_{q+2}\lambda_{q+2},\lambda_{1}\lambda_2,...,\lambda_{q+1}\lambda_{q+2})\t,
\end{equation}
the length of which is $q_\lambda:=(q+2)(q+3)/2$.

Let $L_n(\bs{\psi},\alpha):= \sum_{i=1}^n  l(\bs{W}_i;\bs{\psi}^*,\alpha)$ be the reparameterized log-likelihood function and define the normalized score vector
\[
\bs S_n := n^{-1/2} \sum_{i=1}^n {\bs s}(\bs W_i).
\]
Then, taking the fourth-order Taylor expansion of $L_n(\bs{\psi},\alpha)$  around $(\bs\psi^*,\alpha)$, we may write $2\{L_n(\bs{\psi},\alpha)- L_n(\bs{\psi}^*,\alpha)\}$ as a quadratic function of $\sqrt{n}\bs{t}(\bs{\psi},\alpha)$ as
\begin{align}
2\{L_n(\bs{\psi},\alpha)- &L_n(\bs{\psi}^*,\alpha)\} = 2(\sqrt{n}\bs{t}(\bs{\psi},\alpha))\t \bs S_n - (\sqrt{n}\bs{t}(\bs{\psi},\alpha))\t \bs {\mathcal{I}}_n(\sqrt{n}\bs{t}(\bs{\psi},\alpha)) + R_n(\bs{\psi},\alpha)  \label{eq:LR0} \\
 & \quad=  \bs G_n\t \bs{\mathcal{I}}_n \bs G_n -  \left[ \sqrt{n}\bs{t}(\bs{\psi},\alpha)- \bs G_n\right]\t \bs{\mathcal{I}}_n  \left[ \sqrt{n}\bs{t}(\bs{\psi},\alpha)- \bs G_n\right] + R_n(\bs{\psi},\alpha),\label{eq:LR1}
\end{align}
where  $\bs {\mathcal{I}}_n$ is the negative of the sample Hessian defined in the proof of Proposition \ref{prop:expansion} and $\bs G_n:=\bs{\mathcal{I}}_n^{-1}\bs S_n$.  Let $\bs{\mathcal{I}}=\E[\bs s(\bs W)\bs s(\bs W)\t]$. 

\begin{assumption}\label{assumption:2}
	(a) \(\bs{X}\) and \(\bs{Z}\) have finite $8$-th moments.  (b) $\E[\boldsymbol U \boldsymbol{U\t}]$ is non-singular, where $\boldsymbol U = [1, \boldsymbol{ X\t}, \boldsymbol{ Z\t}]\t$.
\end{assumption}

\begin{proposition}\label{prop:expansion} 
Suppose that Assumptions \ref{assumption:1} and \ref{assumption:2} hold. Then, under $H_0: M=1$, for $\alpha\in (0,1)$, (a) for any $\delta>0$, $\lim\sup_{n\rightarrow \infty} \Pr(\sup_{\bs\psi \in \Theta_{\bs\psi}: ||\bs\psi-\bs\psi^*||\leq \kappa} |R_n(\bs\psi,\alpha)| > \delta (1+||n\bs t(\bs\psi,\alpha)||^2)) \rightarrow 0$ as $\kappa\rightarrow 0$, (b) $ \bs S_n \overset{d}{\to} \bs S\sim N(0,\bs{\mathcal{I}})$, and (c) $\bs{\mathcal{I}}_n \overset{p}{\to} \bs{\mathcal{I}}$, where $\bs{\mathcal{I}}$ is finite and non-singular.
\end{proposition}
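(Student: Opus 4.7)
The plan is to dispatch parts (b) and (c) first with standard i.i.d.\ central limit and law-of-large-numbers arguments, and then to devote the main effort to the uniform remainder bound in (a). For (b), it suffices to verify that $\bs{s}(\bs{W})$ is mean zero with finite covariance and apply the multivariate Lindeberg--L\'evy CLT. The subvector $\bs s_{\bs\eta}(\bs W)=\nabla_{(\bs\gamma\t,\bs\theta\t)\t}f^*/f^*$ is the ordinary score under the one-component model, so $\E[\bs s_{\bs\eta}(\bs W)]=\bs 0$ by differentiation under the integral sign. For $\bs s_{\bs\lambda\bs\lambda}(\bs W)$, the identity $\int \nabla_{\bs\theta\bs\theta\t} f\, d\bs{w}=\nabla_{\bs\theta\bs\theta\t}\int f\, d\bs{w}=\bs 0$ gives mean zero. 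Because the Hermite representation outlined in Appendix \ref{sec:appendix_score_1} expresses each coordinate of $\bs s(\bs W)$ as a polynomial of degree at most two in $(\bs X,\bs Z)$ (summed over $t=1,\ldots,T$), $\|\bs s(\bs W)\|^2$ is of polynomial degree at most four, and Assumption \ref{assumption:2}(a) yields $\bs{\mathcal{I}}=\E[\bs s(\bs W)\bs s(\bs W)\t]<\infty$.

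For (c), $\bs{\mathcal{I}}_n$ is by construction the empirical analog of $\bs{\mathcal{I}}$ plus $o_p(1)$ correction terms from higher-order $\bs\lambda$-derivatives of $l$ near the one-component point; the weak law of large numbers together with the moment bound already used in (b) yields $\bs{\mathcal{I}}_n\to_p\bs{\mathcal{I}}$. Non-singularity of $\bs{\mathcal{I}}$ is the substantive step: I must rule out any nonzero $\bs c$ with $\bs c\t\bs s(\bs W)=0$ a.s. Linear independence within the first-order block $\bs s_{\bs\eta}$ follows from Assumption \ref{assumption:2}(b). Linear independence across the first- and second-order blocks, and within the second-order block, is the distinctively panel phenomenon highlighted in the discussion preceding the proposition: for $T\ge 2$, the product form $\prod_{t=1}^T\phi(\cdot)$ induces cross-period monomials in the standardized residuals that make $\partial^2 f^*/\partial\mu^2$ linearly independent of $\partial f^*/\partial\sigma^2$, in contrast with the $T=1$ case. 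I would verify this by expressing each component of $\bs s$ as a weighted sum of Hermite polynomials $H_k\big((y_t-\mu^*-\bs x_t\t\bs\beta^*-\bs z_t\t\bs\gamma^*)/\sigma^*\big)$ in $t$, and invoking the orthogonality of distinct Hermite monomials under the Gaussian measure together with Assumption \ref{assumption:2}(b) to handle the $(\bs X,\bs Z)$ direction.

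For (a), I would perform the fourth-order Taylor expansion of $l(\bs w;\bs\psi,\alpha)$ about $(\bs\eta^*,\bs\lambda^*)=(\bs\eta^*,\bs 0)$ in the reparameterized coordinates. Display (\ref{eq:g_nabla}) eliminates the first-order $\bs\lambda$ contribution, and (\ref{eq:s2}) shows that the quadratic $\bs\lambda$-term equals $\alpha(1-\alpha)\bs s_{\bs\lambda\bs\lambda}(\bs w)\t v(\bs\lambda)$. A direct computation of the third- and fourth-order $\bs\lambda$-derivatives of $g$ shows that the cubic piece carries the prefactor $\alpha(1-\alpha)(1-2\alpha)$ and combines with the quartic piece so that the surviving $\bs\lambda$-polynomial is exactly quadratic in $v(\bs\lambda)$; after summing over $i$, this produces the quadratic form in $\sqrt{n}\bs t(\bs\psi,\alpha)$ with score $\bs S_n$ and sample information $\bs{\mathcal{I}}_n$ displayed in (\ref{eq:LR0}).

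The main obstacle is the uniform bound on $R_n(\bs\psi,\alpha)$. The residual collects (i) mixed $\bs\eta$--$\bs\lambda$ monomials of total order at least three, (ii) the integral-form remainder carrying fifth-order derivatives of $\log f^*$ in $(\bs\mu,\bs\sigma^2,\bs\beta)$, and (iii) the gap between sample and population versions of the third- and fourth-order coefficient functions. Type (i) terms I would absorb into $\|\bs t(\bs\psi,\alpha)\|^2$ on the neighborhood $\|\bs\psi-\bs\psi^*\|\le\kappa$ using inequalities of the form $|\lambda_j|\,\|\bs\eta-\bs\eta^*\|\le \tfrac{1}{2}\|v(\bs\lambda)\|+\tfrac{1}{2}\|\bs\eta-\bs\eta^*\|^2$. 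For (ii)--(iii) I would invoke a uniform law of large numbers over the function class of partial derivatives up to the relevant order; since those derivatives carry powers of $\bs X\t\bs\beta+\bs Z\t\bs\gamma$, the eighth-moment strengthening in Assumption \ref{assumption:2}(a) is what dominates the envelope of this class and allows the ULLN to apply on the shrinking neighborhood $\kappa$. Combining these pieces delivers $|R_n(\bs\psi,\alpha)|\le \delta(1+\|n\bs t(\bs\psi,\alpha)\|^2)$ uniformly over $\|\bs\psi-\bs\psi^*\|\le \kappa$ for $\kappa$ small, as the proposition requires.
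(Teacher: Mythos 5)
Your overall skeleton matches the paper's proof: a fourth-order Taylor expansion in the reparameterized coordinates, a Hermite-polynomial representation of the scores to obtain moment bounds and the CLT for (b), and --- for the key non-singularity claim in (c) --- the observation that for $T\ge 2$ the cross-period products $H^{1*}_{i,t}H^{1*}_{i,s}$ appearing in $s_{\lambda_\mu\lambda_\mu}$ are linearly independent of (indeed orthogonal to) the single-period terms $\sum_t H^{2*}_{i,t}$ appearing in $s_{\sigma}$, which forces every coefficient in a putative linear dependence $\bs a\t\bs s(\bs w)=0$ to vanish once Assumption \ref{assumption:2}(b) has disposed of the covariate directions. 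That part of your plan is sound and is exactly how the paper argues.

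There are, however, two concrete gaps. First, your treatment of the cubic $\bs\lambda$-term is wrong as stated: the third derivative $\nabla_{\lambda_i\lambda_j\lambda_k}L_n(\bs\psi^*,\alpha)$, with prefactor $\alpha(1-\alpha)(1-2\alpha)$, does not ``combine with the quartic piece'' into something quadratic in $v(\bs\lambda)$ --- a cubic monomial $\lambda_i\lambda_j\lambda_k$ is simply not expressible as a quadratic form in $v(\bs\lambda)$. In the paper this term is pushed into the remainder, and the step that makes it negligible is $\E[\nabla_{\lambda_i\lambda_j\lambda_k}\log g^*]=0$ (Lemma \ref{lemma:KS2018_lemma7}(c)), whence $\nabla_{\lambda_i\lambda_j\lambda_k}L_n(\bs\psi^*,\alpha)=O_p(n^{1/2})$ and the contribution is $O_p(n^{1/2})\,|\lambda|^3\leq O_p(1)\,\|\sqrt{n}\,\bs t(\bs\psi,\alpha)\|\,\|\bs\lambda\|$. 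Your remainder inventory (mixed $\bs\eta$--$\bs\lambda$ monomials, the fifth-order integral remainder, sample-versus-population gaps) omits this pure cubic term, and without its mean-zero property you cannot absorb it into $\delta(1+\|n\bs t\|^2)$. Second, for (c) you assert that the LLN yields $\bs{\mathcal{I}}_n\to_p\bs{\mathcal{I}}$, but $\bs{\mathcal{I}}_n$ is the negative sample Hessian built from third- and fourth-order derivatives of $\log g$, so the LLN only delivers convergence to $-\E[\nabla_{\bs\eta\lambda_i\lambda_j}\log g^*]$ and $-\E[\nabla_{\lambda_i\lambda_j\lambda_k\lambda_\ell}\log g^*]$. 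Identifying these limits with $\E[\bs s_{\bs\eta}\bs s_{\bs\lambda\bs\lambda}\t]$ and $\E[\bs s_{\bs\lambda\bs\lambda}\bs s_{\bs\lambda\bs\lambda}\t]$ requires the generalized information-matrix (Bartlett) identities of Lemma \ref{lemma:KS2018_lemma7}(c)--(d), in particular $\E[\nabla_{\lambda_i\lambda_j\lambda_k\lambda_\ell}\log g^*]=-\E[\nabla_{\lambda_i\lambda_j}\log g^*\nabla_{\lambda_k\lambda_\ell}\log g^*+\nabla_{\lambda_i\lambda_k}\log g^*\nabla_{\lambda_j\lambda_\ell}\log g^*+\nabla_{\lambda_i\lambda_\ell}\log g^*\nabla_{\lambda_j\lambda_k}\log g^*]$, whose three-fold symmetrization is what cancels the $8/4!$ normalization. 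These identities follow from repeatedly differentiating $\int\nabla_{\lambda_i}\log g\cdot g\,d\bs w=0$; they are not automatic consequences of the LLN, and without them the probability limit of $\bs{\mathcal{I}}_n$ is not pinned down as $\E[\bs s(\bs W)\bs s(\bs W)\t]$.
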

The non-singularity of $\bs{\mathcal{I}}$ in Proposition \ref{prop:expansion}(c) highlights the difference between the cross-sectional normal mixture and the panel data normal mixture models. In particular, as shown in equation (\ref{eq:s-vector}) in Appendix  \ref{sec:appendix_score_1}, the first-order derivative of $f_2(\bs w;\bs{\vartheta}_2)$ with respect to $\sigma_j^2$ is linearly independent of its second-order derivative with respect to $\mu_j$ when $T\geq 2$, which ensures that the higher-order degeneracy of problem (iii) does not arise. Intuitively, the availability of repeated observations within each individual unit provides better identification, even for over-parameterized models, and reduces the degree of higher-order degeneracy.

 The set of feasible values of $\sqrt{n} \bs{t}(\bs{\psi},\alpha)$ is given by the shifted and rescaled parameter space for $(\bs\eta,v(\bs\lambda))$ defined as $\Lambda_{n} := \sqrt{n} (\Theta_{\bs\eta}-\eta^*)   \times \sqrt{n} \alpha(1-\alpha)v( \Theta_{\bs\lambda})$, where $v(A):= \{ t \in \mathbb{R}^{q_\lambda}: t = v(\lambda) \text{ for some } \lambda \in A \subset \mathbb{R}^{q+2}\}$. Because $\Lambda_n/\sqrt{n} $ is  locally approximated by a cone $\Lambda:= \mathbb{R}^{p+q+2} \times v(\mathbb{R}^{q+2})$, we can apply Lemma 2 of \cite{Andrews1999}  to approximate the distribution of the supremum of the right-hand side of (\ref{eq:LR1}) as 
 \[
 \max_{\bs\psi\in \Theta_{\bs\psi}}2\{L_n(\bs{\psi},\alpha)-  L_n(\bs{\psi}^*,\alpha)\} \overset{d}{\rightarrow} \bs G \t \bs{\mathcal{I}}  \bs G-\inf_{\bs t\in \Lambda} (\bs t-\bs G)'\bs{\mathcal{I}} (\bs t-\bs G),
 \] 
 where $\bs G = \bs{\mathcal{I}}^{-1} \bs S\sim N(0,\bs{\mathcal{I}}^{-1})$. This allows us to characterize the asymptotic distribution of the LRTS.

For each $\alpha\in (0,1)$, define the reparameterized PMLE as
\begin{equation}
	  \label{eq:pmle}
\hat{\bs{\psi}}= \arg \max_{\bs{\psi} \in \Theta_{\bs\psi}}  L_n(\bs{\psi},\alpha)+ \sum_{j=1}^2 p_n(\sigma_j^2(\bs{\psi},\alpha)) 
\end{equation}
with  $\hat{\bs{\psi}} := (\hat{\bs{\gamma}}\t,\hat{\bs{\nu}}\t,\hat{\bs{\lambda}}\t)\t$,
where $\Theta_{\bs\psi}$ is defined as the space of $\bs{\psi}$ such that the $\bs{\vartheta}_2$ implied is in $\Theta_{\bs{\vartheta}}$ and $\sigma_j^2(\bs{\psi},\alpha)$ is the value of $\sigma_j$ implied by the value of $\bs{\psi}$ and $\alpha$ (e.g., $\sigma_1^2(\bs\psi,\alpha)=\nu_\sigma +(1-\alpha)\lambda_\sigma$). 

Let $(\hat{\bs{\gamma}}_0,\hat{\bs{\theta}}_0)$ be the one-component MLE that maximizes the one-component likelihood function $L_{0,n}(\bs{\gamma},\bs{\theta}) := \sum_{i=1}^n \log f(\bs{W}_i;\bs{\gamma},\bs{\theta}) $. Define the LRTS and the PLRTS  of testing $H_{01}$ with a small positivity constant $\epsilon$ on $\alpha$ as, respectively,
\begin{align}\label{eq:LR_def}
	LR_n &:= \max_{\alpha \in [\epsilon, 1- \epsilon]} 2 \{L_n(\hat{\bs{\psi}},\alpha) -  L_{0,n}(\hat{\bs{\gamma}}_0,\hat{\bs{\theta}}_0) \}\ \text{and}\  PLR_n  := LR_n+ \sum_{j=1}^2 p_n(\sigma_j^2(\hat{\bs{\psi}},\alpha)).
\end{align}
A hard bound is imposed on the values of $\alpha$ to avoid an issue of  the infinite Fisher information for testing $H_{02}$. However, the LRTS may have reduced power if the true value of $\alpha$ does not satisfy the constraint $[\epsilon, 1- \epsilon]$ given an ad hoc constant $\epsilon>0$. 
For this reason, in Section \ref{sec:5_EM}, we also develop the EM test, which does not impose a direct constraint on the value of $\alpha$. 
 
With $\bs{s}(\bs{W})$ in (\ref{eq:s_1}),  partition $\bs{\mathcal{I}}=\E[\bs s(\bs W)\bs s(\bs W)\t]$ and define
\begin{equation*}
\begin{split}
\bs{\mathcal{I}} = 
\begin{pmatrix}
 \bs{\mathcal{I}}_{\bs\eta}  & \bs{\mathcal{I}}_{\bs\eta\bs\lambda}  \\
 \bs{\mathcal{I}}_{\bs\lambda\bs\eta}  & \bs{\mathcal{I}}_{\bs\lambda\bs\lambda} 
\end{pmatrix},  
\quad \bs{\mathcal{I}}_{\bs{\eta}}  = \E[\bs s_{\bs\eta}(\bs W)  \bs s_{\bs\eta}(\bs W)\t], \quad \bs{\mathcal{I}}_{\bs\lambda\bs\eta}   = \E[\bs s_{\bs\lambda\bs\lambda }(\bs W)  \bs s_{\bs\eta}(\bs W)\t ], \quad \bs{\mathcal{I}}_{\bs\eta\bs\lambda}  = \bs{\mathcal{I}}_{\bs\lambda\bs\eta} \t, \\
 \bs{\mathcal{I}}_{\bs\lambda\bs\lambda}  =\E[\bs s_{\bs\lambda\bs\lambda }(\bs W)  \bs s_{\bs\lambda\bs\lambda}(\bs W)\t ],  \quad \bs{\mathcal{I}}_{\bs\lambda,\bs\eta}  = \bs{\mathcal{I}}_{\bs\lambda\bs\lambda}  - \bs{\mathcal{I}}_{\bs\lambda\bs\eta}  \bs{\mathcal{I}}_{\bs\eta} ^{-1} \bs{\mathcal{I}}_{\bs\eta\bs\lambda} , \quad\text{and}\quad \bs{G}_{\bs{\lambda},\bs\eta} := ( \bs{\mathcal{I}}_{\bs\lambda,\bs\eta}  )^{-1} \bs{S}_{\bs\lambda,\bs\eta} , 
\end{split}
\end{equation*}
where $\bs{S}_{\bs\lambda,\bs\eta}  \sim N(0,\bs{\mathcal{I}}_{\bs\lambda,\bs\eta} )$.
Define a set that characterizes the feasible values of $\sqrt{n}\bs{t}_\lambda(\bs{\lambda},\alpha)$ when $n \to \infty$ by the cone
\begin{equation*}
 \Lambda_{\bs\lambda} =  \Big\{ \sqrt{n} \alpha ( 1 - \alpha)  v(\bs{\lambda})  : \bs{\lambda} \in \Theta_{\bs\lambda} \Big\}.
\end{equation*}

Define $\hat{\bs{t}}_{\lambda}$ as
\begin{equation}\label{eq:t_lambda_def}
r_{\lambda}(\hat{\bs t}_{\bs \lambda}) = \inf_{\bs{t}_{\bs{\lambda}} \in \Lambda_{\bs\lambda}} r_{\lambda}(\bs{t}_{\bs{\lambda}}), \quad r_{\lambda} (\bs{t}_{\bs{\lambda}}) : = (\bs{t}_{\bs{\lambda}} - \bs{G}_{\bs{\lambda},\bs\eta})\t \bs{\mathcal{I}}_{\bs\lambda,\bs\eta}  (\bs{t}_{\bs{\lambda}} - \bs{G}_{\bs{\lambda},\bs\eta} ),
\end{equation}
where   $\hat{\bs{t}}_{\lambda}$ is a projection of a random Gaussian random variable $\bs{G}_{\bs{\lambda}}$ on a cone $\Lambda_{\bs\lambda}$.

 The following proposition establishes the asymptotic  distribution of the LRTS  or PLRTS under the null hypothesis $H_0: M=1$.

\begin{proposition}\label{prop:t2_distribution}
Suppose that Assumptions \ref{assumption:1} and \ref{assumption:2} hold. Under the null hypothesis $H_0: M_0 = 1$, (a) $\bs t(\hat{\bs{\psi}},\alpha)=O_p(n^{-1/2})$ for any $\alpha\in (0,1)$,  (b) $LR_n  \overset{d}{\to} (\hat{\bs{t}}_{\bs\lambda} )\t \bs{\mathcal{I}}_{\bs\lambda,\bs\eta}  \hat{\bs{t}}_{\bs\lambda}$ and  $PLR_n \overset{d}{\to} (\hat{\bs{t}}_{\bs\lambda} )\t \bs{\mathcal{I}}_{\bs\lambda,\bs\eta}  \hat{\bs{t}}_{\bs\lambda} +  \text{plim}_{n\rightarrow \infty} \sum_{j=1}^2 p_n(\sigma_j^2(\hat{\bs{\psi}},\alpha))$.
	\end{proposition}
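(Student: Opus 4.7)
My plan is to combine the local quadratic-form expansion (\ref{eq:LR1}) with the consistency of the PMLE from Proposition \ref{prop:vartheta_convergence}, the score and Hessian limits in Proposition \ref{prop:expansion}, and a cone-projection argument in the style of Lemma 2 of \cite{Andrews1999}. Throughout, I would exploit that, under $H_0$, $L_n(\bs\psi^*,\alpha) = L_{0,n}(\bs\gamma^*,\bs\theta^*)$ for every $\alpha$, and that $a_n = o(n^{1/4})$ keeps the penalty asymptotically negligible on the $\sqrt{n}$ scale.

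For part (a), I would first invoke Proposition \ref{prop:vartheta_convergence} to place $\hat{\bs\psi}$ in a $\bs\psi^*$-neighborhood with probability tending to one, so that (\ref{eq:LR1}) and the remainder bound of Proposition \ref{prop:expansion}(a) apply. The PMLE optimality inequality $L_n(\hat{\bs\psi},\alpha) + \sum_j p_n(\sigma_j^2(\hat{\bs\psi},\alpha)) \geq L_n(\bs\psi^*,\alpha) + \sum_j p_n(\sigma_j^{*2})$, together with (\ref{eq:LR1}), non-singularity of $\bs{\mathcal{I}}$, $\bs S_n = O_p(1)$, $R_n = o_p(1+\|\sqrt{n}\bs t\|^2)$, and a Taylor bound on the penalty differential of order $O_p(a_n)$, would produce a quadratic inequality of the form $c\|\sqrt{n}\bs t(\hat{\bs\psi},\alpha)\|^2 \leq O_p(1)\|\sqrt{n}\bs t(\hat{\bs\psi},\alpha)\| + O_p(a_n) + o_p(\|\sqrt{n}\bs t\|^2)$ for some $c > 0$. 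Solving this together with $a_n = o(n^{1/2})$ would give $\bs t(\hat{\bs\psi},\alpha) = O_p(n^{-1/2})$; in particular $\hat{\bs\eta} - \bs\eta^* = O_p(n^{-1/2})$ and, since $\bs t_{\bs\lambda}$ is quadratic in $\hat{\bs\lambda}$, the slower rate $\hat{\bs\lambda} = O_p(n^{-1/4})$.

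For part (b), I would write $LR_n = 2\max_\alpha\{L_n(\hat{\bs\psi},\alpha)-L_n(\bs\psi^*,\alpha)\} - 2\{L_{0,n}(\hat{\bs\gamma}_0,\hat{\bs\theta}_0)-L_{0,n}(\bs\gamma^*,\bs\theta^*)\}$. The second term converges in distribution to $\bs G_{\bs\eta}^\top \bs{\mathcal{I}}_{\bs\eta} \bs G_{\bs\eta}$ by the usual one-component MLE theory. For the first, I would substitute (\ref{eq:LR1}); part (a) would imply that the supremum over $\bs\psi$ is attained at $\sqrt{n}\bs t = O_p(1)$, where $R_n = o_p(1)$, and the feasible set $\Lambda_n$ is locally approximated by the cone $\Lambda = \mathbb{R}^{p+q+2} \times v(\mathbb{R}^{q+2})$ (positive rescaling by $\alpha(1-\alpha)$ leaves the $\bs\lambda$-cone unchanged). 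Lemma 2 of \cite{Andrews1999} then delivers $\max_\alpha 2\{L_n(\hat{\bs\psi},\alpha)-L_n(\bs\psi^*,\alpha)\} \to_d \bs G^\top \bs{\mathcal{I}} \bs G - \inf_{\bs t_{\bs\lambda} \in \Lambda_{\bs\lambda}} r_\lambda(\bs t_{\bs\lambda})$, and because $\bs{\mathcal{I}}_{\bs\lambda,\bs\eta}$, $\bs G_{\bs\lambda,\bs\eta}$, and $\Lambda_{\bs\lambda}$ (up to positive scaling) are all $\alpha$-free, the $\max_\alpha$ becomes vacuous in the limit. Block inversion of $\bs{\mathcal{I}}$ gives $\bs G^\top \bs{\mathcal{I}} \bs G = \bs G_{\bs\eta}^\top \bs{\mathcal{I}}_{\bs\eta} \bs G_{\bs\eta} + \bs G_{\bs\lambda,\bs\eta}^\top \bs{\mathcal{I}}_{\bs\lambda,\bs\eta} \bs G_{\bs\lambda,\bs\eta}$, so after cancellation what remains is $\bs G_{\bs\lambda,\bs\eta}^\top \bs{\mathcal{I}}_{\bs\lambda,\bs\eta} \bs G_{\bs\lambda,\bs\eta} - r_\lambda(\hat{\bs t}_{\bs\lambda})$. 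The cone identity $\hat{\bs t}_{\bs\lambda}^\top \bs{\mathcal{I}}_{\bs\lambda,\bs\eta}(\bs G_{\bs\lambda,\bs\eta} - \hat{\bs t}_{\bs\lambda}) = 0$---which I would obtain by perturbing $\hat{\bs t}_{\bs\lambda}$ to $(1 \pm \epsilon) \hat{\bs t}_{\bs\lambda} \in \Lambda_{\bs\lambda}$ and setting the derivative of $r_\lambda$ to zero (only closure of $\Lambda_{\bs\lambda}$ under positive rescaling is needed, not convexity)---then collapses this to $\hat{\bs t}_{\bs\lambda}^\top \bs{\mathcal{I}}_{\bs\lambda,\bs\eta} \hat{\bs t}_{\bs\lambda}$. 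For the PLRTS, the extra penalty $\sum_j p_n(\sigma_j^2(\hat{\bs\psi},\alpha))$ has a probability limit by continuity of $p_n$ and $\sigma_j^2(\hat{\bs\psi},\alpha) \to_p \sigma^{*2}$.

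The hardest part will be the uniform-in-$(\bs\psi,\alpha)$ control needed to pass the supremum through (\ref{eq:LR1}) and to justify that the cone approximation of $\Lambda_n$ by $\Lambda$ is tight enough that the nonlinear image $v(\cdot)$ constraint is respected in the limit. The non-singularity of $\bs{\mathcal{I}}$ from Proposition \ref{prop:expansion}(c)---a panel-model feature absent in the cross-sectional case---is what makes the invertibility I use throughout, and the $n^{-1/4}$ rate for $\hat{\bs\lambda}$ rather than the $n^{-1/8}$ of \cite{Kasahara2015a}, possible.
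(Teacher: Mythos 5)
Your proposal is correct and follows essentially the same route as the paper: the rate in part (a) comes from the same quadratic inequality in $\|\sqrt{n}\,\bs t(\hat{\bs\psi},\alpha)\|$ derived from the optimality of the PMLE and the expansion (\ref{eq:LR1}), and part (b) uses the same decomposition into an $\bs\eta$-block that cancels against the one-component likelihood and a $\bs\lambda$-block handled by the Andrews (1999) cone-projection machinery. The only cosmetic difference is that the paper completes the square into $B_n(\bs t_{\bs\eta,\bs\lambda})+C_n(\bs t_{\bs\lambda})$ and then cites Theorem 3(c) of Andrews for the limit of $C_n$ (verifying his Assumptions 2--5, with the local cone approximation of $\alpha(1-\alpha)v(\Theta_{\bs\lambda})$ by $\Lambda_{\bs\lambda}$ via his Lemma 3), whereas you carry out the block inversion and the orthogonality identity $\hat{\bs t}_{\bs\lambda}^{\top}\bs{\mathcal{I}}_{\bs\lambda,\bs\eta}(\bs G_{\bs\lambda,\bs\eta}-\hat{\bs t}_{\bs\lambda})=0$ explicitly.
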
 
Proposition \ref{prop:t2_distribution}(a) implies that $\hat{\bs{\theta}}_j - \bs{\theta}^*=O_p(n^{-1/4})$ for $j=1,2$.  The $n^{1/4}$  convergence rate is a consequence of the linear dependency in (\ref{eq:dependence}), where the identification of the parameter $\bs{\theta}$ relies on the fourth-order Taylor approximation of the log-likelihood function. This rate is also the best convergence rate for an over-parameterized mixture under the strong identifiability condition \citep{chen95as}.  When we choose the penalty function  so that $\sum_{j=1}^2 p_n(\sigma_j^2(\hat{\bs{\psi}},\alpha))=o_p(1)$ under the null hypothesis of $M=1$,  $PLR_n$ has the same asymptotic null distribution as  $LR_n$.

\section{Likelihood ratio test for   $H_0:M = M_0$ against $H_A:M=M_0+1$}\label{sec:4_test}
In this section, we build upon the analysis from the previous section and derive the asymptotic distribution of the PLRTS for testing the null hypothesis of $M_0$ components against an alternative of $(M_0+1)$ components, where $M_0 \ge 2$.

Consider a random sample of $n$ with a panel length of $T$ independent observations $\{\boldsymbol{W}_{i}\}_{i=1}^n$, where $\boldsymbol{W}_i = \{ (Y_{it},\boldsymbol{X}^\top_{it},\boldsymbol{Z}^\top_{it} )^\top \}_{t=1}^T$ from an $M_0$-component density $f_{M_0}(\boldsymbol{w}; \boldsymbol{\vartheta}_{M_0})$ defined in equation (\ref{eq:fm0}):
\begin{equation}\label{eq:fm0}
f_{M_0}(\boldsymbol{w}; \boldsymbol{\vartheta}_{M_0}^*)  = \sum_{j=1}^{M_0} \alpha^*_j  f(\boldsymbol{w};\boldsymbol{\gamma}^*,\boldsymbol{\theta}^*_j),
\end{equation}
where $\boldsymbol{\vartheta}_{M_0}^* = (\boldsymbol{\theta}_1^*,\boldsymbol{\theta}_2^*,\ldots,\boldsymbol{\theta}_{M_0}^*,\alpha_1^*,\ldots,\alpha_{M_0 -1}^*,\boldsymbol{\gamma}^*) \in \Theta_{\boldsymbol{\vartheta}_{M_0}}$ and $\alpha_{M_0}^*=1-\sum_{j=1}^{M-1}\alpha_j^*$. 

Let the density of the $(M_0 + 1)$-component model be defined by
\begin{equation}\label{eq:fm0_1}
		f_{M_0+1}(\boldsymbol{w}; \boldsymbol{\vartheta}_{M_0+1}) =\sum_{j=1}^{M_0+1} \alpha_j  f(\boldsymbol{w};\boldsymbol{\gamma},\boldsymbol{\theta}_j),
\end{equation}
where $\boldsymbol{\vartheta}_{M_0+1} = (\boldsymbol{\theta}_1,\boldsymbol{\theta}_2,\ldots,\boldsymbol{\theta}_{M_0+1},  \alpha_1,\ldots,\alpha_{M_0},\boldsymbol{\gamma}) \in \Theta_{\vartheta_{M_0+1}}$ as defined in (\ref{eq:fm0}). We assume $\mu_{1}^* < \mu_{2}^*, \ldots, < \mu_{M_0}^*$ in the true parameters for identification.

The $(M_0 + 1)$-component model (\ref{eq:fm0_1}) gives rise to the true density (\ref{eq:fm0}) in two cases: (i) two components have the same mixing parameter and (ii) one component has zero mixing proportion. Accordingly, we partition the null hypothesis of $H_0: M=M_0$ into two as $H_0 = H_{01}\cup H_{02}$, with $H_{01}: \boldsymbol{\theta}_h = \boldsymbol{\theta}_{h+1} = \boldsymbol{\theta}_{h}^*$ for some $h=1\ldots,M_0$ and $H_{02}: \alpha_h = 0$ for some $h=1,\ldots,M_0+1$.

We first analyze the infinite Fisher information problem for testing $H_{02}$. Partition  $H_{02}$ as $H_{02}=\cup_{h=1}^{M_0}H_{0,2h}$, where $H_{0,2h}: \alpha_h=0$.  
Define the subset of $\Theta_{\vartheta_{M_0+1}}$  corresponding to $H_{0,2h}$ as 
\[\begin{split}
	\Upsilon_{2h}^* = \{ \bs \vartheta_{M_0+1} \in \Theta_{\bs \vartheta_{M_0+1}} : \alpha_h = 0; (\alpha_j, \mu_j, \sigma_j)  = (\alpha_j^*, \mu_j^*, \sigma_j^*)\text{ for } j < h; \\ (\alpha_j, \mu_j, \sigma_j)  = (\alpha_{j-1}^*, \mu_{j-1}^*, \sigma_{j-1}^*)\text{ for } j > h\}. \end{split}
\]  
The score for testing $H_{0,2h}: \alpha_h = 0$ takes the form $\nabla_{\alpha_h} \log f_{M_0+1}(\bs W_i, \bs \vartheta_{M_0+1}) = [f(\bs W_i; \mu_h, \sigma_h^2 ) - f(\bs W_i; \mu_{M_0}^*, \sigma_{M_0}^{2*} )]  / f_{M_0}(\bs W_i, \bs \vartheta_{M_0}^*)$. 
Because $(\mu_h,\sigma_h^2)$ is not identified when $\alpha_h = 0$, the Fisher information matrix of the LRTS for testing $H_{0,2h}: \alpha_h = 0$ depends on the supremum of the variance of $\nabla_{\alpha_h} \log f_{M_0 + 1}(\bs W_i; \bs \vartheta_{M_0+1})$ over $\bs \vartheta_{M_0+1} \in \Upsilon_{2h}^*$.
The Fisher information is infinite unless there is an a priori restriction on the values of $\sigma_j^2$.
\begin{proposition}\label{prop:infinite_fisher}
	$\sup_{\vartheta_{M_0+1} \in \Upsilon_{2h}^*} \E [  \{ \nabla_{\alpha_h}  \log f_{M_0+1}  (\bs W_i, \vartheta_{M_0+1} )\}^2  ]  < \infty$ if and only if $\max \{ \sigma^2: \sigma \in \Theta_{\sigma} \} < 2 \max \{ \sigma_1^{2*},\ldots,\sigma_{M_0}^{2*}\}.$
\end{proposition}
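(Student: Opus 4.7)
The plan is to reduce the second moment of the score to a single integral
\[ I(\mu_h,\sigma_h^2) := \int \frac{f(\bs w;\bs\gamma^*,\mu_h,\sigma_h^2)^2}{f_{M_0}(\bs w;\bs\vartheta_{M_0}^*)}\,d\bs w, \]
and determine when it is uniformly bounded over $\bs\vartheta_{M_0+1}\in\Upsilon_{2h}^*$. Starting from the given score expression, expanding the square $[f(\bs W;\mu_h,\sigma_h^2)-f(\bs W;\mu_{M_0}^*,\sigma_{M_0}^{*2})]^2$ produces a squared-$f_h$ term that yields $I(\mu_h,\sigma_h^2)$, plus a cross term and a squared-fixed-component term. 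Both of these last two are bounded by $1/\alpha_{M_0}^*$ uniformly over $\Upsilon_{2h}^*$ via the trivial inequality $f_{M_0}\geq\alpha_{M_0}^* f(\cdot;\mu_{M_0}^*,\sigma_{M_0}^{*2})$, so the whole question reduces to finiteness and uniform boundedness of $I$ in the free parameters $(\mu_h,\sigma_h^2,\bs\beta_h)$ on $\Upsilon_{2h}^*$.

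For sufficiency, let $j^*:=\arg\max_{j\le M_0}\sigma_j^{*2}$ and use the pointwise bound $f_{M_0}\geq\alpha_{j^*}^* f(\cdot;\mu_{j^*}^*,\sigma_{j^*}^{*2})$. Because the normal panel density factorizes across $t=1,\dots,T$, the resulting ratio decomposes into a product of one-dimensional Gaussian integrals of the form
\[ \int\exp\!\Big(-\tfrac{(y_t-\mu_h)^2}{\sigma_h^2}+\tfrac{(y_t-\mu_{j^*}^*)^2}{2\sigma_{j^*}^{*2}}\Big)\,dy_t, \]
whose $y_t^2$-coefficient $-1/\sigma_h^2+1/(2\sigma_{j^*}^{*2})$ is negative exactly when $\sigma_h^2<2\sigma_{j^*}^{*2}$. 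Completing the square gives an explicit closed form, continuous in $(\mu_h,\sigma_h^2)$; under compactness of $\Theta_\mu\times\Theta_\sigma\times\Theta_{\bs\beta}$ (standard) and the hypothesis $\max\Theta_\sigma<2\sigma_{j^*}^{*2}$, the supremum over $\Upsilon_{2h}^*$ is finite.

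For necessity, suppose $\max\Theta_\sigma\geq 2\sigma_{j^*}^{*2}$ and pick $\sigma_h^2\in\Theta_\sigma$ with $\sigma_h^2\geq 2\sigma_{j^*}^{*2}$. I would restrict the integral to $A_R=\{\|\bs y\|>R\}$; by the Gaussian tails one can choose $R$ so that $f(\cdot;\mu_j^*,\sigma_j^{*2})\le 2 f(\cdot;\mu_{j^*}^*,\sigma_{j^*}^{*2})$ for every $j$ on $A_R$, hence $f_{M_0}\leq 2 f(\cdot;\mu_{j^*}^*,\sigma_{j^*}^{*2})$ there, giving
\[ I \geq \tfrac12\int_{A_R}\frac{f(\bs w;\mu_h,\sigma_h^2)^2}{f(\bs w;\mu_{j^*}^*,\sigma_{j^*}^{*2})}\,d\bs w. \]
If $\sigma_h^2>2\sigma_{j^*}^{*2}$ the integrand grows like $\exp(c\|\bs y\|^2)$ for some $c>0$, so the integral diverges. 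In the boundary case $\sigma_h^2=2\sigma_{j^*}^{*2}$, the $y_t^2$ terms cancel in the log-ratio, leaving a linear-in-$y_t$ exponent with coefficient $(\mu_h-\mu_{j^*}^*)/\sigma_{j^*}^{*2}$; this is non-integrable on $A_R$ whether $\mu_h\neq\mu_{j^*}^*$ (linear exponential diverges in one direction) or $\mu_h=\mu_{j^*}^*$ (the integrand is bounded below by a positive constant on the unbounded set $A_R$). In all cases $I=\infty$.

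The main obstacle is the boundary case $\sigma_h^2=2\sigma_{j^*}^{*2}$, where the leading quadratic in the log-integrand degenerates and cruder upper bounds such as $f_{M_0}\leq\sum_j f(\cdot;\mu_j^*,\sigma_j^{*2})$ give only a strictly negative quadratic coefficient and fail to detect the divergence. The restriction to $A_R$, on which a single component dominates $f_{M_0}$ up to a constant, is the key step that makes the boundary tail analysis tight enough to yield divergence.
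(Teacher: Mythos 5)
Your proof is correct and follows essentially the same route as the paper's: expand the square so that everything reduces to $\int f(\bs w;\mu_h,\sigma_h^2)^2/f_{M_0}(\bs w;\bs\vartheta_{M_0}^*)\,d\bs w$, sandwich the mixture denominator by the largest-variance component on the far tail, and read off convergence from the sign of $-1/\sigma_h^2+1/(2\max_j\sigma_j^{*2})$ in the resulting product of one-dimensional Gaussian integrals. Your handling of the boundary case $\sigma_h^2=2\sigma_{j^*}^{*2}$ is in fact more careful than the paper's own (which asserts finiteness there when $\mu_h=\mu_{M_0}^*$, even though the integrand is then constant on an unbounded region); the only detail to tighten is that the dominance bound $f_{M_0}\le 2f_{j^*}$ is cleanest on the region $\cap_{t}\{|y_t|>M\}$ used by the paper rather than on $\{\|\bs y\|>R\}$.
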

Because the restriction on the values of $\sigma_j^2$ in Proposition \ref{prop:infinite_fisher} is difficult to justify and not easy to enforce in practice, we focus on testing $H_{01}$.

Partition  $H_{01}$ as $H_{01}=\cup_{h=1}^{M_0}H_{0,1h}$, where $H_{0,1h}: \bs\theta_{h}=\bs\theta_{h+1}$ with
$\mu_1<\cdots < \mu_h=\mu_{h+1}<\cdots < \mu_{M_0+1}$. We impose these inequality constraints on $\mu_j$ for component identification.
There are $M_0$ ways to describe the $M_0$ component null model in the space of $(M_{0}+1)$ component models, and each way corresponds to the null hypothesis of $H_{0,1h}: \bs\theta_{h}=\bs\theta_{h+1}$ for $h=1,2,..., M_0$. Testing $H_{0,1h}: \bs\theta_{h}=\bs\theta_{h+1}$ in the $M_0$-component null models is similar to  testing $H_{01}: \bs\theta_1=\bs\theta_2$ in the one-component null model in Section \ref{sec:3_test_1}.     
 
Define the subset of $\Theta_{\vartheta_{M_0+1}}$ corresponding to  $H_{0,1h}$ as
\begin{equation}\label{eq:Upsilon_1}
\begin{split}
	\Upsilon^*_{1h} : = \Big \{\bs{\vartheta}_{M_0+1} \in \Theta_{\vartheta_{M_0+1}}: \alpha_{h} + \alpha_{h+1} = \alpha_{h}^* \text{ and } \bs{\theta}_{h} = \bs{\theta}_{h+1} = \bs{\theta}_{h}^*  ; \bs{\gamma} = \bs{\gamma}^*;  \alpha_j = \alpha_{j}^* \qquad 
	\\
	\text{ and } \bs{\theta}_j = \bs{\theta}_j^{*} \text{ for } 1 \le j < h ; \alpha_j = \alpha_{j-1}^* \text{ and } \bs{\theta}_j = \bs{\theta}_{j-1}^* \text{ for } h +1 \le j \le M_0 + 1
\Big \}
\end{split}
\end{equation}
for $h = 1,\ldots,M_0$.   The set $\Upsilon^*_1 := \cup_{h=1}^{M_0} \Upsilon^*_{1h}$ corresponds to $H_{01}=\cup_{h=1}^{M_0}H_{0,1h}$.

Suppose that the null hypothesis of $M=M_0$ holds with the true density (\ref{eq:fm0}). Because any parameter in $\Upsilon^{*}_{1} = \cup_{h=1}^{M_0} \Upsilon^*_{1h}$ can generate the true density $f_{M_0}(\bs{w}; \bs{\vartheta}^*_{M_0}) =\sum_{j=1}^{M_0} \alpha_0^{j*}  f(\bs{w};\bs{\gamma}^*,\bs{\theta}_j^{*})$,  we need to restrict the estimators under the $(M_0 + 1)$-component  model to be in a neighborhood of $\Upsilon^*_{1h}$ to test $H_{0,1h}$.

Recall that $\mu_1^{*} < \mu_2^* \ldots < \mu_{M_0}^*$. Let $ \underline{\Theta}_{\mu}$ and $\overline{\Theta}_{\mu}$ denote the lower and upper bounds of $\Theta_{\mu}$, respectively. Define $D_1^* = [\underline{\Theta}_{\mu}, \frac{\mu_1^* + \mu_2^* }{2}] \times \Theta_{\beta} \times \Theta_{\sigma^2}$,  $D_h^* = [\frac{\mu_{h-1}^* + \mu_{h}^* }{2}, \frac{\mu_{h}^* + \mu_{h+1}^* }{2}] \times \Theta_{\beta} \times \Theta_{\sigma^2}$ for $h = 2,\ldots,M_0 - 1$, $D_{M_0 }^* = [\frac{\mu_{M_0-1}^* + \mu_{M_0}^* }{2}, \overline{\Theta}_{\mu}] \times \Theta_{\beta} \times \Theta_{\sigma^2}$.
  Then, $D_h^* \subset \Theta_{\theta}$ is a neighborhood containing $\theta_h^{*}$ but not $\theta_j^{*}$ for $j \neq h$.
For $h=1,\ldots M_0$, given a small positive constant $\epsilon>0$, define a restricted parameter space  $\bs{\Psi}_h^* \subset \Theta_{\vartheta_{M_0 + 1}}(\epsilon)$ as
\begin{equation}\label{Omega}
	\bs{\Psi}_h^* = \left\{ \begin{split}
		\alpha_1,\ldots,\alpha_{M_0+1} \in [\epsilon,1-\epsilon]; \sum_{j=1}^{M_0 + 1} \alpha_j = 1; \gamma \in \Theta_{\gamma} ;
        \theta \in \Theta_{\theta}: \bs{\theta}_j \in D_j^* \text{ for } j=1,\ldots,h-1; \\
		\bs{\theta}_h,\bs{\theta}_{h+1} \in D^*_h;
        \bs{\theta}_j \in D^*_{j-1} \text{ for } j=h+2,\ldots,M_0+1.
	\end{split}  \right\}
\end{equation}
Note that $\bs{\Psi}_h^* \cap \Upsilon_{1h}^* \neq \emptyset$ and $\bs{\Psi}_h^* \cap \Upsilon_{1l}^* = \emptyset$ if $h \neq l$, and $\cup_{h=1}^{M_0}\bs{\Psi}_h^*= \Theta_{\vartheta_{M_0 + 1}}(\epsilon)$.

Let $\hat{\bs{\Psi}}_h^* $ and $\hat{D}^*_h$ be consistent estimators of ${\bs{\Psi}}_h^* $ and ${D}^*_h$, which can be constructed from a consistent estimator of $\bs{\vartheta}_{M_0}^*$ in the $M_0$-component model. We test $H_{0,1h}: \bs\theta_{h}=\bs\theta_{h+1}$ by estimating the $(M_0+1)$-component model under the restriction that $\vartheta^{M_0+1}\in \hat{\bs{\Psi}}_h^*$.

For $h=1,2,..., M_0$,  define the local PMLE that maximizes the log-likelihood function of the $(M_0+1)$-component model under the constraint that $\bs\vartheta_{M_0+1}\in \hat{\bs\Psi}_h^*$ in (\ref{Omega}) by
\[
\hat{\bs{\vartheta}}_{M_0 + 1}^h   = \argmax_{\bs{\vartheta}_{M_0 + 1}  \in \hat{\bs\Psi}_h^* } L_{M_0+1,n}(\bs{\vartheta}_{M_0 + 1}) + \tilde p_n(\bs{\vartheta}_{M_0+1 }),
\] 
where  
$$
 L_{M,n}(\bs{\vartheta}_{M})  : = \sum_{i=1}^n \log f_{M}(\bs{W}_i;\bs{\vartheta}_{M})\quad\text{and}\quad \tilde p_n(\bs\vartheta_{M})  :=\sum_{j=1}^{M} p_n (\sigma_j^2;  \hat{\sigma}_{0,j}^2). 
 $$
 with
\begin{equation}\label{penalty2}
p_n(\sigma_j^2; \hat{\sigma}_{0,j}^2) := - a_n \{ {\hat{\sigma}_{0,j}^2}/{\sigma_j^2} + \log({\sigma_j^2}/{\hat{\sigma}_{0,j}^2}) - 1\},
\end{equation}
where $\hat{\sigma}_{0,j}^2$ is a root-$n$ consistent estimator of $\sigma_{0,j}^2$ from the $M_0$-component model under the null hypothesis. Because $\hat{\sigma}_j^2 - \sigma_{0,j}^2 = O_p(n^{-1/4})$ under the null hypothesis (cf. Proposition \ref{prop:t2_distribution}(a)),  $p_n(\hat\sigma_j^2; \hat{\sigma}_{0,j}^2) =o_p(1)$ when  $a_n$ is chosen to be $o(n^{1/4})$. 

Under $H_{0}: M=M_0$, $\bs\Psi_h^*$ contains a set of  parameters $\Upsilon_{1h}^*$ defined in (\ref{eq:Upsilon_1}) such that $f_{M_0 + 1}(\bs{w}; \bs{\vartheta}_{M_0 + 1})$ is equal to $f_{M_0}(\bs{w}; \bs{\vartheta}_{M_0 }^*)$ for any $\bs{\vartheta}_{M_0 +1}\in\Upsilon_{1h}^*$ and is therefore the density function from which the data are generated. These penalized likelihood estimators are consistent.
\begin{proposition}\label{prop:vartheta_convergence_M}
Suppose that Assumption \ref{assumption:1} holds. Then, under the null hypothesis $H_0: M = M_0$,  $\inf_{\bs{\vartheta}_{M_0+1} \in \bs\Psi_h^*} |\hat{\bs{\vartheta}}_{M_0+1}^h - \bs{\vartheta}_{M_0+1}| \overset{p}{\to} 0$ for $h=1,2,...,M_0$.
\end{proposition}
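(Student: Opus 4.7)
The plan is to adapt the Wald-style consistency argument used for the penalized MLE in Proposition \ref{prop:vartheta_convergence} (which follows \cite{chenli09as}) to the local estimator $\hat{\bs\vartheta}_{M_0+1}^h$ constrained to $\hat{\bs\Psi}_h^*$. Three ingredients are needed: (a) the penalty $\tilde p_n$ confines each variance estimate to a compact subinterval of $(0,\infty)$ with probability tending to one; (b) a uniform law of large numbers reduces the normalized penalized objective to the expected log-likelihood $K(\bs\vartheta_{M_0+1}) := \E[\log f_{M_0+1}(\bs W;\bs\vartheta_{M_0+1})]$; and (c) inside $\bs\Psi_h^*$, the information inequality together with the identifiability of finite normal mixtures pinpoints the set of maximizers of $K$ as $\Upsilon_{1h}^* \subset \bs\Psi_h^*$.

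The hard part is ingredient (a), because the raw log-likelihood is unbounded above as $\sigma_j^2 \downarrow 0$, so the unrestricted likelihood has no maximizer. The penalty $p_n(\sigma_j^2;\hat\sigma_{0,j}^2) = -a_n[\hat\sigma_{0,j}^2/\sigma_j^2 + \log(\sigma_j^2/\hat\sigma_{0,j}^2) - 1]$ is nonpositive and diverges to $-\infty$ at rate $a_n$ both as $\sigma_j^2\to 0$ and as $\sigma_j^2\to\infty$. I would evaluate the penalized objective at a reference parameter in $\hat{\bs\Psi}_h^*\cap \Upsilon_{1h}^*$ built from a root-$n$ consistent $M_0$-component MLE; this produces a lower envelope that differs from $nK^*$, with $K^*:=\E[\log f_{M_0}(\bs W;\bs\vartheta_{M_0}^*)]$, by at most $O_p(\sqrt{n})$. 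By the Chen--Li argument, any candidate whose $\hat\sigma_j^2$ escapes to $0$ or $\infty$ incurs either a penalty cost of order $a_n$ or, via the uniform likelihood upper bound in their Lemma 3, a likelihood deficit large enough to fall below this envelope once the rate conditions C1--C3 referenced in the proposition are imposed on $a_n$ (consistent with Assumption \ref{assumption:1}(b)), thereby excluding such candidates from being the PMLE asymptotically.

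Once compactness is secured, the normalized penalty contributes $O(a_n/n)=o(1)$ and the smoothness of $\log f_{M_0+1}$ together with the moment condition in Assumption \ref{assumption:1}(a) delivers the uniform law of large numbers needed for ingredient (b). For ingredient (c), the information inequality yields $K(\bs\vartheta_{M_0+1})\le K^*$ with equality iff $f_{M_0+1}(\cdot;\bs\vartheta_{M_0+1})=f_{M_0}(\cdot;\bs\vartheta_{M_0}^*)$ almost surely. Inside $\bs\Psi_h^*$, the constraint $\alpha_j\in[\epsilon,1-\epsilon]$ rules out the $H_{02}$-type degenerate representations, and the partition of $\Theta_{\theta}$ into intervals $D_j^*$ (each containing exactly one $\bs\theta_l^*$, with $D_h^*$ containing both $\bs\theta_h$ and $\bs\theta_{h+1}$) combined with the identifiability of finite normal mixtures forces $\bs\theta_h=\bs\theta_{h+1}=\bs\theta_h^*$, $\alpha_h+\alpha_{h+1}=\alpha_h^*$, and $(\bs\theta_j,\alpha_j)$ to match the true values for $j\notin\{h,h+1\}$, i.e., precisely the set $\Upsilon_{1h}^*$. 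A standard argmax argument, combined with the Hausdorff convergence $\hat{\bs\Psi}_h^*\to \bs\Psi_h^*$ implied by the root-$n$ consistency of the plug-in $M_0$-component estimator defining the $D_j^*$, then yields $\hat{\bs\vartheta}_{M_0+1}^h\to_p \Upsilon_{1h}^*\subset \bs\Psi_h^*$, which gives the stated claim.
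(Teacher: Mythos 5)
Your proposal is correct and follows essentially the same route as the paper: the paper also treats the model as a diagonal-covariance multivariate normal mixture and proves consistency of the penalized MLE by verifying that the penalty satisfies conditions C1--C3 of Chen, Tan, and Zhang (2009) --- using the strengthened C3 of Alexandrovich (2014), which patches a known gap in precisely the uniform likelihood upper bound you invoke via ``their Lemma 3'' --- and then citing their consistency theorems. The Wald-style argument you unpack (the penalty excludes degenerate variances, a ULLN plus the information inequality and mixture identifiability localize the maximizer to $\Upsilon_{1h}^*\subset\bs\Psi_h^*$) is exactly the content of those cited results.
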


Consider the local PLRTS for testing $H_{0,1h}: \bs\vartheta_h = \bs\vartheta_{h+1}$ defined by 
\begin{equation*}
PLR^{M_0,h}_n := 2\{ L_{M_0+1,n}(\hat{\bs{\vartheta}}^h_{M_0 + 1}) + \tilde p_n(\hat{\bs{\vartheta}}^h_{M_0+1 }) - L_{M_0,n}(\hat{\bs{\vartheta}}_{M_0 })\}\quad\text{for $h=1,2,...,M_0$}.
\end{equation*}

The test utilizing the local PLRTS, denoted by $PLR^{M_0,h}_n$, possesses power solely against local alternatives within the restricted parameter space of $\bs{\Psi}_h^*$. To guarantee power against local alternatives over a wide range of directions, we consider the PLRTS characterized by the maximum of the local PLRTS for $h=1,...,M_0$, as defined by
\begin{equation}\label{eq:LR_M0_max}
PLR_n(M_0)  := \max\{PLR^{M_0,1}_n,PLR^{M_0,2}_n,...,PLR^{M_0,M_0}_n\}.
\end{equation}
Because $\Theta_{\vartheta_{M_0 + 1}}(\epsilon)= \cup_{h=1}^{M_0} \hat{\bs\Psi}_h^*$, $PLR_n(M_0)$ is identical to $\max_{\bs{\vartheta}_{M_0 + 1} \in \Theta_{\bs{\vartheta}_{M_0 + 1} }(\epsilon) } \{L_{M_0+1,n} (\bs{\vartheta}_{M_0 + 1})+ \tilde p_n(\bs{\vartheta}_{M_0+1 })\} -  L_{M_0,n}(\hat{\bs{\vartheta}}_{M_0 })$.

 To derive the asymptotic null distribution of $PLR_n(M_0)$,  collect the score vector for testing $H_{0,1h}$ for $h = 1,\ldots, M_0$ into one vector as
\begin{equation}
    \label{eq:s_tilde}
\tilde{\bs{s}}(\bs{W}) = \begin{pmatrix}
\tilde{\bs{s}}_{\bs{\eta}} (\bs{W})  \\
\tilde{\bs{s}}_{\bs{\lambda\lambda}} (\bs{W})
\end{pmatrix}, \ \text{ where } \underset{(M_0 +  p + q +1 ) \times 1}{\tilde{\bs{s}}_{\bs{\eta}}(\bs{W})}  = \begin{pmatrix}
\bs{s}_{\bs{\alpha}}(\bs{W})  \\
\bs{s}_{ (\bs{\gamma},\bs{\nu} ) } (\bs{W})
\end{pmatrix}\quad \text{ and } \tilde{\bs{s}}_{\bs{\lambda\lambda}} (\bs{W}) = \begin{pmatrix}
\bs{s}^1_{\bs{\lambda\lambda}} (\bs{W}) \\
\vdots \\
\bs{s}^{M_0}_{\bs{\lambda\lambda}} (\bs{W})
\end{pmatrix},
\end{equation}
where
\begin{equation}
 \begin{split}
    \bs{s}_{\bs{\alpha}} (\bs{W}) & = \begin{pmatrix}
    f(\bs{W};\bs{\gamma}^*,\bs{\theta}^*_1) -    f(\bs{W};\bs{\gamma}^*,\bs{\theta}^*_{M_0}) \\
    \vdots \\
       f(\bs{W};\bs{\gamma}^*,\bs{\theta}^*_{M_0-1}) -    f(\bs{W};\bs{\gamma}^*,\bs{\theta}^*_{M_0})
    \end{pmatrix} \Bigg / f_{M_0}(\bs{W}; \bs{\vartheta}_{M_0}^*) , \\
    \bs{s}_{ (\bs{\gamma},\bs{\nu} ) } (\bs{W}) &  = \sum_{j=1}^{M_0} \alpha_j^*
    \nabla_{ (\bs{\gamma},\bs{\nu} ) }  f(\bs{W};\bs{\gamma}^*,\bs{\theta}^*_{j}) / f_{M_0}(\bs{W}; \bs{\vartheta}_{M_0}^*), \\ 
    \bs{s}^h_{\bs{\lambda\lambda}} (\bs{W}) & = \widetilde{ \nabla}_{\bs\theta_h\bs\theta_h\t}  f(\bs{W};\bs{\gamma}^*,\bs{\theta}^*_h) / f_{M_0}(\bs{W}; \bs{\vartheta}_{M_0}^*)\quad\text{for $h=1,2,...,M_0$} ,
    \end{split}
\end{equation}
with
$\widetilde{ \nabla}_{\bs\theta_h \bs\theta\t_h}  f(\bs{W};\bs{\gamma}^*,\bs{\theta}^*_h)  := (c_{11} \nabla_{\theta_{h1}\theta_{h1}} f^*,...,c_{(q+2)(q+2)}\nabla_{\theta_{h,q+2}\theta_{h,q+2}} f^*,c_{12}\nabla_{\theta_{h1}\theta_{h2}} f^*,...,c_{(q+1)(q+2)}\nabla_{\theta_{h,q+1}\theta_{h,q+2}} f^*)\t$  
for $\bs{\theta}_h:=(\theta_{h1},\theta_{h2},\theta_{h3},...,\theta_{h,q+2})\t:=(\mu_h,\sigma_h^2,\beta_{h1},...,\beta_{hq})\t$ and  $c_{jk}=1/2$ for $j\neq k$ and $c_{jk}=1$ for $j=k$.   
 Define \begin{equation}
\begin{split}
    \tilde{\bs{\mathcal{I}}} := \E[\bs{\tilde{s}}(\bs{W}) \bs{\tilde{s}}(\bs{W})^\top], \quad  \tilde{\bs{\mathcal{I}}}_{\bs{\eta}} := \E[\bs{\tilde{s}}_{\bs{\eta}}(\bs{W})  \bs{\tilde{s}}_{\bs{\eta}}(\bs{W})^\top ], \quad \tilde{\bs{\mathcal{I}}}_{\bs{\lambda} \bs{\eta}} := \E[\bs{\tilde{s}}_{\bs{\lambda\lambda}}(\bs{W}) \bs{\tilde{s}}_{\bs{\eta}}(\bs{W})^\top ], \\
    \tilde{\bs{\mathcal{I}}}_{\bs{\eta} \bs{\lambda}}  := \tilde{\bs{\mathcal{I}}}_{\bs{\lambda} \bs{\eta}}^\top, \quad \tilde{\bs{\mathcal{I}}}_{\bs{\lambda\lambda}} := \E[\bs{\tilde{s}}_{\bs{\lambda\lambda}}(\bs{W}) \bs{\tilde{s}}_{\bs{\lambda\lambda}}(\bs{W})^\top ], \quad \tilde{\bs{\mathcal{I}}}_{\bs{\bs\lambda,\bs\eta}} := \tilde{\bs{\mathcal{I}}}_{\bs{\lambda \lambda}} - \tilde{\bs{\mathcal{I}}}_{\bs{\lambda \eta}} \tilde{\bs{\mathcal{I}}}_{\bs\eta}^{-1} \tilde{\bs{\mathcal{I}}}_{\bs{\eta \lambda}}.
\end{split}
\end{equation} 
Then, the asymptotic distribution of the normalized score function is given by
\[
 \tilde{\bs S}_n: = \frac{1}{\sqrt{n}} \sum_{i=1}^n \tilde{\bs s}(\bs W_i) \overset{d}{\to} \tilde {\bs S} \sim N(\bs 0,   \tilde{\bs{\mathcal{I}}}),
\]
where, in view of (\ref{eq:s_tilde}),  $ \tilde {\bs S} $ may be partitioned as $ \tilde {\bs S} =( \tilde {\bs S}_{\bs\eta}\t, \tilde {\bs S}_{\bs\lambda\bs\lambda}\t)\t$ with $n^{-1/2} \sum_{i=1}^n \tilde{\bs s}_{\bs\eta}(\bs W_i)\overset{d}{\to} \tilde {\bs S}_{\bs\eta}$ and $n^{-1/2} \sum_{i=1}^n \tilde{\bs s}_{\bs\lambda\bs\lambda}(\bs W_i)\overset{d}{\to} \tilde {\bs S}_{\bs\lambda\bs\lambda}$ .

Let $\bs{\tilde{S}}_{\bs\lambda,\bs\eta} := (\bs{{S}}_{\bs\lambda,\bs\eta}^1, \ldots, \bs{{S}}_{\bs\lambda,\bs\eta}^{M_0} )^\top:= \bs{\tilde{S}}_{\bs\lambda\bs\lambda} - 
\tilde{\bs{\mathcal{I}}}_{\bs\lambda\bs\eta} \tilde{\bs{\mathcal{I}}}_{\bs\eta} ^{-1} \bs{\tilde{S}}_{\bs\eta} 
 \sim N(0,   \tilde{\bs{\mathcal{I}}}_{\bs\lambda,\bs\eta})$ be a $\R^{M_0 (q+2)(q+1)/2}$-valued random vector. For $h=1,2,...,M_0$,  define $\tilde{\bs{\mathcal{I}}}_{\bs\lambda,\bs\eta}^h := \E[\bs{{S}}_{\bs\lambda,\bs\eta}^h ( \bs{{S}}_{\bs\lambda,\bs\eta}^h)^\top]$ and $\bs{{G}}_{\bs\lambda,\bs\eta}^h := ({\bs{\mathcal{I}}}_{\bs\lambda,\bs\eta}^h)^{-1} \bs{{S}}_{\bs\lambda,\bs\eta}^h$.

Define $\hat{\bs t}^h_{\bs\lambda}  $ analogously to $\hat{\bs t}_{\bs\lambda} $ as
\begin{equation}
     \label{eq:t_h}
\begin{split}
r^h_{\bs\lambda} (\hat{\bs{t}}^h_{\bs{\lambda}} ) = \inf_{{\bs{t}}^h_{\bs{\lambda}}\in \Lambda_{\bs\lambda}} r^h({\bs{t}}^h_{\bs{\lambda}}); \quad
r^h_{\bs\lambda} ({\bs{t}}^h_{\bs{\lambda}}) := ({\bs{t}}^h_{\bs{\lambda}} - \bs{{G}}^h_{\bs\lambda,\bs\eta} )^\top  {\bs{\mathcal{I}}}_{\bs\lambda,\bs\eta}^h({\bs{t}}^h_{\bs{\lambda}} - \bs{{G}}^h_{\bs\lambda,\bs\eta} )\quad\text{for $h=1,2,...,M_0$}.
\end{split}
\end{equation}

The local quadratic-form approximation of the log-likelihood function  $LR^{M_0,h}_n$ around $\Upsilon^*_{1h} \subset \Theta_{\vartheta_{M_0 + 1}}$  has an identical structure to the approximation that we derive in Section \ref{sec:3_test_1} in testing $H_{01}$ in the test of homogeneity.  Consequently, we can show that $PLR^{M_0,h}_n\overset{d}{\to} (\hat{\bs{t}}^h_{\bs\lambda})^\top \bs{\mathcal{I}}^h_{\bs\lambda,\bs\eta}  \hat{\bs{t}}^h_{\bs\lambda}$.  Then, given (\ref{eq:LR_M0_max}), the asymptotic null distribution of the PLRTS for testing $H_{01}$ is given by the maximum over $(\hat{\bs{t}}^h_{\bs\lambda})^\top \bs{\mathcal{I}}^h_{\bs\lambda,\bs\eta}  \hat{\bs{t}}^h_{\bs\lambda}$s for $h=1,2,..., M_0$.

\begin{assumption}\label{assumption:3}
    (a) $\alpha_j^*  \in (\epsilon, 1 - \epsilon)$ for $j = 1,\ldots, M_0$. (b) $\bs{\tilde{\mathcal{I}}}$  is non-singular. (c) $a_n$ in (\ref{penalty2}) satisfies  $a_n =O(1)$. 
\end{assumption}

\begin{proposition}\label{prop:tm0_distribution}
  Suppose that Assumptions \ref{assumption:1}--\ref{assumption:3} are satisfied. Then, under the null hypothesis $H_0: M = M_0$,   $PLR_n({M_0})  \overset{d}{\to}   \max \{(\hat{\bs{t}}^1_{\bs\lambda})^\top \bs{\mathcal{I}}^1_{\bs\lambda,\bs\eta}  \hat{\bs{t}}^1_{\bs\lambda} ,   \ldots, (\hat{\bs{t}}^{M_0}_{\bs\lambda})^\top \bs{\mathcal{I}}^{M_0}_{\bs\lambda,\bs\eta}  \hat{\bs{t}}^{M_0}_{\bs\lambda}  \}$.
\end{proposition}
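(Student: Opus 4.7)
The plan is to extend the analysis of Proposition \ref{prop:t2_distribution} locally around each null subspace $\Upsilon^*_{1h}$ for $h=1,\ldots,M_0$, and then combine the $M_0$ local results through joint convergence of a common score vector. By the decomposition $\Theta_{\bs{\vartheta}_{M_0+1}}(\epsilon)=\cup_{h=1}^{M_0}\hat{\bs\Psi}_h^*$ and the definition (\ref{eq:LR_M0_max}), it suffices to analyze each $PLR_n^{M_0,h}$ jointly and then apply the continuous mapping theorem to the max functional.

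First, fix $h$. Proposition \ref{prop:vartheta_convergence_M} ensures $\hat{\bs\vartheta}^h_{M_0+1}$ lies in a shrinking neighborhood of $\Upsilon^*_{1h}$, where the non-split components $j\notin\{h,h+1\}$ are separated from each other and from component $h$ by Assumption \ref{assumption:3}(a) and the construction of $D_j^*$. I would apply the ``merge--split'' reparameterization only to the $h$-th and $(h+1)$-th components,
\begin{equation*}
\bs\lambda^h := \bs\theta_h-\bs\theta_{h+1},\qquad \bs\nu^h := \tau_h \bs\theta_h + (1-\tau_h)\bs\theta_{h+1},\qquad \tau_h := \frac{\alpha_h}{\alpha_h+\alpha_{h+1}},
\end{equation*}
with total mass $\alpha_h+\alpha_{h+1}$, while leaving $(\alpha_j,\bs\theta_j)_{j\notin\{h,h+1\}}$ and $\bs\gamma$ in their original parameterization; collect these into $\bs\eta^h$. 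Under $H_{0,1h}$, $\bs\lambda^h=\bs 0$, $\tau_h\in(0,1)$ is not identified, and a direct calculation (mirroring (\ref{eq:g_nabla})--(\ref{eq:s2})) shows $\nabla_{\bs\lambda^h}l=\bs 0$ and $\nabla_{\bs\lambda^h(\bs\lambda^h)^\top}l = \alpha_h^*\tau_h(1-\tau_h)\,\bs s^h_{\bs\lambda\bs\lambda}(\bs W)/f_{M_0}^*$ scaled appropriately, while $\nabla_{\bs\eta^h}l$ reproduces the $\tilde{\bs s}_{\bs\eta}(\bs W)$ components of (\ref{eq:s_tilde}).

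Second, I would carry out a fourth-order Taylor expansion of $L_{M_0+1,n}$ around the limit in $\Upsilon_{1h}^*$, exactly mirroring (\ref{eq:LR0})--(\ref{eq:LR1}), with $\bs t^h(\bs\vartheta)$ stacking $\bs\eta^h-\bs\eta^{h*}$ and $\alpha_h^*\tau_h(1-\tau_h)\,v(\bs\lambda^h)$ analogously to (\ref{eq:t_1}). The analog of Proposition \ref{prop:expansion} applies: the remainder $R_n^h$ is uniformly negligible on shrinking neighborhoods, $\bs{\mathcal{I}}_n^h\overset{p}{\to}\tilde{\bs{\mathcal{I}}}^h$ finite and non-singular (using Assumption \ref{assumption:3}(b)), and $\bs S_n^h\overset{d}{\to} N(0,\tilde{\bs{\mathcal{I}}}^h)$. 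Because the feasible set for $\sqrt n\bs t^h$ is locally approximated by the cone $\mathbb R^{\dim\bs\eta^h}\times\Lambda_{\bs\lambda}$, Lemma 2 of \cite{Andrews1999} yields
\begin{equation*}
\max_{\bs\vartheta\in\hat{\bs\Psi}_h^*}2\{L_{M_0+1,n}(\bs\vartheta)-L_{M_0,n}(\hat{\bs\vartheta}_{M_0})\}\overset{d}{\to}(\hat{\bs t}^h_{\bs\lambda})^\top\bs{\mathcal{I}}^h_{\bs\lambda,\bs\eta}\hat{\bs t}^h_{\bs\lambda},
\end{equation*}
where the $\bs\eta^h$-nuisance coordinates are profiled out through the standard projection, leaving the reduced information $\bs{\mathcal{I}}^h_{\bs\lambda,\bs\eta}$. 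The same quadratic approximation delivers $\hat{\bs\theta}_h,\hat{\bs\theta}_{h+1}-\bs\theta_h^*=O_p(n^{-1/4})$, and since $\hat\sigma_{0,j}^2$ is root-$n$ consistent, a Taylor expansion of (\ref{penalty2}) with $a_n=O(1)$ by Assumption \ref{assumption:3}(c) gives $\tilde p_n(\hat{\bs\vartheta}^h_{M_0+1})=o_p(1)$, so $PLR_n^{M_0,h}$ shares the same limit as the unpenalized local LRTS.

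Third, for the joint limit, observe that all $M_0$ local statistics are functionals of one and the same empirical process $\tilde{\bs S}_n=n^{-1/2}\sum_i\tilde{\bs s}(\bs W_i)\overset{d}{\to}\tilde{\bs S}\sim N(\bs 0,\tilde{\bs{\mathcal{I}}})$ defined in (\ref{eq:s_tilde}); the residualizations producing $\bs S^h_{\bs\lambda,\bs\eta}$ and projections defining $\hat{\bs t}^h_{\bs\lambda}$ are continuous maps of this common Gaussian vector, so the $M_0$-dimensional vector $(PLR_n^{M_0,1},\ldots,PLR_n^{M_0,M_0})$ converges jointly to $((\hat{\bs t}^1_{\bs\lambda})^\top\bs{\mathcal{I}}^1_{\bs\lambda,\bs\eta}\hat{\bs t}^1_{\bs\lambda},\ldots,(\hat{\bs t}^{M_0}_{\bs\lambda})^\top\bs{\mathcal{I}}^{M_0}_{\bs\lambda,\bs\eta}\hat{\bs t}^{M_0}_{\bs\lambda})$. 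The continuous mapping theorem applied to the maximum functional finishes the proof. The main technical obstacle is ensuring the local quadratic approximation is uniformly valid on each $\hat{\bs\Psi}_h^*$ in spite of estimation error in the data-driven neighborhoods $\hat D_h^*$; this requires that the non-split components stay bounded away from component $h$ with probability approaching one, which is delivered by the strict separation $\mu_1^*<\cdots<\mu_{M_0}^*$ together with root-$n$ consistency of the preliminary $M_0$-component estimator used to construct $\hat{\bs\Psi}_h^*$.
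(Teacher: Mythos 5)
Your proposal is correct and follows essentially the same route as the paper: a local merge--split reparameterization of components $h$ and $h+1$ with $\tau_h=\alpha_h/(\alpha_h+\alpha_{h+1})$ unidentified under $H_{0,1h}$, a fourth-order expansion reducing each $PLR_n^{M_0,h}$ to the quadratic-form/cone-projection limit of Proposition \ref{prop:t2_distribution}, the penalty vanishing via $a_n=O(1)$ and the $n^{-1/4}$ rate, and joint convergence of the $M_0$ local statistics through the common score vector $\tilde{\bs S}_n$ followed by the continuous mapping theorem applied to the maximum. No substantive gaps.
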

The asymptotic null distribution of $PLR_n({M_0})$ is non-standard, but it is straightforward to simulate the random variable from the asymptotic null distribution using the estimates.  Specifically, 
we simulate a draw of $\bs{\tilde{S}}_{\bs\lambda,\bs\eta}= (\bs{{S}}_{\bs\lambda,\bs\eta}^1, \ldots, \bs{{S}}_{\bs\lambda,\bs\eta}^{M_0} )^\top$ from $N(0,   \hat{\tilde{\bs{\mathcal{I}}}}_{\bs\lambda,\bs\eta})$, where $\hat{\tilde{\bs{\mathcal{I}}}}_{\bs\lambda,\bs\eta}$ is a sample analogue estimator of ${\tilde{\bs{\mathcal{I}}}}_{\bs\lambda,\bs\eta}$. Then, compute $\bs{{G}}_{\bs\lambda,\bs\eta}^h = (\hat{\bs{\mathcal{I}}}_{\bs\lambda,\bs\eta}^h)^{-1} \bs{{S}}_{\bs\lambda,\bs\eta}^h$ and obtain $\hat{\bs t}^h_{\bs\lambda}$ analogously to (\ref{eq:t_h}) using an estimator of $\bs{\mathcal{I}}_{\bs\lambda,\bs\eta}^h$ for $h=1,...,M_0$, and a simulated random draw is computed as $  \max \{(\hat{\bs{t}}^1_{\bs\lambda})^\top \hat{\bs{\mathcal{I}}}^1_{\bs\lambda,\bs\eta}  \hat{\bs{t}}^1_{\bs\lambda} ,   \ldots, (\hat{\bs{t}}^{M_0}_{\bs\lambda})^\top \hat{\bs{\mathcal{I}}}^{M_0}_{\bs\lambda,\bs\eta}  \hat{\bs{t}}^{M_0}_{\bs\lambda}  \}$. Appendixes \ref{sec:appendix_score_1} and \ref{sec:appendixb:score_m} present an expression for the score functions using Hermit polynomials.

\section{EM test for $H_0: M = M_0$ against $H_A: M = M_0 +1$}\label{sec:5_EM}
This section develops an EM test used for testing the hypothesis $H_0: M=M_0$ against the alternative hypothesis $H_A: M = M_0 + 1$. A key limitation of the PLRT, as discussed in the previous section, is that the computation of mixing probabilities, denoted as $\alpha_j$, is subject to a hard constraint, which is dictated by an arbitrary choice of bounds. The EM test, in contrast, circumvents the need to impose an explicit constraint on the $\alpha_j$ values. It achieves this by performing a limited number of EM steps, starting from a predetermined set of $\alpha_j$ values. The EM test approach offers certain advantages, including computational simplicity and less stringent assumptions. 

Let $\mathcal{T}$ be a finite set of numbers in $(0,0.5]$ with $0.5\in \mathcal{T}$,  let $p(\tau)\leq 0$ be a penalty term that is continuous in $\tau$, $p(0.5)=0$, and let $p(\tau)\rightarrow -\infty$ as $
\tau$ goes to 0.  Specifically, we choose  $$ p(\tau) := \log(2\min\{\tau,1-\tau\}).$$

For each $\tau_0 \in \mathcal{T}$, let  $\tau^{(1)} (\tau_0) = \tau_0$, and define the restricted PMLE   by
\[
   \bs{\vartheta}_{M_0 + 1}^{h(1)}(\tau_0)  = \underset{\bs{\vartheta}_{M_0 + 1} \in \Theta_{\bs{\vartheta}_{M_0 +1}}^h(\tau)  }{\arg\max}  {PL}_n(\bs{\vartheta}_{M_0 + 1},\tau_0), 
   \]
where $\Theta_{\bs{\vartheta}_{M_0 +1}}^h(\tau_0) := \{ \bs\theta \in \hat{\bs{\Psi}}_h : \alpha_h / (\alpha_h + \alpha_{h+1} ) =  \tau_0 \}$ and
$$
 {PL}_n(\bs{\vartheta}_{M_0 + 1},\tau):=
   L_{M_0+1,n}(\bs{\vartheta}_{M_0 + 1}) + \tilde p_n(\bs{\vartheta}_{M_0+1}) +p(\tau).
   $$

Starting from $(\bs{\vartheta}_{M_0 + 1}^{h(1)}(\tau_0),\tau^{h(1)}(\tau_0 ) )$ with $\tau^{h(1)}(\tau_0 )=\tau_0$,  update $\bs{\vartheta}_{M_0 + 1}^{h(k)}(\tau_0) $ and $\tau^{h(k)}(\tau_0)$ by the following generalized EM algorithm.
 Denote the estimators  after the $k$-th round of EM algorithm iteration by $\vartheta_{M_0 + 1}^{h(k)}$ and  $\tau^{h(k)}$. 
In the E-step, for $i=1,\ldots,N$ and $j = 1,\ldots, M_0 + 1 $, compute the weight for observation $i$ and type $j$ as
\begin{equation}
\begin{split}
w_{i j}^{(k)} &  = \left \{ \begin{split} &  \alpha_j^{(k)} f( \bs{W}_{i};\bs{\gamma}^{(k)},\bs{\theta}_j^{(k)}) / f_{M_0+1}(\bs{W}_{i} ;\bs{\vartheta}_{M_0 + 1}^{h(k)}(\tau_0)),j=1,\ldots,h-1,\\
& \alpha_{j-1}^{(k)} f( \bs{W}_{i};\bs{\gamma}^{(k)},\bs{\theta}_j^{(k)})/ f_{M_0+1}(\bs{W}_{i} ;\bs{\vartheta}_{M_0 + 1}^{h(k)}(\tau_0)) ,j=h+2,\ldots,M_0+1,\end{split}\right.\\
 w_{i h}^{(k)}  & =  \tau^{h(k)} \alpha_h^{(k)} f(\bs{W}_{i};\bs{\gamma}^{(k)},\bs{\theta}_h^{(k)}) /  f_{M_0+1}(\bs{W}_{i};\bs{\vartheta}_{M_0 + 1}^{h(k)}(\tau_0)), \\
 w_{i,h+1}^{(k)} & =  (1 - \tau^{h(k)}) \alpha_h^{(k)} f(\bs{W}_{i};\bs{\gamma}^{(k)},\bs{\theta}_{h+1}^{(k)})/ f_{M_0+1}(\bs{W}_{i};\bs{\vartheta}_{M_0 + 1}^{h(k)}(\tau_0)),
\end{split}
\end{equation}
where, for brevity, we drop the superscript $h$ and its dependency on $\tau_0$ from the notations, such as in $w_{ij}^{h(k)}(\tau_0)$.

In the M-step,  we update $\bs{\alpha}$ and $\tau$  by
\begin{align*}
 \alpha_j^{(k+1)}  & = \frac{1}{n} \sum_{i=1}^{n} {w_{ij}^{(k)}} \quad \text{for } j = 1,\ldots,M_0+ 1 \quad \text{and}\\
\tau^{h(k+1)} & = \arg\min_{\tau} \left\{\sum_{i=1}^{n} {w_{ih}^{(k)}}\log(\tau) +\sum_{i=1}^{n} {w_{i,h+1}^{(k)}}  \log(1-\tau) + p(\tau)
 \right\}.
\end{align*} 
We also update $\bs{\theta}_j$ and $\bs\gamma$ as 
\begin{align*}
(\sigma_j^{(k+1)})^2 & = \arg\min_{\sigma_j^2} \left\{ \sum_{i=1}^n w_{i}^{j(k)} \sum_{t=1}^T  (y_{it} - \mu_j^{(k+1)} -  \bs{z}_{it}^\top \bs{\gamma}^{(k+1)} - \bs{x}_{it}^\top \bs{\beta}_j^{(k+1)}  )^2 +  p_n(\sigma_j^2) \right\},\\
\bs{\gamma}^{(k+1)} & = \left(\sum_{i=1}^n \sum_{t=1}^T \bs{z}_{it} \bs{z}_{it}^\top \right)^{-1} \left(\sum_{i=1}^n \sum_{t=1}^T \bs{z}_{it} \left(y_{it} - \sum_{j=1}^{M_0+1} w_{ij}^{(k)} \tilde{\bs{x}}_{it}^\top \begin{pmatrix}
\mu_j^{(k)}\\
\bs{\beta}_j^{(k)}
\end{pmatrix} \right) \right) , \quad\text{and} \\
\begin{pmatrix}
\mu_j^{(k+1)}\\
\bs{\beta}_j^{(k+1)}
\end{pmatrix} & = \left(\sum_{i=1}^n w_{ij}^{(k)} \sum_{t=1}^T \tilde{\bs{x}}_{it} \tilde{\bs{x}}_{it}^\top \right)^{-1} \left(\sum_{i=1}^n w_{ij}^{(k)} \sum_{t=1}^T \tilde{\bs{x}}_{it} (y_{it} - \bs{z}_{it}^\top \bs{\gamma}^{(k+1)}) \right),
\end{align*}
where $\tilde{\bs{x}}_{it} = (1,\bs{x}_{it}^\top)^\top$.
In the updating procedure,  $\bs{\vartheta}_{M_0 + 1}^{h(k+1)}(\tau_0)$ is not restricted to be in  $\hat{\bs{\Psi}}_h^*$.

For each $\tau_0 \in \mathcal{T}$ and each step $k$, define
\begin{equation}
	M_n^{h(k)}(\tau_0) : = 2\left\{ PL_n(\bs{\vartheta}_{M_0 + 1}^{h(k)}(\tau_0),\tau^{h(k)}(\tau_0))  - L_{M_0,n}(\hat{\bs{\vartheta}}_{M_0}) \right\}.
\end{equation}

With a predetermined finite number  $K$, define the \textit{local}   EM test statistic by taking the maximum of $M_n^{h(k)}(\tau_0)$ across different values of $\tau_0$ as
\begin{equation}
EM_n^{h} : = \max \{ M^{h(K)}_n(\tau_0) : \tau_0 \in \mathcal{T} \}.
\end{equation}
The test statistic $EM_n^{h}$ tests $H_{0,1h}: \bs\theta_h=\bs\theta_{h+1}$ and has a power against the local alternative that splits the $h$-th component of  the null $M_0$-component model into two different components. To achieve power against a wide range of local alternatives,
we consider the   EM test statistic that takes the maximum of $M_0$ local   EM test statistics:
\begin{equation}
EM_n(M_0) := \max \{ EM_n^{1(K)},\ldots, EM_n^{M_0(K)} \}.
\end{equation}

\begin{proposition}\label{prop:KS18_prop4}
 Suppose that Assumptions \ref{assumption:1}--\ref{assumption:3} hold and $\{0.5\}\in \mathcal{T}$. Then, under the null hypothesis $H_0: M = M_0$,  for any finite $K$, $EM_n (M_0)\overset{d}{\to}  \max \{(\hat{\bs{t}}^1_{\bs\lambda})^\top \bs{\mathcal{I}}^1_{\bs\lambda,\bs\eta}  \hat{\bs{t}}^1_{\bs\lambda} ,   \ldots, (\hat{\bs{t}}^{M_0}_{\bs\lambda})^\top \bs{\mathcal{I}}^{M_0}_{\bs\lambda,\bs\eta}  \hat{\bs{t}}^{M_0}_{\bs\lambda}  \}$. 
\end{proposition}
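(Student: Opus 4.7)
\textbf{Proof proposal for Proposition \ref{prop:KS18_prop4}.}
The plan is to show that, for every fixed $h \in \{1,\ldots,M_0\}$ and every $\tau_0 \in \mathcal{T}$, the iterate $M_n^{h(k)}(\tau_0)$ admits the same local quadratic-form approximation as the local PLRTS $PLR_n^{M_0,h}$ analyzed in Section \ref{sec:4_test}, so that the asymptotic distribution of $EM_n(M_0)$ coincides with the one in Proposition \ref{prop:tm0_distribution}. The argument mirrors the strategy used for the EM test in \cite{chenli09as} and \cite{Kasahara2015a}, adapted to the panel reparameterization introduced in Section \ref{sec:3_test_1}.

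First, I would establish consistency and a rate for the $k=1$ iterate. The restricted PMLE $\bs{\vartheta}_{M_0+1}^{h(1)}(\tau_0)$ maximizes $PL_n$ over $\Theta_{\bs{\vartheta}_{M_0+1}}^h(\tau_0) \subset \hat{\bs\Psi}_h$ subject to the extra equality constraint $\alpha_h/(\alpha_h+\alpha_{h+1})=\tau_0$; adapting Proposition \ref{prop:vartheta_convergence_M} (the added constraint does not eliminate points in $\Upsilon_{1h}^*$ since the equality is compatible with $\bs\theta_h=\bs\theta_{h+1}$) yields consistency toward $\Upsilon_{1h}^*$. The same fourth-order Taylor expansion used in Proposition \ref{prop:t2_distribution}(a) then gives $\hat{\bs\theta}_j^{h(1)} - \bs\theta_j^* = O_p(n^{-1/4})$ for $j=h,h+1$ and $\sqrt{n}$-rate for the remaining parameters, so the local parameter $\sqrt{n}\bs t(\hat{\bs\psi}^{h(1)},\alpha)$ is stochastically bounded.

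Next I would apply the local quadratic-form approximation (\ref{eq:LR0})--(\ref{eq:LR1}) to $M_n^{h(1)}(\tau_0)$. The constraint $\alpha_h/(\alpha_h+\alpha_{h+1})=\tau_0$ enters only through the scalar factor $\alpha_h\alpha_{h+1}=\tau_0(1-\tau_0)(\alpha_h+\alpha_{h+1})^2$ in $\bs t_{\bs\lambda}(\bs\lambda,\alpha)=\alpha(1-\alpha)v(\bs\lambda)$, whereas the set of feasible values of $v(\bs\lambda)$ is the fixed cone $v(\mathbb{R}^{q+2})$ irrespective of $\tau_0\in(0,1)$. Consequently, the projection step in (\ref{eq:t_lambda_def})/(\ref{eq:t_h}) produces the same limit $(\hat{\bs t}^h_{\bs\lambda})^\top \bs{\mathcal{I}}^h_{\bs\lambda,\bs\eta}\hat{\bs t}^h_{\bs\lambda}$ for every $\tau_0\in(0,1)$, and the penalty $p(\tau_0)$ is $O(1)$ so it contributes $o_p(1)$ relative to the $O_p(1)$ quadratic form. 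Thus $M_n^{h(1)}(\tau_0) \overset{d}{\to} (\hat{\bs t}^h_{\bs\lambda})^\top \bs{\mathcal{I}}^h_{\bs\lambda,\bs\eta}\hat{\bs t}^h_{\bs\lambda}$ for each $\tau_0\in\mathcal{T}$.

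For the subsequent EM iterations, I would exploit the ascent property of the generalized EM updates: $PL_n(\bs{\vartheta}_{M_0+1}^{h(k+1)}(\tau_0),\tau^{h(k+1)}(\tau_0))\geq PL_n(\bs{\vartheta}_{M_0+1}^{h(k)}(\tau_0),\tau^{h(k)}(\tau_0))$ for every $k\geq 1$, so $M_n^{h(k)}(\tau_0)$ is monotone non-decreasing in $k$. Combined with the upper bound $M_n^{h(k)}(\tau_0)\leq 2\{PL_n(\hat{\bs\vartheta}_{M_0+1}^h,\tau^h)-L_{M_0,n}(\hat{\bs\vartheta}_{M_0})\}+o_p(1)\leq PLR_n^{M_0,h}+O_p(1)$, the iterates stay in the $n^{-1/4}$ neighborhood of $\Upsilon_{1h}^*$, and the same quadratic approximation applies throughout. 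Since $0.5\in\mathcal{T}$ renders the $\tau$-penalty zero on at least one starting value and the quadratic limit is independent of $\tau_0$, $\max_{\tau_0\in\mathcal{T}} M_n^{h(K)}(\tau_0)$ also converges to $(\hat{\bs t}^h_{\bs\lambda})^\top \bs{\mathcal{I}}^h_{\bs\lambda,\bs\eta}\hat{\bs t}^h_{\bs\lambda}$. Finally, applying the continuous mapping theorem to the maximum over $h=1,\ldots,M_0$ (using the joint convergence of $(\tilde{\bs S}_{\bs\lambda,\bs\eta}^1,\ldots,\tilde{\bs S}_{\bs\lambda,\bs\eta}^{M_0})$ established in the proof of Proposition \ref{prop:tm0_distribution}) yields the claimed limit.

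The main obstacle I expect is verifying rigorously that \emph{every} iterate $\bs{\vartheta}_{M_0+1}^{h(k)}(\tau_0)$ for $k\leq K$ remains in the $n^{-1/4}$ neighborhood of $\Upsilon_{1h}^*$, so that the quadratic approximation is valid uniformly in $k$. Once the $k=1$ iterate is shown to be in this neighborhood, the ascent property together with the fact that $PL_n$ grows unboundedly only if one approaches singular densities (ruled out by the penalty $p_n$) should pin the iterates in the same neighborhood; but the book-keeping of the intermediate M-step updates for $\bs\gamma$, $\mu_j$, $\sigma_j^2$ and $\bs\beta_j$ is delicate and must be carried out explicitly to confirm that each generalized EM update keeps the reparameterized $\bs\lambda$ component at the same stochastic order.
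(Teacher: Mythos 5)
Your overall strategy (consistency and a quadratic approximation at $k=1$, the ascent property of the generalized EM step, a sandwich argument for finite $K$, then the maximum over $\tau_0$ and $h$) is the same as the paper's, but two steps do not go through as written. First, the claim that the penalty $p(\tau_0)$ ``contributes $o_p(1)$ relative to the $O_p(1)$ quadratic form,'' so that $M_n^{h(1)}(\tau_0)$ converges in distribution to $(\hat{\bs t}^h_{\bs\lambda})^\top\bs{\mathcal{I}}^h_{\bs\lambda,\bs\eta}\hat{\bs t}^h_{\bs\lambda}$ for \emph{each} $\tau_0$, is false: $p(\tau_0)=\log(2\min\{\tau_0,1-\tau_0\})$ is a fixed, strictly negative constant for $\tau_0<0.5$, of exactly the same order as the quadratic form, and the correct statement is $M_n^{h(1)}(\tau_0)=\omega_n^h+p(\tau_0)+o_p(1)$, where $\omega_n^h$ denotes the sample counterpart of the quadratic form. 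The stated limit for $EM_n^h$ is recovered only because $0.5\in\mathcal{T}$ and $p(0.5)=0$, so that maximizing over $\tau_0$ removes the penalty; your final conclusion survives, but the intermediate claim as written does not.

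Second, and more seriously, the localization of the iterates for $k=2,\ldots,K$ --- which you correctly flag as the main obstacle --- is not resolved by the ascent property plus the $\sigma^2$-penalty. After the first step the iterates are no longer restricted to $\hat{\bs\Psi}_h^*$ and, crucially, $\tau^{h(k)}$ is itself updated in the M-step; a high penalized likelihood alone does not prevent the iterate from drifting toward another subset $\Upsilon_{1l}^*$ with $l\neq h$, toward the boundary $\alpha_j=0$, or toward $\tau$ near $0$ or $1$, where both the $p(\tau)$ term and the validity of the quadratic approximation change. The paper closes this gap with Lemma \ref{tau_update}: an explicit computation of the E-step weights and M-step updates showing $\alpha_h^{(K+1)}/[\alpha_h^{(K+1)}+\alpha_{h+1}^{(K+1)}]-\tau_0=o_p(1)$ and, via the quantitative inequality $p(\tilde\tau)\le H''(\bar\tau)(\tau^{(K+1)}-\tilde\tau)^2$, that $\tau^{(K+1)}-\tau_0=o_p(1)$; consistency of $\bs{\vartheta}_{M_0+1}^{h(K)}(\tau_0)$ then follows by induction together with the PMLE consistency theorem. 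Relatedly, your upper bound $M_n^{h(k)}(\tau_0)\le PLR^{M_0,h}_n+O_p(1)$ carries $O_p(1)$ slack and cannot pin down a limiting distribution. The paper instead bounds $PL_n(\bs{\vartheta}_{M_0+1}^{h(K)}(\tau_0),\tau^{h(K)}(\tau_0))$ from above by the value at the \emph{local} maximizer under the constraint $\alpha_h/(\alpha_h+\alpha_{h+1})=\tau^{h(K)}(\tau_0)$, which equals $\omega_n^h+p(\tau_0)+o_p(1)$ precisely because $\tau^{h(K)}(\tau_0)\to_p\tau_0$; this matches the lower bound from the $k=1$ step up to $o_p(1)$ and closes the sandwich. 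You would need to supply an argument of the type of Lemma \ref{tau_update} for your proof to be complete.
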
 

Therefore, the asymptotic null distribution of the EM test statistic $EM_n(M_0)$ is the same as that of the PLRTS.

\section{Asymptotic distribution under local alternatives }\label{sec:local}
We derive the asymptotic distribution of the PLRTS and EM test statistic under local alternatives. For brevity, we focus on testing $H_0: M=1$ against $H_A: M=2$. Consider the following local alternative to the homogeneous model $f(\bs w;\bs\gamma^*,\bs\theta^*)$ with $\bs\theta^*=(\mu^*,\sigma^{*2},(\bs{\beta}^*)\t)\t$. For brevity, we omit the common parameter $\bs \gamma$ in this section.
In a reparameterized parameter, $\bs\psi^* = ((\bs\nu^*)\t, (\bs{\lambda}^*)\t)\t$.  
For $\alpha^*\in (0,1)$  and a local parameter $\bs h=(\bs {h}_{\bs\nu}\t,\bs {h}_{\bs\lambda}\t)\t$ with $\bs h_{\bs \lambda}\in v(\bs\Theta_{\bs\lambda})$,  we consider a sequence of contiguous local alternatives 
$(\alpha_n,\bs{\psi}_n\t)\t = (\alpha_n,\bs\nu_n\t,\bs\lambda_n\t)\in  \bs{\Theta}_\alpha\times \bs{\Theta}_{\bs\nu}\times\bs{\Theta}_{\bs\lambda}$ such that, with $\bs{t}_{\bs\lambda}(\bs\lambda,\alpha)$ given by (\ref{eq:t_1}),
\begin{equation}\label{eq:h}
\bs {h_\nu}= \sqrt{n}(\bs\nu_n -\bs\nu^*),\quad \bs {h_\lambda} =\sqrt{n} \bs{t}_{\bs\lambda}(\bs\lambda_n,\alpha_n),\quad \text{and} \quad \alpha_n = \alpha^* + o(1).
\end{equation}
Equivalently, the non-reparameterized contiguous local alternatives are given by  
\begin{equation}\label{eq:local-alt}
\bs\theta_{1,n} =\bs \nu_n + (1-\alpha_n) \bs\lambda_n\quad\text{and}\quad \bs\theta_{2,n} = \bs\nu_n - \alpha_n \bs\lambda_n
\end{equation}
for  $\bs\nu_n = \bs \nu^* + n^{-1/2} \bs{h}_{\nu}$ and
$\bs\lambda_n= (\lambda_{1,n},\lambda_{2,n},....,\lambda_{q+2,n})\t$ with
\begin{align*}
\lambda_{j,n} = n^{-1/4} (\alpha_n(1-\alpha_n))^{-1/2} h_{\lambda,j}\quad \text{for $j=1,...,q+2$},
\end{align*}
where $\bs{h}_{\bs\lambda}=(h_{\lambda,1}^2,...,h_{\lambda,q+2}^2,h_{\lambda,1}h_{\lambda,2},....,h_{\lambda,q+1}h_{\lambda,q+2})\t$. The local alternatives are of order $n^{1/4}$ rather than $n^{1/2}$. See 
  the discussion following Proposition \ref{prop:t2_distribution}.

The following proposition provides the asymptotic distribution of the PLRT and EM test statistics under contiguous local alternatives.
\begin{proposition}\label{prop:local-power}
Suppose that the assumptions in Proposition \ref{prop:KS18_prop4} hold for $M_0=1$. Consider a sequence of contiguous local alternatives $\bs\vartheta_{2,n} = (\alpha_n,\bs\theta_{1,n}^\top,\bs\theta_{2,n}^\top)\t$ given in (\ref{eq:local-alt}), where  $\alpha_n$ and $\bs\lambda_n$ satisfy (\ref{eq:h}). Then, under $H_{1,n}: \bs\vartheta=  \bs\vartheta_{2,n}$, we have $PLR_n(1)$,
$EM_n(1)\overset{d}{\rightarrow} (\tilde{\bs{t}}_{\bs\lambda} )\t \bs{\mathcal{I}}_{\bs\lambda,\bs\eta}  \tilde{\bs{t}}_{\bs\lambda}$,  
where $\tilde{\bs t}_{\bs\lambda}$ has the same distribution as $\hat {\bs t}_{\bs\lambda}$ in Proposition \ref{prop:t2_distribution} but $\bs{G}_{\bs{\lambda,\bs{\eta}}}$ is replaced with $ ( \bs{\mathcal{I}}_{\bs\lambda,\bs\eta}  )^{-1} \bs{S}_{\bs\lambda,\bs\eta} + \bs h_{\bs\lambda}$.
\end{proposition}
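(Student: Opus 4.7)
The plan is to prove the result via Le Cam's third lemma, transferring the null-distribution analysis of Propositions \ref{prop:t2_distribution} and \ref{prop:KS18_prop4} to the local alternative by showing that the only effect of the drift is a deterministic mean shift in the centered score vector.

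First, I would establish mutual contiguity of $\{H_{1,n}\}$ and $\{H_0\}$ via a local asymptotic normality expansion. Applying the fourth-order Taylor expansion (\ref{eq:LR0}) at the reparameterized point $(\bs\psi_n,\alpha_n)$ and using $\sqrt{n}\bs t(\bs\psi_n,\alpha_n)=\bs h:=(\bs h_{\bs\eta}\t,\bs h_{\bs\lambda}\t)\t$ yields
\begin{equation*}
2\{L_n(\bs\psi_n,\alpha_n)-L_n(\bs\psi^*,\alpha_n)\} = 2\bs h\t \bs S_n - \bs h\t\bs{\mathcal{I}}_n\bs h + R_n(\bs\psi_n,\alpha_n),
\end{equation*}
with $R_n = o_p(1)$ by Proposition \ref{prop:expansion}(a). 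Combined with $\bs S_n\overset{d}{\to}\bs S\sim N(\bs 0,\bs{\mathcal{I}})$ and $\bs{\mathcal{I}}_n\overset{p}{\to}\bs{\mathcal{I}}$, the log-likelihood ratio converges to $N(-\bs h\t\bs{\mathcal{I}}\bs h/2,\,\bs h\t\bs{\mathcal{I}}\bs h)$, so Le Cam's first lemma delivers mutual contiguity between $\{H_{1,n}\}$ and $\{H_0\}$.

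Second, Le Cam's third lemma implies that under $H_{1,n}$, $\bs S_n\overset{d}{\to}\bs S+\bs{\mathcal{I}}\bs h\sim N(\bs{\mathcal{I}}\bs h,\bs{\mathcal{I}})$. Using the block partition in Section \ref{sec:3_test_1}, the orthogonalized component $\bs S_{\bs\lambda,\bs\eta,n}:=\bs S_{\bs\lambda\bs\lambda,n}-\bs{\mathcal{I}}_{\bs\lambda\bs\eta}\bs{\mathcal{I}}_{\bs\eta}^{-1}\bs S_{\bs\eta,n}$ picks up the shift
\begin{equation*}
(\bs{\mathcal{I}}_{\bs\lambda\bs\eta}\bs h_{\bs\eta}+\bs{\mathcal{I}}_{\bs\lambda\bs\lambda}\bs h_{\bs\lambda}) - \bs{\mathcal{I}}_{\bs\lambda\bs\eta}\bs{\mathcal{I}}_{\bs\eta}^{-1}(\bs{\mathcal{I}}_{\bs\eta}\bs h_{\bs\eta}+\bs{\mathcal{I}}_{\bs\eta\bs\lambda}\bs h_{\bs\lambda}) = \bs{\mathcal{I}}_{\bs\lambda,\bs\eta}\bs h_{\bs\lambda},
\end{equation*}
so the $\bs h_{\bs\eta}$ contributions cancel by block orthogonalization and $\bs G_{\bs\lambda,\bs\eta,n}=\bs{\mathcal{I}}_{\bs\lambda,\bs\eta}^{-1}\bs S_{\bs\lambda,\bs\eta,n}\overset{d}{\to}\bs{\mathcal{I}}_{\bs\lambda,\bs\eta}^{-1}\bs S_{\bs\lambda,\bs\eta}+\bs h_{\bs\lambda}$ under $H_{1,n}$. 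This is precisely the displacement asserted in the proposition.

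Finally, I would port the quadratic-form analysis behind Propositions \ref{prop:t2_distribution} and \ref{prop:KS18_prop4} to $H_{1,n}$. Consistency of the PMLE (Proposition \ref{prop:vartheta_convergence}) and the rate $\bs t(\hat{\bs\psi},\alpha)=O_p(n^{-1/2})$ are $o_p(1)$ statements under $H_0$, and so carry over to $H_{1,n}$ automatically by contiguity. The quadratic representation (\ref{eq:LR1}) therefore still governs $PLR_n$ under $H_{1,n}$, and applying Andrews' Lemma 2 to project $\bs G_{\bs\lambda,\bs\eta,n}$ onto the cone $\Lambda_{\bs\lambda}$ delivers $PLR_n(1)\overset{d}{\to}(\tilde{\bs t}_{\bs\lambda})\t\bs{\mathcal{I}}_{\bs\lambda,\bs\eta}\tilde{\bs t}_{\bs\lambda}$ with the shift advertised in the statement. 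The EM test statistic inherits the same limit because Proposition \ref{prop:KS18_prop4} reduces $EM_n(1)$ to the same quadratic-form target via finitely many EM steps that are asymptotically equivalent to the restricted PMLE, and this equivalence is preserved under the contiguous sequence. I expect the main obstacle to be the uniform control of $R_n$ jointly along the drifting sequence $\bs\psi_n$ and over the shrinking neighborhood in which the PMLE $\hat{\bs\psi}$ concentrates, so that Proposition \ref{prop:expansion}(a) can be invoked with a radius $\kappa_n\downarrow 0$ that contains both $\bs\psi_n$ and $\hat{\bs\psi}$ with probability tending to one.
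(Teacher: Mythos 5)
Your proposal is correct and follows essentially the same route as the paper's proof: a LAN expansion of $L_n(\bs\psi_n,\alpha^*)-L_n(\bs\psi^*,\alpha^*)$ via (\ref{eq:LR0}) and Proposition \ref{prop:expansion}, Le Cam's first lemma for mutual contiguity, Le Cam's third lemma to shift $\bs S_n$ by $\bs{\mathcal{I}}\bs h$, and then rerunning the Proposition \ref{prop:t2_distribution} argument with $\bs S_{\bs\lambda,\bs\eta}$ replaced by $\bs S_{\bs\lambda,\bs\eta}+\bs{\mathcal{I}}_{\bs\lambda,\bs\eta}\bs h_{\bs\lambda}$. Your explicit block computation showing that the $\bs h_{\bs\eta}$ contribution cancels in the orthogonalized score is a nice touch (the paper simply takes $\bs h=(\bs 0\t,\bs h_{\bs\lambda}\t)\t$), but it does not change the substance of the argument.
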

Importantly, a set of contiguous local alternatives considered in (\ref{eq:local-alt}) excludes a sequence such that $\alpha_n\rightarrow 0$ or $1$.

\section{Sequential hypothesis testing}\label{sec:sht}
 To estimate  the number of components, we sequentially test $H_0: M=r$ against $H_1: M=r+1$  starting from $r=1$, and then $r=2,\ldots,\bar M$, where $\bar M$ is the upper bound for the number of components, which is assumed to be larger than $M_0$. The first value for $r$ that leads to a nonrejection of $H_0$ gives our estimate for $M_0$.  \cite{Robin2000} develop a similar sequential hypothesis test for estimating the rank of a matrix.

For $M=1, \ldots, \bar M$, let $c^{M}_{1-q_n}$ denote the $100(1-q_n)$ percentile of the  cumulative distribution function of a random variable $ \max \{(\hat{\bs{t}}^1_{\bs\lambda})^\top \bs{\mathcal{I}}^1_{\bs\lambda,\bs\eta}  \hat{\bs{t}}^1_{\bs\lambda} ,   \ldots, (\hat{\bs{t}}^{M}_{\bs\lambda})^\top \bs{\mathcal{I}}^{M}_{\bs\lambda,\bs\eta}  \hat{\bs{t}}^{M}_{\bs\lambda}  \}$ for $M=M_0$ in Propositions \ref{prop:tm0_distribution} and \ref{prop:KS18_prop4}. Let $\hat c^{M}_{1-q_n}$ be a consistent estimator of $c^{M}_{1-q_n}$. Then, our estimator based on sequential hypothesis testing (SHT, hereafter) is defined as
\begin{align}
\hat M_{\text{PLR}} & = \min_{M \in \{0, \ldots ,\bar M\}} \{M: PLR_n(r) \geq \hat c^r_{1-q_n}, r=0, \ldots ,M-1,  PLR_n({M})\ < \hat c^M_{1-q_n} \},\nonumber\\
\hat M_{\text{EM}} & = \min_{M \in \{0, \ldots ,\bar M\}} \{M: EM_n(r) \geq \hat c^r_{1-q_n}, r=0, \ldots ,M-1, EM_n({M}) \ < \hat c^M_{1-q_n} \}.\label{M-hat}
\end{align}
The estimators $\hat M_{\text{PLR}}$ and $\hat M_{\text{EM}}$  depend on the choice of the significance level $q_n$. The following proposition states that $\hat M_{\text{PLR}}$ and $\hat M_{\text{EM}}$ converge to $M_0$ in probability as $n \rightarrow \infty$ when $-n^{-1}\ln q_n=o(1)$ and  $q_n=o(1)$.

 Let $Q_n^M( \bs{\vartheta}_M):=n^{-1} \sum_{i=1}^n \ln f_M(\bs{w}_i; \bs{\vartheta}_M)$ and $Q^M( \bs{\vartheta}_M):= \E[\ln f_M(\bs{w}_i; \bs{\vartheta}_M)]$, where $f_M(\bs{w}_i; \bs{\vartheta}_M)$ is defined in (\ref{eq:fm}) for $M=1,...,\bar M$. 
 
 \begin{assumption} \label{assumption:4} For $M=1,...,M_0-1$,  
(a)  $Q^M( \bs{\vartheta}_M)$ has a unique maximum at $\bs{\vartheta}_M^*$ in $\Theta_{\bs{\vartheta}_M}$;  (b) $\Theta_{\bs{\vartheta}_M}$ is compact; 
(c) $\bs{\vartheta}_M^*$ is interior to $\Theta_{\bs{\vartheta}_M}$; (d) $B^M(\bs{\vartheta}_M^*):= \E \left\{\nabla_{\bs{\vartheta}_M}  \ln f_M(\bs{w}_i; \bs{\vartheta}_M) \nabla_{\bs{\vartheta}_M^\top} \ln f_M(\bs{w}_i; \bs{\vartheta}_M)\right\}$ is non-singular; and (e) $A^M(\bs{\vartheta}_M^*):= \E \left\{\nabla_{\bs{\vartheta}_M\bs{\vartheta}_M^\top}  \ln f_M(\bs{w}_i; \bs{\vartheta}_M)\right\}$ has a constant rank in some open neighborhood of $\bs{\vartheta}_M^*$;
(f)  $Q^{M+1}( \bs{\vartheta}_{M+1}^*)  - Q^M( \bs{\vartheta}_M^*) > 0$.
\end{assumption}

\begin{proposition}\label{prop:sht}
Suppose that $M_0<\bar M$ and Assumptions \ref{assumption:1}--\ref{assumption:4} hold. If  we choose $q_n$ such that  $-n^{-1}\ln q_n=o(1)$ and  $q_n=o(1)$, then $\hat M_{\text{PLR}}-M_0=o_p(1)$ and $\hat M_{\text{EM}}-M_0=o_p(1)$.
\end{proposition}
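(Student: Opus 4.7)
My plan is to decompose the event $\{\hat M \neq M_0\}$ into an ``underestimation'' event $\{\hat M < M_0\}$ and an ``overestimation'' event $\{\hat M > M_0\}$, and bound each separately. Since $\hat M_{\text{PLR}}$ is the first $M$ for which the sequential test fails to reject, we have
\[
\{\hat M_{\text{PLR}} < M_0\} \subset \bigcup_{r=0}^{M_0-1} \{PLR_n(r) < \hat c_{1-q_n}^r\}\quad\text{and}\quad \{\hat M_{\text{PLR}} > M_0\} \subset \{PLR_n(M_0) \geq \hat c_{1-q_n}^{M_0}\}.
\]
It therefore suffices to prove that (i) $\Pr(PLR_n(r) < \hat c_{1-q_n}^r) \to 0$ for every $r<M_0$ and (ii) $\Pr(PLR_n(M_0) \geq \hat c_{1-q_n}^{M_0})\to 0$; the argument for $\hat M_{\text{EM}}$ is analogous.

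For (i), the plan is to show that $PLR_n(r)$ diverges at rate $n$ under the true $M_0$-component model. Using the standard extremum estimation argument under Assumptions \ref{assumption:4}(a)--(c), a uniform law of large numbers yields $n^{-1}L_{r,n}(\hat{\bs\vartheta}_r)\to_p Q^r(\bs\vartheta_r^*)$ and $n^{-1}L_{r+1,n}(\hat{\bs\vartheta}_{r+1})\to_p Q^{r+1}(\bs\vartheta_{r+1}^*)$, since the penalty $\tilde p_n$ is $o_p(n)$ and is dominated by the likelihood on compact $\Theta$. Assumption \ref{assumption:4}(f) then gives $n^{-1}PLR_n(r)\to_p 2[Q^{r+1}(\bs\vartheta_{r+1}^*)-Q^r(\bs\vartheta_r^*)]>0$, so $PLR_n(r)\to\infty$ at rate $n$. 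On the other hand, the critical value $\hat c_{1-q_n}^r$ is a quantile of a non-degenerate limit distribution with exponential-type tails (since the limiting random variable is a quadratic form in a Gaussian projected onto a cone), so $\hat c_{1-q_n}^r = O(-\ln q_n)$. The assumption $-n^{-1}\ln q_n = o(1)$ ensures $\hat c_{1-q_n}^r=o(n)$, and hence $PLR_n(r)\geq \hat c_{1-q_n}^r$ with probability tending to one.

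For (ii), under $H_0: M=M_0$, Proposition \ref{prop:tm0_distribution} gives $PLR_n(M_0)=O_p(1)$, converging in distribution to the random variable whose $(1-q_n)$-quantile defines $c_{1-q_n}^{M_0}$. Consistency of $\hat c_{1-q_n}^{M_0}$ follows by showing, from a continuous mapping argument applied to the plug-in estimator of $\tilde{\bs{\mathcal I}}_{\bs\lambda,\bs\eta}$ and its blocks, that $\hat c_{1-q_n}^{M_0}/c_{1-q_n}^{M_0}\to_p 1$. Since $q_n\to 0$, $c_{1-q_n}^{M_0}\to\infty$, and therefore
\[
\Pr(PLR_n(M_0)\geq \hat c_{1-q_n}^{M_0})\to 0.
\]
Combining (i) and (ii) yields $\Pr(\hat M_{\text{PLR}}=M_0)\to 1$.

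The same argument transfers to $\hat M_{\text{EM}}$: by Proposition \ref{prop:KS18_prop4}, $EM_n(M_0)$ shares the null limit distribution, and under $H_A: M>r$ the EM iterations monotonically increase the penalized likelihood from the constrained PMLE starting value, so $EM_n(r)\geq M_n^{h(1)}(0.5)$ which itself differs from the constrained PLRT by the bounded penalty $p(0.5)=0$ plus an $O_p(1)$ approximation error; hence $EM_n(r)$ also diverges at rate $n$ under fixed alternatives. The main obstacle I anticipate is verifying consistency of $\hat c_{1-q_n}^{M}$ uniformly in $q_n\to 0$, because the relevant quantile is driven into the tail of a non-standard distribution (the maximum of $M_0$ quadratic forms in projected Gaussians). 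Handling this cleanly requires either a tail-rate bound for this limit distribution (so that $\hat c_{1-q_n}^M / c_{1-q_n}^M = 1+o_p(1)$ even as $q_n\to 0$) or verifying directly that $\hat c_{1-q_n}^M = O_p(-\ln q_n)$, which still lies well below the $O(n)$ divergence of $PLR_n(r)$ under the maintained rate condition on $q_n$.
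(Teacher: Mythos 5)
Your proposal is correct and follows essentially the same route as the paper: divergence of $PLR_n(r)$ at rate $n$ for $r<M_0$ via a uniform law of large numbers and Assumption \ref{assumption:4}(f), combined with $PLR_n(M_0)=O_p(1)$ and $\hat c^{M_0}_{1-q_n}\to\infty$ under $q_n=o(1)$. The tail-rate bound you flag as the main obstacle is exactly what the paper's Lemma \ref{lemma: sht} supplies, by representing the limit under misspecification as a weighted sum of chi-squares (via Foutz's theorem) and applying a Chernoff bound to conclude $n^{-1}c^{M}_{1-q_n}=o(1)$ whenever $-n^{-1}\ln q_n=o(1)$.
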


Assumptions \ref{assumption:4}(a)--(e) ensure the consistency and asymptotic normality of $\hat{\bs{\vartheta}}_M$, where (c)--(e) correspond to Assumption A6 of \cite{white82em}. Per Assumption \ref{assumption:4}(f), the Kullback--Leibler information criterion of the model relative to the true $M_0$-component model strictly decreases as the number of components $M$ increases for $M < M_0$.

\section{Simulation}\label{sec:6_sim}
In this section, we examine the finite sample performance of the EM test and the PLRT by simulation. We  test $H_0 : M = M_0  $ against $H_1 : M = M_0 + 1$ for the model with $M_0=2$ and $3$. 

\subsection{Choice of penalty function}\label{sec:penalty}
%
 
 We develop a data-dependent empirical formula for $a_n$ by selecting a formula that ensures that the empirical rejection probabilities match the nominal size (5\%) across various null models and sample sizes, as reported in Table \ref{table:parameter} in Appendix D. Specifically, for the model without conditioning variables, we derive the following data-dependent empirical formula for testing the null hypotheses of $M_0=1, 2, 3, 4$: \begin{equation}\label{eq:a_n}
 a_n  = \begin{cases}
   \left({1 +\exp\left\{  \frac{\hat{\rho}_1^{M_0}}{ \hat{\rho}_4^{M_0}} + \frac{\hat{\rho}_2^{M_0}}{\hat{\rho}_4^{M_0}}  \frac{1}{T} + \frac{\hat{\rho}_3^{M_0}}{\hat{\rho}_4^{M_0}}  \frac{1}{n}  \right\} }\right)^{-1}, & M_0 = 1 \\
   \left({1 +\exp\left\{  \frac{\hat{\rho}_1^{M_0}}{ \hat{\rho}_4^{M_0}} + \frac{\hat{\rho}_2^{M_0}}{\hat{\rho}_4^{M_0}}  \frac{1}{T} + \frac{\hat{\rho}_3^{M_0}}{\hat{\rho}_4^{M_0}}  \frac{1}{n} + \frac{\hat{\rho}_5^{M_0}}{\hat{\rho}_4^{M_0}}  \log\left(\frac{\omega(\bs{\vartheta}_{M_0};M_0)}{1-\omega(\bs{\vartheta}_{M_0};M_0)}\right) \right\} }\right)^{-1}, &M_0 = 2,3,4,
\end{cases}
\end{equation}
where     
$\omega(\bs{\vartheta}_{M_0};M_0)$ is the misclassification probability as defined in \cite{Melnykov2010}  for each of the null models. The parameters 
$\hat{\rho}_1^{M_0}$, $\hat{\rho}_2^{M_0}$, $\hat{\rho}_3^{M_0}$, $\hat{\rho}_4^{M_0}$, and $\hat{\rho}_5^{M_0}$ are chosen as follows.   
Across different null models, sample sizes, and various candidate values of $a_n$, we estimate the empirical rejection probabilities at the $5\%$ significance level by simulations and denote them by $\hat s$. For example, when testing $H_0: M_0=2$, we repeatedly simulate the  500 datasets under each of  the $48$ null model parameters and sample sizes $(N,T,\alpha,\mu,\sigma) \in \{100,500\} \times \{ 2,5,10\} \times \{(0.5,0.5),(0.2,0.8)\} \times \{ (-1,1),(-0.5,0.5), (-0.5,0.8)\} \times \{ (1,1), (1.5,0.75), (0.8,1.2)\}$ and test the null hypothesis of $H_0: M_0=2$ by the EM test using one of the six values of $a_n\in \{0.01, 0.05,0.1,0.2, 0.3, 0.4\}$.  
For each of the $108 \times 6= 648$ combinations of the parameter values, sample sizes, and $a_n$ values, let $\hat s$ denote the fraction of simulated datasets that lead to the rejection of the null hypothesis at the $5\%$ significance level.  Using these $648$ observations of $\{\hat{s},N,T,\omega(\bs{\vartheta}_{2};2), a_n\}$, we run the following regression: 
\[\begin{split}
& \log\left(\frac{\hat{s}}{1- \hat{s}}\right) -  \log\left(\frac{ {0.05}}{1-  {0.05}}\right)  \\ & = \begin{cases}
	\rho_1^{M_0} + \rho_2^{M_0}\frac{1}{T} + \rho_3^{M_0}\frac{1}{n} + \rho_4^{M_0} \log\left(\frac{ {a_n}}{1 -  {a_n}}\right) , & M_0 = 1 \\
	 \rho_1^{M_0} + \rho_2^{M_0}\frac{1}{T} + \rho_3^{M_0}\frac{1}{n} + \rho_4^{M_0} \log\left(\frac{ {a_n}}{1 -  {a_n}}\right) + \rho_5^{M_0} \log\left(\frac{\omega(\bs{\vartheta}_{M_0};M_0)}{1 - \omega(\bs{\vartheta}_{M_0};M_0)}\right),  & M_0 = 2,3,4,
\end{cases}
\end{split}
\]  
where $\hat{\rho}_1^{M_0}$, $\hat{\rho}_2^{M_0}$, $\hat{\rho}_3^{M_0}$, $\hat{\rho}_4^{M_0}$, and $\hat{\rho}_5^{M_0}$ in (\ref{eq:a_n}) denote the corresponding estimates. Table \ref{tab:rho_estim} in the Appendix reports the estimates.
Note that the data-dependent formula (\ref{eq:a_n}) is obtained by setting $\hat s=0.05$ and solving for $a_n$ in the above equation.  

For the model with conditioning variables, we find that the value of $a_n$ that gives accurate Type I errors is sensitive to the dimension of covariates, and developing a data-dependent empirical formula for
$a_n$ is difficult. Consequently, we choose a constant value of $a_n$ that depends only on the number of components $M_0=1, 2, 3,$ and $4$ as follows:
$a_n=   0.1617 \text{ if } M_0 =1 ; a_n= 0.0025 \text{ if } M_0 = 2; a_n= 0.0567 \text{ if } M_0 = 3;  a_n= 0.4858 \text{ if } M_0 = 4; \text{ and }  a_n= 0.5  \text{ if } M_0 \geq 5.$  These penalty terms for the regression with covariates are chosen by averaging the predictions of the penalty function for the null parameters used in the simulations. For example, the penalty term for $M_0 = 2$ is chosen by generating $a_n$ using the formula for all of the combinations of $(N, T,\alpha, \mu,\sigma)$ in Table \ref{table:parameter} for $M_0 = 2$ and taking the average across the predicted values of $\hat a_n$. 
For $M_0 \ge 5$, we use the parametric bootstrap method to obtain the critical values for our empirical application, where we set $a_n = 0.5$.


\subsection{Simulation results}

Table \ref{table:size_test} displays the simulated Type I error rates for the EM test when we examine the null hypothesis $H_0: M=2$ against the alternative hypothesis $H_1: M=3$. A total of 2,000 repetitions are used for the asymptotic distribution, and 1,000 repetitions are used for the bootstrap distribution. Moreover, the PLRT with simulated critical values is considered.

The table presents the results for four distinct null models, as explained in the table's footnote. Utilizing the asymptotic distribution, the EM test sizes generally approximate the nominal 5\% level. Nonetheless, the test may be undersized in instances where $T \ge 5$. Furthermore, the test size is larger when the mixing proportions are equal ($\bs{\alpha} = (0.5,0.5)$) than when they are unequal ($\bs{\alpha} = (0.2, 0.8)$). The bootstrapped EM test demonstrates satisfactory performance.

For the PLRT, 2,000 repetitions are conducted, and the results are reported for cases where a constraint is applied to $\alpha_j \in [\epsilon, 1 - \epsilon]$ with $\epsilon=0.1$. The value of $a_n$ for the PLRT  is chosen to be 10 times larger than its value for the EM test. The findings suggest that the PLRT  is slightly oversized.

Table \ref{table:power_test} reports the rejection frequency of testing $H_0: M_0 = 2$ under 12 alternative three-component mixture models, as elaborated in the table's footnote. For both the EM test and the PLRT, the test power is greater when the distances between $\mu_j$s are larger and equal, such as $(\mu_1,\mu_2,\mu_3)=(-1,0,1)$ or $(-1.5,0,1.5)$, as opposed to unbalanced distances such as $(-1,0,2)$ or $(-0.5,0,1.5)$. The power is also improved when the mixture probabilities are equal ($\bs{\alpha} = (1/3,1/3,1/3)$) relative to when they are unequal ($\bs{\alpha} = (1/4,1/2,1/4)$). The power increases with both the time dimension $T$ and the cross-sectional sample size $N$. Reflecting a larger actual rejection frequency of the PLRT  under  $H_0: M_0 = 2$ in Table \ref{table:size_test}, the power  of the PLRT  is often higher than that of the EM test, although the EM test sometimes has higher power, especially when  the mixing probabilities are unequal. 

Table \ref{table:size_M3} displays the simulated Type I error rates of the EM test using the asymptotic distribution for testing $H_0: M_0=3$ against $H_1: M_0=4$. Six null models are considered with varying $(\alpha_1,\alpha_2, \alpha_3)$ and $(\mu_1,\mu_2, \mu_3)$ values. The EM test generally yields accurate Type I errors.

The Type I error rates of the EM test with conditioning variables under the null $M_0 = 2$ are examined using 500 repetitions. The results presented in Table \ref{table:size_regressor} indicate a slightly oversized test for small samples with $(N, T) = (200, 2)$, but overall, the finite sample properties are satisfactory.

In our empirical application examining production function heterogeneity in Japan and Chile, we find evidence that the number of components is frequently greater than 5 when we sequentially apply our EM test to estimate the number of components. We also investigate the performance of the SHT using the EM test in comparison with the AIC and the BIC when the data are generated from a five-component model in a realistic setting. Specifically, we simulate 100 datasets from the estimated five-component model of the Chilean textile industry in our empirical application and apply these three methods to select the number of components in each of the 100 datasets. Here, we apply the EM test at the 5\% significance level to sequentially test the null hypothesis $H_0: M=M_0$ for $M_0=1,2,...,7$, and we determine the number of components to be $M_0$ when we fail to reject $H_0: M=M_0$, as in (\ref{M-hat}).

Table \ref{table:simulated_BIC} presents the frequencies at which the three methods select the number of components in this simulation. The table demonstrates that the proposed SHT selects the correct number of components 72\% of the time, while it underestimates the true number of components 25\% of the time. Conversely, the AIC overestimates the number of components 86\% of the time, and the BIC underestimates the number of components by selecting a four-component model 41\% of the time and accurately estimates the number of components 58\% of the time. Overall, in this simulation, our proposed SHT approach outperforms both the AIC and the BIC.

\begin{landscape}
	
\begin{table}[H] \centering
	\begin{threeparttable}
	   \caption{Sizes (in \%) of the EM Test and the PLRT of $H_0 : M_0 = 2$ Against $H_A: M_0 = 3$ at the $5 \%$ Level}
	   \label{table:size_test}
	   \begin{tabular}{lllllllllllllllllll} 
		\toprule
        & \multicolumn{6}{c}{\textbf{EM Test}}       & \multicolumn{6}{c}{\textbf{EM Test}}       & \multicolumn{6}{c}{\textbf{PLRT}}      \\
		\midrule
        & \multicolumn{6}{c}{Asymptotic}       & \multicolumn{6}{c}{Parametric Bootstrap}              & \multicolumn{6}{c}{Asymptotic}              \\
$T$ & \multicolumn{2}{c}{3}     & \multicolumn{2}{c}{5}     & \multicolumn{2}{c}{8}   & \multicolumn{2}{c}{3}     & \multicolumn{2}{c}{5}     & \multicolumn{2}{c}{8}  & \multicolumn{2}{c}{3}     & \multicolumn{2}{c}{5}     & \multicolumn{2}{c}{8} \\
$N$       & 200 & 400 & 200 & 400 & 200 & 400  & 200 & 400 & 200 & 400  & 200 & 400  & 200 & 400  & 200 & 400  & 200 & 400 \\ 
\midrule
$(A,C)$ & 5.3 & 4.8 & 4   & 3.8 & 4   & 2.95   & 4.6 & 6.2 & 6.2 & 4.6 & 4.8 & 5.2 & 7.7 & 6.6 & 6.75   & 6.7 & 6.5 & 5.5 \\
$(A,D)$ & 5.9 & 4.9 & 5   & 5   & 4.45   & 4   & 5.4 & 4.8 & 5.2 & 5.6 & 5   & 5.8 & 5.4 & 5.15   & 5.1 & 6.1 & 5.55   & 6.2 \\
$(B,C)$ & 3.8 & 2.5 & 3.45   & 3.05   & 3.6 & 3.25   & 3.6 & 5.6 & 4.2 & 5.2 & 4   & 5.4 & 6.25   & 5.45   & 6.45   & 6.05   & 5.05   & 6   \\
$(B,D)$ & 4.8 & 4.6 & 3.5 & 3.15   & 3.55   & 3.95   & 3.6 & 3.6 & 5.8 & 4   & 6.2 & 4.6 & 2.35   & 4.4 & 3.9 & 4.85   & 4.95   & 5.2 \\ 
\bottomrule
\end{tabular}
\smallskip
\begin{tablenotes}
	\item[1]  $A$ and $B$ refer to, respectively, $(\alpha_1,\alpha_2) = (0.5,0.5)$ and $(0.2,0.8)$, while $C$ and $D$ refer to $(\mu_1,\mu_2) = (-1,1)$ and $(-0.5,0.5)$, respectively.\\
	\item[2] The variance is set to $(\sigma_1,\sigma_2) = (0.8,1.2)$.   The asymptotic simulations are based on 2,000 repetitions and the bootstrap simulation is based on 1,000 repetitions.
\end{tablenotes}
	\end{threeparttable}
\end{table}

\end{landscape}

\begin{table}[H]
	\centering
	\begin{threeparttable}
		\caption{Powers (in \%) of the EM Test and the PLRT of $H_0 : M_0 = 2$ Against $H_A: M_0 = 3$ at the $5\%$ Level }
		\label{table:power_test}
		\begin{tabularx}{\textwidth}{l @{\extracolsep{\fill}} l |rrrr|rrrr }
			\toprule
			&  & \multicolumn{4}{c|}{A}& \multicolumn{4}{c}{B}\\ \cmidrule{3-10}
			& N& \multicolumn{2}{c}{100} & \multicolumn{2}{c|}{500} & \multicolumn{2}{c}{100} & \multicolumn{2}{c}{500} \\ \midrule
			& T& 2  & 5  & 2  & 5  & 2  & 5  & 2  & 5  \\ \midrule
			 &&\multicolumn{8}{c}{\textbf{EM test}} \\ \midrule
			& $(C,G)$& 20.9 & 81.6 & 57.6& 100.0& 20.5& 82.7& 62.6 & 100.0\\
			& $(C,H)$& 49.2 & 99.9& 99.9 & 100.0& 38.4 & 98.7 & 98.8& 100.0\\
			& $(C,I)$& 12.1& 20.4& 18.0& 62.6& 10.6& 20.4 & 16.8& 65.8\\
			& $(D,G)$& 77.9& 100.0& 100.0& 100.0& 86.5 & 100.0& 100.0& 100.0\\
			& $(D,H)$& 57.4& 100.0& 100.0& 100.0& 42.8& 100.0& 100.0& 100.0\\
			& $(D,I)$& 16.0 & 59.5& 31.8& 99.9 & 13.8& 70.8 & 40.3 & 100.0\\
			& $(E,G)$& 93.0 & 100.0& 100.0& 100.0& 94.0& 100.0& 100.0& 100.0\\
			& $(E,H)$& 83.8 & 100.0& 100.0& 100.0& 70.7 & 100.0& 100.0& 100.0\\
			& $(E,I)$& 25.7& 97.0 & 80.2& 100.0& 30.7& 96.8& 83.1& 100.0\\
			& $(F,G)$& 99.9& 100.0& 100.0& 100.0& 100.0& 100.0& 100.0& 100.0\\
			& $(F,H)$& 93.5 & 100.0& 100.0& 100.0& 85.3 & 100.0& 100.0& 100.0\\
			& $(F,I)$& 40.8& 99.9& 98.2& 100.0& 52.1 & 100.0& 99.5& 100.0\\ \midrule \midrule
			 &&\multicolumn{8}{c}{\textbf{PLRT}} \\ \midrule  
& $(C,G)$&22.7&85.1&56.5&100.0&23.2&82.6&58.7&100.0\\
& $(C,H)$&57.1&100.0&99.8&100.0&43.2&99.7&99.1&100.0\\
& $(C,I)$&12.0&21.0&12.4&66.1&11.3&22.1&12.4&69.3\\
& $(D,G)$&79.9&100.0&100.0&100.0&87.6&100.0&100.0&100.0\\
& $(D,H)$&65.3&100.0&100.0&100.0&49.1&100.0&100.0&100.0\\
& $(D,I)$&14.8&63.6&28.6&100.0&13.6&75.2&36.7&100.0\\
& $(E,G)$&91.5&100.0&100.0&100.0&93.7&100.0&100.0&100.0\\
& $(E,H)$&86.8&100.0&100.0&100.0&75.9&100.0&100.0&100.0\\
& $(E,I)$&28.7&97.2&77.7&100.0&33.0&97.9&85.5&100.0\\
& $(F,G)$&99.9&100.0&100.0&100.0&100.0&100.0&100.0&100.0\\
& $(F,H)$&96.4&100.0&100.0&100.0&89.5&100.0&100.0&100.0\\
& $(F,I)$&45.7&100.0&98.4&100.0&57.2&100.0&99.8&100.0\\
			  \bottomrule
		\end{tabularx}
		\smallskip
  \begin{flushleft}
 Notes:  $A$ and $B$ refer to $(\alpha_1,\alpha_2,\alpha_3) = (1/3,1/3,1/3)$ and $(1/4,1/2,1/4)$, respectively; 
			 $C, D, E,$ and $F$ refer to $(\mu_1,\mu_2,\mu_3) = (-0.5,0,1.5), (-1,0,1),(-1,0,2),$ and  $(-1.5,0,1.5)$, respectively; and  $G, H,$ and $I$ refer to $(\sigma_1,\sigma_2,\sigma_3) = (0.6,0.6,1.2), (0.6,1.2,0.6),$ and $(1,1,1)$, respectively. 
\end{flushleft} 
	\end{threeparttable}
\end{table}

\begin{table}[H]
	\centering
	\caption{Sizes (in \%) of the EM Test of $H_0 : M_0 = 3$ Against $H_A: M_0 = 4$ at the $5 \%$ Level}
	\label{table:size_M3}
	\begin{threeparttable}
		\begin{tabularx}{\textwidth}{l @{\extracolsep{\fill}} rrrrrr}
			\toprule
			& (A,C) & (A,D) & (A,E) & (B,C) & (B,D) & (B,E) \\ 
			\midrule
			100,2 & 5.95& 5.15& 5.05& 5.05   & 5.85   & 4.40\\
			500,2 & 5.60& 5.55& 5.25& 5.10& 5.65   & 4.05   \\
			100,5 & 4.30& 6.00& 4.20& 5.15   & 5.10& 5.70\\
			500,5 & 4.20& 4.55& 3.95& 4.50& 4.15   & 4.15  \\ 
			\bottomrule
		\end{tabularx}\smallskip
  \begin{flushleft}
 Notes:  $A$ and $B$ refer to $(\alpha_1,\alpha_2, \alpha_3) = (1/3,1/3,1/3)$ and $(0.25, 0.5, 0.25)$, respectively, and $C,D,$ and $E$ refer to $(\mu_1,\mu_2, \mu_3) = (-4, 0, 4)$, $(-4, 0, 6),$ and $(-6, 0, 6)$, respectively.
	The variance is set to $(\sigma_1,\sigma_2,\sigma_3) = (0.75, 1.5, 0.75)$.   The asymptotic simulations are based on 2,000 repetitions and the bootstrap simulation is based on 1,000 repetitions.
\end{flushleft} 
	\end{threeparttable}
\end{table}

\begin{table}[H] \centering
	\begin{threeparttable}
		\caption{Sizes of the EM Test of $H_0 : M_0 = 2$ Against $H_A: M_0 = 3$ with Conditioning Variables}
		\label{table:size_regressor}
		\small
		\begin{tabularx}{\textwidth}{lrrrrrrrr}
			\toprule
			& $(A,C,E)$ & $(A,C,F)$ &  $(A,D,E)$ & $(A,D,F)$ &  $(B,C,E)$ & $(B,C,F)$ &  $(B,D,E)$ & $(B,D,F)$ \\ 
			$(N,T)$ & \\
			\midrule
			$(200,2)$ & 8.4   & 8.2& 7.4 & 8.8& 8.6   & 8.2& 7.4 & 3.6\\
			$(500,2)$ & 4.6   & 3.2& 3.2 & 2.2& 4.8   & 4.8& 3.6 & 3.6\\
			$(200,5)$ & 4.0  & 1.8& 3.0   & 2.6& 2.2   & 2.0 & 2.2 & 3.2\\
			$(500,5)$ & 2.2   & 1.2& 1.6 & 1.4& 3.0  & 2.0 & 1.8 & 2.0 \\
			\bottomrule
		\end{tabularx}
	\end{threeparttable}\smallskip
	  \begin{flushleft}
 Notes:  $A$ and $B$ refer to  $(\mu_1,\mu_2) = (-1,1)$ and $(-0.5,0.5)$, respectively, and $C$ and $D$ refer to  $(\beta_1,\beta_2) = (1,1)$ and $(-1,1)$, respectively.
	  $E$ and $F$ refer to  $(\sigma_1,\sigma_2) = (0.3,0.1)$ and $(0.1,0.1)$, respectively. The mixing proportion is set to $(\alpha_1,\alpha_2) = (0.2,0.8)$.  The asymptotic simulations are based on 500 repetitions.
\end{flushleft} 
\end{table}

\begin{table}[H]
	\centering
	\caption{Frequency of the Number of Components with the Simulated Data}
	\label{table:simulated_BIC} 
	\begin{threeparttable}
	\begin{tabularx}{\textwidth}{@{\extracolsep{\fill}}lccccccc}
	\toprule
	$M$ & 1 & 2 & 3 & 4 & 5 & 6 & 7   \\
	\midrule
	SHT with EM test & 0 & 0 & 0 & 0.26 & 0.72 & 0.02 & 0    \\
	AIC & 0 & 0 & 0 & 0.01 & 0.13 & 0.31 & 0.55 \\
	BIC & 0 & 0 & 0 & 0.41 & 0.58 & 0.01 & 0   \\
	\bottomrule
	\end{tabularx}
	\begin{tablenotes}
		\footnotesize
	\item[1] The data are generated using the estimated parameters based on the Chilean textile industry with five components and panel length $T=3$, where $(\alpha_1,\alpha_2, \alpha_3, \alpha_4,\alpha_5) = (0.16076522, 0.32454077, 0.09025875, 0.35478905, 0.06964622)$, $(\mu_1,\mu_2, \mu_3, \mu_4,\mu_5) = (-1.241241, -0.33803875,  0.4480291,  0.52379553, 1.4139465)$, $(\beta_1,\beta_2, \beta_3, \beta_4,\beta_5)= (0.451833, -0.05988709, -0.2453261, -0.03106076, 0.2053708)$, and $(\sigma_1,\sigma_2, \sigma_3, \sigma_4,\sigma_5) = (0.9933480, 0.4585760, 0.9954302, 0.4116855, 0.1863346)$. 
	We use the panel length and sample size that are equal to those in the dataset, i.e.,  $n = 196$ and $T = 3$.
	
	\item[2] The results are based on 100 repetitions. 
	
	\item[3] Each cell indicates the proportion of times that the model selection indicates an $M$-component model. 
	\end{tablenotes}
	\end{threeparttable}
	
\end{table}


\section{Empirical application}\label{sec:7_app}
In this section, we conduct an empirical application of our proposed test for the number of components in a finite mixture production function model, the identification of which is analyzed in \cite{Kasahara2022esri}. Specifically, we estimate the number of types of input elasticities in production functions using panel data from Japanese publicly traded firms in the machinery industry and data from Chilean manufacturing firms.

\subsection{Production function and first-order condition} 
%

Consider the input and output panel data of $n$ firms over $T$  years, $\{\{Y_{it}, V_{it}, L_{it},$ and $K_{it}\}_{t=1}^T\}_{i=1}^N$, where $Y_{it}$, $V_{it}$,  $L_{it}$, and  $K_{it}$ represent the output, intermediate input, labor, and capital of firm $i$ in year $t$, respectively. We denote the logarithms of the corresponding variables by lowercase letters as $(y_{it}, v_{it}, l_{it}, k_{it})$, with, for example, $y_{it} = \log(Y_{it})$.

We use a finite mixture specification to capture the unobserved heterogeneity in a firm's input elasticities.  We are interested in testing the number of production technology types. 
Assume that there are $M$ discrete types of production technologies
and define the latent random variable $D_i \in \{ 1,2,\ldots, M\}$ to represent the production technology type of firm $i$. If $D_i  = j$, then firm $i$ is of type $j$.  The population proportion of type $j$ is denoted by $\alpha_j=\Pr(D=j)$. The production function for type $j$ is Cobb--Douglas, and 
the output is related to inputs as
\begin{equation}
Y_{it}=\exp{(\epsilon_{it})} F_t^j(V_{it}, L_{it},K_{it}, \omega_{it})  \label{prod},
\end{equation} 
with
\begin{align*}
F_t^j(V_{it},  L_{it},K_{it},\omega_{it})  :=  \exp(\gamma_t^j + \omega_{it}) V_{it}^{\delta_{v,j}} L_{it}^{\delta_{\ell,j}}K_{it}^{\delta_{k,j}},
\end{align*}
where $\gamma_t^j$ represents the aggregate productivity shock of type $j$ in year $t$, $\omega_{it}$ is the serially correlated productivity shock, and $\epsilon_{it}$ is the idiosyncratic productivity shock.

We assume that an intermediate input $V_{it}$ is flexibly chosen by firm $i$ after observing the aggregate shock $\gamma_t^j$ and the serially correlated productivity shock $\omega_{it}$. The variable $\epsilon_{it}$ represents a mean-zero i.i.d. random variable, the realization of which is unknown when the intermediate input $V$ is selected. Denote the information available to a firm for making decisions on $V_{it}$ by $\mathcal{I}_{it}$. Denote the information available to a firm for making decisions on $V_{it}$ by $\mathcal{I}_{it}$.

To identify the intermediate input elasticity of the production function, we introduce the following assumptions (cf. \cite{Kasahara2022esri}).
\begin{assumption}\label{app:A1}
	(a) Each firm belongs to one of $M$ types, and the probability of being type $j$ is given by $\alpha_j = P(D_i = j)$ with  $\sum_{j=1}^M \alpha_j = 1$.
(b) For the $j^{th}$ type of production technology at time $t$, the output is expressed in terms of input as in (\ref{prod}), where $\epsilon_{it} \sim N(0, \sigma_j^2)$ are i.i.d across values of $i$ and $t$.   $\omega_{it}$  follows an exogenous first-order stationary Markov process  given by 
$\omega_{it}=h^j(\omega_{it-1}) + \eta_{it} $ where,
conditional on $\mathcal{I}_{it-1}$, $\eta_{it}$ is  a mean-zero i.i.d. random variable. (c)   $(\gamma_t^j,\omega_{it}) \in\mathcal{I}_{it}$  and   $\epsilon_{it}\not\in\mathcal{I}_{it}$.  
\end{assumption}

\begin{assumption}\label{app:A2}
(a) Firms are price-takers in both output and input markets, where $P_{Y,t}$ and $P_{V,t}$ are, respectively, the prices of output and intermediate input in year $t$.
(b) $(P_{Y,t}, P_{V,t})$ are observed by firms at the beginning of the period before $V_{it}$ is chosen.
\end{assumption}

\begin{assumption}\label{app:A3}
 	Values of $V_{it}$ are chosen at time $t$ by maximizing the expected profit conditional on information $\mathcal{I}_{it}$ at time $t$ and conditional on the value of $(K_{it},L_{it})$. The profit maximization problem for firms with type $j$ technology is given by
	 \begin{equation}\label{eq:profit}
	V_{it} = \arg \max_{V} P_{Y,t}  
	\E[\exp(\epsilon_{it})|D_i=j] F^{j}_t(V,K_{it},L_{it},\omega_{it}) - P_{V,t} V.
	\end{equation} 
\end{assumption}

In Assumption \ref{app:A1}(a), each firm's production function belongs to one of the $M$ types. Assumption \ref{app:A1}(b) assumes that the idiosyncratic productivity shock follows a normal distribution. Assumption \ref{app:A1}(c) assumes that both the aggregate shock $\gamma_t^j$ and the serially correlated productivity shock $\omega_{it}$ are observed when intermediate inputs are chosen but idiosyncratic productivity shocks are unknown. 
Assumption \ref{app:A2} states that firms observe input and output prices when deciding on $V_{it}$.
Assumption \ref{app:A3} assumes that $V_{it}$ is chosen to maximize the current expected period profit conditional on the value of $(K_{it},L_{it})$.\footnote{We are agnostic about the timing of choosing $K_{it}$ and $L_{it}$ as long as they are either determined before $V_{it}$ or simultaneously chosen with $V_{it}$. It is reasonable to assume that capital input $K_{it}$ is determined before the value of $V_{it}$ is chosen. However, labor input $L_{it}$ may be flexibly chosen simultaneously with $V_{it}$ after $\gamma_t^j$ and $\omega_{it}$ are observed. Even when labor input is simultaneously chosen with intermediate input, equation (\ref{eq:profit}) and the corresponding first-order condition characterize the intermediate input choice once we interpret $L_{it}$ in (\ref{eq:profit}) as the optimal value chosen by firm $i$, as discussed in \cite{Ackerberg2015}.}

Given the above Assumptions \ref{app:A1}, \ref{app:A2}, and \ref{app:A3}, we derive an empirical specification based on the first-order condition of the profit maximization problem (\ref{eq:profit}), following the idea developed by \cite{Gandhi2020} and extending it to a finite mixture production function modeled by \cite{Kasahara2022esri}. Note that $E[\exp(\epsilon_{it})|D_i=j]=\exp( \sigma_j^2/2)$ for $\epsilon_{it}\sim N(0,\sigma_j^2)$. Then, because $\delta_{v,j}= \frac{\partial F_i^{j}(V_{it},K_{it},L_{it})/\partial V_{it}} {{F_i^{j}(V_{it},K_{it},L_{it})}/{V_{it}}}$ for the Cobb--Douglas production function, the first-order condition with respect to $V_{it}$ in (\ref{eq:profit}) together with the production function (\ref{prod}) implies that
\begin{equation}\label{eq:logs}
s_{it}
= \log \delta_{v,j} + \frac{1}{2}\sigma_{j}^2 -  \epsilon_{it}\quad \text{for $D_i=j$},
\end{equation}
where
\[
s_{it}:=\log\left(\frac{P_{V,t} V_{it}}{P_{Y,t} Y_{it}}\right)
\]
is the logarithm of the ratio of the intermediate input cost to revenue.

 
Collect the observed data as $\bs{W}_i = \{ s_{it},\log K_{it}\}_{t=1}^T$. 
Let $\mu_j = \log \delta_{v,j} + \frac{1}{2} \sigma_j^2$ and define a type-specific parameter to be $\bs{\theta}_j = (\mu_j,\sigma_j)$, where $\delta_{v,j}$ can be identified from $\bs{\theta}_j$ as $\delta_{v,j}=\exp(\mu_j-\sigma_j^2/2)$.
Collect the parameters of each type and the mixing probability as $\bs{\vartheta}_M = (\alpha_1,\ldots,\alpha_{M-1}, \bs{\theta}_1\t,\ldots,\bs{\theta}_M\t)\t$. Recall that $\epsilon_{it} \overset{iid}{\sim} N(0, \sigma_j^2)$ over $i$ and $t$ conditional on the technology type $D_i=j$. Then, from (\ref{eq:logs}), we can write the  density function of $s_{i1},..., s_{iT}$ as a mixture of type-specific likelihood density similar to the density function in equation (\ref{eq:fm}):
\begin{equation}\label{model-1}
f_{M}(\bs{W}_i;\bs{\vartheta}_{M}) = \sum_{j=1}^M \alpha_j \prod_{t=1}^T \frac{1}{\sigma_{j}} \phi\left(\frac{s_{it} - \mu_j}{\sigma_{j}} \right).
\end{equation}
The PMLE is defined as
\[
\hat{\bs{\vartheta}}_{M} = \arg\max_{\bs{\vartheta}_M} \sum_{i=1}^{n} \log f_{M}(\bs{W}_i;\bs{\vartheta}_{M}) +\tilde p_n(\bs{\vartheta}_{M}).
\]
 
As an alternative specification, we allow the elasticity of output for intermediate input to be a function of $\log K_{it}$  as 
$\log\delta_{v,j}=\beta_{0,j} + \beta_{k,j} \log K_{it}$.
This results in the logarithm of the ratio of intermediate input cost to revenue being linearly related to $\log K_{it}$   as
$
s_{it}
= \mu_j +\beta_{k,j} \log K_{it} -  \epsilon_{it} $ for $D_i=j$
with $\mu_j  = \beta_{0,j}+ \frac{1}{2}\sigma_{j}^2$. In this case, the conditional density function of $\{ s_{it}\}_{t=1}^T$ given $\{\log K_{it}\}_{t=1}^T$ is 
\begin{equation}\label{model-2}
f_{M}(\bs{W}_i;\bs{\vartheta}_{M}) = \sum_{j=1}^M \alpha_j \prod_{t=1}^T \frac{1}{\sigma_{j}} \phi\left(\frac{s_{it} - \mu_j - \beta_{k,j} \log K_{it}}{\sigma_{j}} \right).
\end{equation}
In addition, we consider a specification in which we include not only $\log K_{it}$ but also $\log L_{it}$ as a regressor:
\begin{equation}\label{model-3}
f_{M}(\bs{W}_i;\bs{\vartheta}_{M}) = \sum_{j=1}^M \alpha_j \prod_{t=1}^T \frac{1}{\sigma_{j}} \phi\left(\frac{s_{it} - \mu_j - \beta_{k,j} \log K_{it}- \beta_{\ell,j} \log L_{it}}{\sigma_{j}} \right).
\end{equation}


%

\subsection{Empirical results}
We apply the EM test to two producer-level datasets to determine the number of production technology types. 
We use the production data from Japanese publicly traded firms from 2003 to 2007 and Chilean manufacturing plants from 1992 to 1996.\footnote{Please refer to \cite{kasahara21} and \cite{KASAHARA2008}  for the details of the datasets of the Japanese publicly traded firms and the Chilean manufacturing plants, respectively. }
We clean the data and use the firms/plants with continuous data entry for five years to ensure that we have balanced panel data. We focus on the three largest industries
in terms of the number of firms and plants for each country (chemical, machine, and electronics for Japan and food products, fabricated metal products, and textiles for Chile). Table \ref{tab:sum}  presents the summary statistics for the revenue share of intermediate materials and the log of gross output in these industries. The within-industry standard deviations of the revenue share of intermediate materials are substantial across all industries, suggesting that the intermediate input elasticities differ across firms within the narrowly defined industries.


\begin{table}[H]
	\centering  
	\caption{Descriptive Statistics for the Revenue Share of Intermediate Materials and the Log of Gross Output for the Japanese Firms and Chilean Plants} \label{tab:sum}
	\begin{threeparttable} 
				\begin{tabularx}{\textwidth}{@{\extracolsep{\fill}} p{4.1cm}cccccc} 

			\toprule
		\multicolumn{7}{c}{\textbf{Panel A: \ Japanese publicly traded firms}}\\ 
		\midrule
		   &   &  &\multicolumn{2}{c}{$\frac{P_{V,t} V_{it}}{P_{Y,t} Y_{it}}$} & \multicolumn{2}{c}{$\log (Y_{it})$}\\  
		 Industry & NObs & n & $mean $ & $sd  $ & $mean  $ & $sd  $ \\  
			\midrule \noalign{\vskip 0.5mm}
			 Chemical & 805 & 161 & 0.34 & 0.15 & 17.52 & 1.24 \\ 
			 Machine & 790 & 158 & 0.50 & 0.16 & 17.31 & 1.35 \\ 
		 Electronics & 775 & 155 & 0.45 & 0.18 & 17.54 & 1.27 \\ 
		 \toprule
		\multicolumn{7}{c}{\textbf{Panel B: \ Chilean  plants}}\\
		\midrule
		  &   &  & \multicolumn{2}{c}{$\frac{P_{V,t} V_{it}}{P_{Y,t} Y_{it}}$} & \multicolumn{2}{c}{$\log (Y_{it})$}\\  
		Industry & NObs & n & $mean $ & $sd  $ & $mean  $ & $sd  $ \\ 
			\midrule \\[-1.8ex] 
		Food products & 4645 & 929 & 0.65 & 0.15 & 10.62 & 1.66 \\ 
		 Fabricated metal products & 1260 & 252 & 0.53 & 0.18 & 11.00 & 1.37 \\ 
		Textiles & 1130 & 226 & 0.58 & 0.19 & 11.01 & 1.32 \\ 
			\bottomrule \\[-1.8ex] 		
			\end{tabularx} 
		
		\begin{tablenotes}
		\item [1] The summary statistics are based on Japanese firm-level data from 2003 to 2007 and Chilean plant-level data from 1992 to 1996.
	 All observations with $\log (V_{it} / Y_{it}) \le -3$ and $\log (V_{it} / Y_{it}) > \log(2)$ are removed.  The dataset is a balanced panel; i.e., we keep firms/plants that are continuously observed for these five years.

	 \item[2] The variable $\frac{P_{V,t} V_{it} }{P_{Y,t} Y_{it} }$ is defined as the revenue share of the intermediate input, where $P_{V,t}$ is the average price of the intermediate input at time $t$, $P_{Y,t}$ is the average price of the output, $V_{it}$ is the quantity of the intermediate input, and $Y_{it}$ is the quantity of the output. 
	 
		\end{tablenotes}
	\end{threeparttable}	
\end{table}

\begin{table}[H] \centering 
	\caption{The EM Test for Japanese Producers Without Conditioning Variables}
	\small
	\label{table:Japan} 
	\begin{threeparttable}
		\begin{tabularx}{\textwidth}{@{}lc@{\extracolsep{\fill}}rrrrr@{}}
			\toprule
			& & M=1            & M=2            & M=3            & M=4           & M=5           \\
			\cmidrule{3-7}
			& &  \multicolumn{5}{c}{$T = 3$} \\
			\cmidrule{3-7}
			Chemical      &\textit{EM}      & $436.37^{***}$ & $239.83^{***}$ & $130.1^{***}$  & $126.4^{***}$ & $63.24^{***}$ \\
			&\textit{BIC}    & 805.55         & 383.43         & 157.5          & 41.62         & -70.46        \\ \noalign{\vskip 0.5mm}
			Electronics    &\textit{EM}& $563.94^{***}$ & $186.67^{***}$ & $115.82^{***}$ & $81.06^{***}$ & $47.76^{***}$ \\
			&\textit{BIC}  & 814.01         & 264.27         & 91.67          & -10.39        & -77.2         \\ \noalign{\vskip 0.5mm}
			Machine        &\textit{EM} & $434.91^{***}$ & $194.48^{***}$ & $72.83^{***}$  & $56.94^{***}$ & $54.77^{***}$ \\
			&\textit{BIC}  & 458.72         & 37.85          & -142.28        & -200.74       & -242.71          \\ \noalign{\vskip 0.5mm}

			\cmidrule{3-7}
			& & \multicolumn{5}{c}{$T = 4$} \\
			\cmidrule{3-7}
			Chemical      &\textit{EM}& $629.22^{***}$ & $308.6^{***}$  & $181.39^{***}$ & $177.38^{***}$ & $96.35^{***}$ \\
			&\textit{BIC}  & 1071.45        & 456.54         & 162.15         & -4.99          & -168.01       \\ \noalign{\vskip 0.5mm}
			Electronics &\textit{EM}  & $803.15^{***}$ & $282.32^{***}$ & $167.83^{***}$ & $106.43^{***}$ & $89.93^{***}$ \\
			&\textit{BIC}  & 1081.48 & 292.68 & 24.54 &	-484.46  \\ \noalign{\vskip 0.5mm}
			Machine    &\textit{EM}    & $620.95^{***}$ & $292.52^{***}$ & $118.37^{***}$ & $102.57^{***}$ & $75.32^{***}$ \\
			&\textit{BIC}  & 609.1          & 2.14           & -276.04        & -380.16        & -467.96  \\\noalign{\vskip 0.5mm}
			\cmidrule{3-7}
			& & \multicolumn{5}{c}{$T = 5$} \\
			\cmidrule{3-7}
			Chemical     &\textit{EM}  & $818.38^{***}$  & $386.08^{***}$ & $219.13^{***}$ & $209.42^{***}$ & $118.25^{***}$ \\
			&\textit{BIC}  & 1331.53         & 527.48         & 155.86         & -48.53         & -243.73        \\ \noalign{\vskip 0.5mm}
			Electronics   &\textit{EM}& $1024.86^{***}$ & $375.29^{***}$ & $226.01^{***}$ & $134.53^{***}$ & $126.36^{***}$ \\
			&\textit{BIC}  & 1343.12         & 332.61         & -28.32         & -239.31        & -359.17        \\ \noalign{\vskip 0.5mm}
			Machine       &\textit{EM}  & $819.98^{***}$  & $389.69^{***}$ & $156.44^{***}$ & $149.98^{***}$ & $96.32^{***}$  \\
			&\textit{BIC}  & 775.75          & -30.17         & -406.59        & -548.81        & -683.96        \\ \noalign{\vskip 0.5mm}
			\bottomrule
		\end{tabularx}
		\begin{tablenotes}
			\item[1] The estimation is based on the revenue share of intermediate materials.
			\item[2] $~^{*}$, $~^{**}$, and $~^{***}$ indicate significance at the $10 \%$, $ 5\%$, and $1\%$ levels, respectively.
		\end{tablenotes}
	\end{threeparttable}
\end{table}

\begin{table}[H] \centering 
	\caption{The EM Test for Chilean Producers Without Conditioning Variables}
	\small
	\label{table:Chile} 
	\begin{threeparttable}
		\begin{tabularx}{\textwidth}{@{}lc@{\extracolsep{\fill}}rrrrr@{}}
			\toprule
			& & M=1            & M=2            & M=3            & M=4           & M=5           \\
			\cmidrule{3-7}
			& & \multicolumn{5}{c}{$T = 3$} \\
			\cmidrule{3-7}
            Food products     &\textit{EM}    & $805.51^{***}$ & $637.77^{***}$ & $204.92^{***}$ & $80.54^{***}$ & $72.41^{***}$ \\
			&\textit{BIC}  &  422.55& 	-371.13& 	-991.96& 	-1176.61& 	-1236.82 \\
			\noalign{\vskip 0.5mm}    
			Fabricated metal products  &\textit{EM}   & $238.84^{***}$ & $68.91^{***}$  & $26.24^{***}$  & $24.42^{***}$ & $21.82^{***}$ \\
			&\textit{BIC}  &719.74&	496.49&	444.02&	433.01&	425
			\\  \noalign{\vskip 0.5mm}
			Textiles   &\textit{EM} & $229.87^{***}$ & $146.17^{***}$ & $64.76^{***}$  & $27.06^{***}$ & $29.98^{**}$  \\
			&\textit{BIC}  &635.37&	418.28&	288.34&	236.9&	223.34\\ \noalign{\vskip 0.5mm}
			\cmidrule{3-7}
			& & \multicolumn{5}{c}{$T = 4$} \\
			\cmidrule{3-7}
			Food products     &\textit{EM}   & $1165.08^{***}$ & $874.27^{***}$ & $257.49^{***}$ & $130.61^{***}$ & $139.59^{***}$ \\
			&\textit{BIC}  &419.47&	-730.83&	-1586.11&	-1825.87&	-1938.03
			\\ \noalign{\vskip 0.5mm}
			Fabricated metal products    &\textit{EM} & $362.1^{***}$   & $120.7^{***}$  & $41.6^{***}$   & $43.68^{***}$  & $20.95^{***}$  \\
			&\textit{BIC}  &905.9&	559.3&	453.41&	427.34&	399.82
			\\ \noalign{\vskip 0.5mm}
			Textiles   &\textit{EM}  & $325.17^{***}$  & $222.28^{***}$ & $74.19^{***}$  & $47.58^{***}$  & $51.65^{***}$  \\
			&\textit{BIC}  & 821.73&	510.98&	303.8&	243.51&	210.77
			\\\noalign{\vskip 0.5mm}
			\cmidrule{3-7}
			& & \multicolumn{5}{c}{$T = 5$} \\
			\cmidrule{3-7}
			Food products     &\textit{EM}     & $1553.9^{***}$ & $1010.31^{***}$ & $290.02^{***}$ & $172.46^{***}$ & $155.25^{***}$ \\
			&\textit{BIC}  &471.66&	-1066.71&	-2057.71&	-2329.38&	-2484.82
			\\ \noalign{\vskip 0.5mm}
			Fabricated metal products     &\textit{EM}   & $478.94^{***}$ & $176.5^{***}$   & $58.96^{***}$  & $59.37^{***}$  & $33.19^{***}$  \\
			&\textit{BIC}  &1101.11&	637.21&	477.1&	433.62&	389.54 \\ \noalign{\vskip 0.5mm}
			Textiles   &\textit{EM} & $428.29^{***}$ & $280.46^{***}$  & $103.41^{***}$ & $56.63^{***}$  & $53.57^{***}$  \\
			&\textit{BIC}  &968.16&	556.01&	289.55&	201.41&	160\\ \noalign{\vskip 0.5mm}
			\bottomrule
		\end{tabularx}
		\begin{tablenotes}
			\item[1] The estimation is based on the revenue share of intermediate materials.
			\item[2] $~^{*}$, $~^{**}$, and $~^{***}$ indicate significance at the $10 \%$, $ 5\%$, and $1\%$  levels, respectively.			
		\end{tablenotes}
	\end{threeparttable}
\end{table}

To determine the number of components, we test the null hypothesis $H_0: M=M_0$ against $H_1: M=M_0+1$ by applying the EM test at the 5\% significance level sequentially for $M_0 = 1,\ldots,5$.   If we fail to reject the null hypothesis at a certain $M_0 = M$, then we conclude that there are $M$ types of intermediate input elasticities. We consider both the models without conditioning variables (\ref{model-1}) and the models with conditioning variables (\ref{model-2})--(\ref{model-3}).

Tables \ref{table:Japan}  and \ref{table:Chile}  report the results of the EM test for the model without conditioning variables  (\ref{model-1}) for the Japanese and the Chilean industries with a panel length of $T=3,4,5$ and a null model of $M=1,...,5$. For all industries in both countries and all panel lengths, we reject the null hypothesis of $H_0: M=M_0$ for all $M_0=1,2,3,4,$ and $5$ at the 5\% significance level, which indicates that there are at least five types of intermediate input elasticities. This result reflects the considerable and persistent heterogeneity in the revenue share of intermediate materials across firms or plants, providing strong evidence for substantial heterogeneity in intermediate input elasticities across firms' production functions among Japanese and Chilean producers. Our findings serve as a caution against the conventional empirical practice of estimating the Cobb--Douglas production function, which assumes that elasticity parameters are common across firms. Given the strong evidence of heterogeneity in the production function coefficients, incorporating heterogeneity in production function coefficients in empirical applications is warranted and should be encouraged.

One possible reason for the estimated number of technology types being greater than 5 is that the assumption of the Cobb--Douglas production function may be too restrictive. When the production function is not Cobb--Douglas, the revenue share of intermediate materials generally depends on the value of production inputs \citep{Gandhi2020}. For this reason, we test the number of technology types when the revenue share of intermediate materials depends on the values of capital input and labor input by estimating models (\ref{model-2})--(\ref{model-3}).

Table \ref{table:regressor_1} presents the results of the SHT and the BIC when we estimate the mixture regression model with $\log K_{it}$ in (\ref{model-2}) using data with a panel length of $T=3$. For the Japanese chemical, electronics, and machinery industries, the SHT suggests that the data are generated from seven- to nine-component models; concurrently, the BIC selects models with at least 10 components. For the Chilean food industry, the SHT indicates a 10-component model, while the BIC chooses an eight-component model. In contrast, the SHT and the BIC respectively select models with seven and six components for the Chilean fabricated metal products industry and the Chilean textile industry.

Table \ref{table:regressor_2} reports the results for the model that includes both $\log K_{it}$ and $\log L_{it}$ as regressors. Across six industries, the SHT and the BIC in Table \ref{table:regressor_2} both select models with at least five components, providing evidence for substantial heterogeneity in production technology across firms and plants. Comparing the results of Table \ref{table:regressor_2} with those of Table \ref{table:regressor_1}, the selected number of components for the model with $\log K_{it}$ and $\log L_{it}$ is smaller than that for the model with only $\log K$. This suggests that the number of components may be overestimated if we do not consider a sufficiently flexible production function specification by excluding some regressors.

\begin{landscape}
	\begin{table}[H]
		\caption{The EM Test and the BIC  (Dependent Variable:  $\log \frac{P_{V,t} V_{it}}{P_{Y,t} Y_{it}}$,  Regressor: $\log K_{it}$)}
		\label{table:regressor_1}
	\small
	\begin{tabular}{l rrrrrrrrrr}
	\toprule
	 $M_0$   & 1    & 2    & 3   & 4 & 5  & 6 & 7& 8& 9 & 10 \\
	\midrule
	\multicolumn{11}{l}{ Japanese Chemical } \\
	\cmidrule{1-11} 
	\textit{EM} & $459.4^{***}$  & $236.36^{***}$ & $125.42^{***}$ & $118.36^{***}$ & $87.63^{***}$ & $53.72^{***}$ & $38.69^{***}$& $34.07^{**}$ & \cellcolor{yellow} $36.47$ & -\\
	\textit{BIC} & 1384.76   & 943.61    & 726.53    & 620.32   & 518.86   & 449.92   & 413.49  & 394.46   & 381.48 & \cellcolor{yellow}366.09  \\
 	\bottomrule
	
	\multicolumn{11}{l}{Japanese Electronics} \\
	\cmidrule{1-11}  
	\textit{EM} & $560.06^{***}$ & $213.82^{***}$ & $116.29^{***}$ & $78.81^{***}$ & $47.05^{***}$  & $40.77^{***}$ & $27.4^{**}$  & \cellcolor{yellow}$29.02$ & - & - \\
	\textit{BIC} & 1332.14   & 788.19    & 593.44    & 495.74   & 434.15    & 406.77   & 385.45   & 372.63& 367.31  & \cellcolor{yellow}351.17 \\
 	\bottomrule

	\multicolumn{11}{l}{ Japanese Machine } \\
	\cmidrule{1-11} 
	\textit{EM} & $433.19^{***}$ & $202.92^{***}$ & $80.42^{***}$  & $76.82^{***}$  & $53.83^{***}$ & $34.62^{**}$ & \cellcolor{yellow}$55.65$ & - & - & -\\
	\textit{BIC} & 1355.6    & 940.49    & 757  & 696.06    & 638.48   & 617.4    & 588.94   & 568.71  & 555.15  & \cellcolor{yellow}544.51 \\
 	\bottomrule

	\multicolumn{11}{l}{Chilean Food Products } \\
	\cmidrule{1-11}  
	\textit{EM} & $816.06^{***}$ & $489.37^{***}$    & $169.14^{***}$    & $80.88^{***}$ & $80.63^{***}$ & $52.67^{***}$   & $31.29^{***}$  & $17.16^{**}$ & $20.55^{***}$ & \cellcolor{yellow} $-60.46^{}$	\\
		\textit{BIC} & 6759.39& 5962.74    & 5499.3& 5356.47    & 5301.31    & 5241.91& 5210.27 & \cellcolor{yellow} 5200.71 & 5210.77 & 5222.29 \\ 
 	\bottomrule
	
	\multicolumn{11}{l}{  Chilean Fabricated Metal Products} \\
	\cmidrule{1-11} 
		\textit{EM} & $199.35^{***}$ & $63.25^{***}$  & $49.24^{***}$ & $30.27^{***}$ &  $15.73^{**}$ & $18.25^{**}$ & \cellcolor{yellow} $10.88^{}$	& - & - & - \\
		\textit{BIC}  & 1923.64    & 1744.72    & 1699.97   & 1670.93   & 1661.03&  \cellcolor{yellow} 1659.54	& 1665.08	& 1669.02	& 1680.96&	1695.54\\
 	\bottomrule
	 	
	\multicolumn{11}{l}{ Chilean Textile } \\
	\cmidrule{1-11} 
	\textit{EM} & $201.86^{***}$ & $95.17^{***}$  & $61.43^{***}$ & $31.17^{***}$   & $14.12^{*}$  &  $17.45^{**}$ & \cellcolor{yellow} $7.94^{}$ & - & - & - \\
	\textit{BIC} & 1681.91    & 1499.99    & 1424.93   & 1380.93& 1368.65& \cellcolor{yellow}1364.94 & 1365.72    & 1370.72 &	1382.83 &	1392.24 \\
 	\bottomrule
	\end{tabular}
		\begin{tablenotes}
			\item[1] The estimation is based on the revenue share of intermediate materials using panel data of length $T=3$.	
			\item[2] $~^{*}$, $~^{**}$, and $~^{***}$ indicate significance at the $10 \%$, $ 5\%$, and $1\%$  levels, respectively.	\end{tablenotes}
	\end{table}

	\begin{table}[H]
		\caption{The EM Test and the BIC  (Dependent Variable:  $\log \frac{P_{V,t} V_{it}}{P_{Y,t} Y_{it}}$,  Regressors: $\log K_{it}$ and $\log L_{it}$)}
		\label{table:regressor_2}
		\small
	\begin{tabular}{l rrrrrrrrrr}
	\toprule
	 $M_0$   & 1    & 2    & 3   & 4 & 5  & 6 & 7& 8& 9 & 10 \\
	\midrule
	\multicolumn{11}{l}{ Japanese Chemical } \\
	\cmidrule{1-11}   
	\textit{EM} & $412.35^{***}$ & $224.09^{***}$ & $141.59^{***}$ & $132.24^{**}$ & \cellcolor{yellow}$121.56$ & - & - & - & - & - \\
	\textit{BIC} & 1294.05   & 905.44    & 705.72    & 587.3& 490.07   & 479.74   & \cellcolor{yellow}389.05 & 390.29   & 382.28 & 372.69  \\ 
 	\bottomrule
	
	\multicolumn{11}{l}{Japanese Electronics} \\
	\cmidrule{1-11}   
	\textit{EM} & $573.11^{***}$ & $218.38^{***}$ & $116.07^{***}$ &  $94.76^{**}$ &  \cellcolor{yellow}$47.73$ & - & - & -  & - & -  \\
	\textit{BIC} & 1336.69   & 784.95    & 590.73    & 498.55   & 426.23    & 389.91   & 372.05   & 371.64& \cellcolor{yellow}359.15 & 368.25  \\
 	\bottomrule

	\multicolumn{11}{l}{ Japanese Machine } \\
	\cmidrule{1-11}  
	\textit{EM} & $468.06^{***}$ & $204.01^{***}$ & $93.35^{***}$  & $81.62^{***}$ & $62.00^{***}$    & $37.04^{***}$ & \cellcolor{yellow}$14.21^{}$  & - & - & - \\
	\textit{BIC} & 1360.56   & 915.69    & 736.2& 676.26    & 625.66   & 596.45   & 564.34   & 548.7   & \cellcolor{yellow}536.78 & 539.64  \\
 	\bottomrule

	\multicolumn{11}{l}{Chilean Food Products } \\
	\cmidrule{1-11}   
	\textit{EM}   & $805.09^{***}$  & $478.64^{***}$ & $177.08^{***}$ & $84.13^{***}$  & $80.96^{***}$  & $51.97^{***}$    & $32.3^{**}$ &  \cellcolor{yellow}$19.50$ & -  & -   \\
	\textit{BIC}  & 6732.11& 5952.7& 5506.55    & 5362.27    & 5309.37    & 5257.78& 5233.37 & \cellcolor{yellow} 5229.9 & 5242.9  & 5258.41  \\
 	\bottomrule
	
	\multicolumn{11}{l}{  Chilean Fabricated Metal Products} \\
	\cmidrule{1-11}  
	\textit{EM} & $204.45^{***}$ & $63.57^{***}$  & $49.42^{***}$ & $28.61^{***}$ & \cellcolor{yellow}$18.32^{}$   & - & - & - & - & -  \\
	\textit{BIC}  & 1926.06    & 1747.29    & 1709.44   & 1685.39   & \cellcolor{yellow}1678.71  & 1680.54    & 1685.02 & 1696.19 & 1703.21    & 1723.56  \\
 	\bottomrule
	 	
	\multicolumn{11}{l}{ Chilean Textile } \\
	\cmidrule{1-11}  
	\textit{EM} & $203.69^{***}$ & $90.69^{***}$  & $58.4^{***}$  & $32.55^{***}$ & \cellcolor{yellow}$16.19^{}$   & - & - & - & - & - \\
	\textit{BIC} & 1673.99    & 1495.55    & 1431.18   & 1394.54& 1382.59& 1373.03 & \cellcolor{yellow}1368.8 & 1382.42    & 1394.3  & 1394.09  \\
 	\bottomrule
	\end{tabular}
		\begin{tablenotes}
			\item[1] The estimation is based on the revenue share of intermediate materials using panel data of length $T=3$.	
			\item[2] $~^{*}$, $~^{**}$, and $~^{***}$ indicate significance at the $10 \%$, $ 5\%$, and $1\%$  levels, respectively.	\end{tablenotes}
	\end{table}
\end{landscape}

\section{Conclusion}

The selection of the number of components in a finite normal mixture panel regression model is a crucial practical issue that must be addressed with care. Arbitrary choice of the number of components can result in biased estimates and invalid inferences, and can reduce the credibility of the final outcomes. To tackle this issue, this study proposes the PLRT and an EM test and derives their asymptotic distribution for the null hypothesis of a model with $M_0$ components against the alternative hypothesis with $(M_0 + 1)$ components. We also develop a procedure to consistently select the number of components by sequentially applying the PLRT and EM tests. Through a simulation exercise, we demonstrate that the proposed SHT procedure exhibits good performance in finite samples.

As an empirical application, we estimate the number of production technology types using producer-level panel data from Japan and Chile. We find that most industries in our dataset exhibit a level of heterogeneity that requires a five-or-more-component mixture model when using the Cobb--Douglas production specification or a specification in which the elasticity of inputs depends on capital and labor input linearly. This provides strong evidence of the presence of unobserved heterogeneity in technology types. One important caveat of our empirical exercise is that the class of production functions that we investigate may be restrictive. Investigating production function heterogeneity with more flexible function forms is an important future research topic.

\bibliographystyle{apalike}
\bibliography{references}

\begin{thebibliography}{}

\bibitem[Ackerberg et~al., 2015]{Ackerberg2015}
Ackerberg, D.~A., Caves, K., and Frazer, G. (2015).
\newblock Identification properties of recent production function estimators.
\newblock {\em Econometrica}, 83(6):2411--2451.

\bibitem[Alexandrovich, 2014]{Alexandrovich2014}
Alexandrovich, G. (2014).
\newblock A note on the article `inference for multivariate normal mixtures' by
  j. chen and x. tan.
\newblock {\em Journal of Multivariate Analysis}, 129:245--248.

\bibitem[Ando and Bai, 2016]{AndoBai16jae}
Ando, T. and Bai, J. (2016).
\newblock Panel data models with grouped factor structure under unknown group
  membership.
\newblock {\em Journal of Applied Econometrics}, 31(1):163--191.

\bibitem[Andrews, 1999]{Andrews1999}
Andrews, D. (1999).
\newblock {Estimation When a Parameter is on a Boundary}.
\newblock {\em Econometrica}, 67(6):1341--1383.

\bibitem[Andrews and Currim, 2003]{Andrews2003}
Andrews, R.~L. and Currim, I.~S. (2003).
\newblock {Retention of latent segments in regression-based marketing models}.
\newblock {\em International Journal of Research in Marketing}, 20(4):315--321.

\bibitem[Aza\"{\i}s et~al., 2009]{azais09esaim}
Aza\"{\i}s, J.-M., Gassiat, E., and Mercadier, C. (2009).
\newblock The likelihood ratio test for general mixture models with or without
  structural parameter.
\newblock {\em ESAIM: Probability and Statistics}, 13:301---327.

\bibitem[Balat et~al., 2019]{Balat19mimeo}
Balat, J., Brambilla, I., and Sasaki, Y. (2019).
\newblock Heterogeneous firms: Skilled- labor productivity and the destination
  of exports.

\bibitem[Bonhomme and Manresa, 2015]{Bonhomme15ecma}
Bonhomme, S. and Manresa, E. (2015).
\newblock Grouped patterns of heterogeneity in panel data.
\newblock {\em Econometrica}, 83(3):1147--1184.

\bibitem[Cameron and Heckman, 1998]{Cameron1998}
Cameron, S.~V. and Heckman, J.~J. (1998).
\newblock {Life cycle schooling and dynamic selection bias: models and evidence
  for five cohorts of American males}.
\newblock {\em Journal of Political Economy}, 106(2):262--333.

\bibitem[Chen and Chen, 2001]{chenchen01cjstat}
Chen, H. and Chen, J. (2001).
\newblock The likelihood ratio test for homogeneity in finite mixture models.
\newblock {\em Canadian Journal of Statistics}, 29:201--215.

\bibitem[Chen and Chen, 2003]{chenchen03sinica}
Chen, H. and Chen, J. (2003).
\newblock Tests for homogeneity in normal mixtures in the presence of a
  structural parameter.
\newblock {\em Statistica Sinica}, 13:351--365.

\bibitem[Chen et~al., 2004]{cck04jrssb}
Chen, H., Chen, J., and Kalbfleisch, J.~D. (2004).
\newblock Testing for a finite mixture model with two components.
\newblock {\em Journal of the Royal Statistical Society, Series B}, 66:95--115.

\bibitem[Chen, 1995]{chen95as}
Chen, J. (1995).
\newblock Optimal rate of convergence for finite mixture models.
\newblock {\em Annals of Statistics}, 23(1):221--233.

\bibitem[Chen and Li, 2009]{chenli09as}
Chen, J. and Li, P. (2009).
\newblock Hypothesis test for normal mixture models: The {EM} approach.
\newblock {\em Annals of Statistics}, 37:2523--2542.

\bibitem[Chen and Tan, 2009]{Chen2009a}
Chen, J. and Tan, X. (2009).
\newblock Inference for multivariate normal mixtures.
\newblock {\em Journal of Multivariate Analysis}, 100(7):1367--1383.

\bibitem[Chen et~al., 2014]{Chen14joe}
Chen, X., Ponomareva, M., and Tamer, E. (2014).
\newblock {Likelihood inference in some finite mixture models}.
\newblock {\em Journal of Econometrics}, 182(1):87--99.

\bibitem[Chernoff and Lander, 1995]{chernofflander95jspi}
Chernoff, H. and Lander, E. (1995).
\newblock Asymptotic distribution of the likelihood ratio test that a mixture
  of two binomials is a single binomial.
\newblock {\em Journal of Statistical Planning and Inference}, 43:19--40.

\bibitem[Dacunha-Castelle and Gassiat, 1999]{dacunha99as}
Dacunha-Castelle, D. and Gassiat, E. (1999).
\newblock Testing the order of a model using locally conic parametrization:
  Population mixtures and stationary {ARMA} processes.
\newblock {\em Annals of Statistics}, 27:1178--1209.

\bibitem[Deb and Trivedi, 1997]{Deb1997}
Deb, P. and Trivedi, P.~K. (1997).
\newblock {Demand for medical care by the elderly: a finite mixture approach}.
\newblock {\em Journal of Applied Econometrics}, 12(3):313--336.

\bibitem[Dempster et~al., 1977]{Dempster1977}
Dempster, A.~P., Laird, N.~M., and Rubin, D.~B. (1977).
\newblock Maximum likelihood from incomplete data via the em algorithm.
\newblock {\em Journal of the Royal Statistical Society: Series B
  (Methodological)}, 39(1):1--22.

\bibitem[Doraszelski and Jaumandreu, 2018]{doraszelski2018measuring}
Doraszelski, U. and Jaumandreu, J. (2018).
\newblock Measuring the bias of technological change.
\newblock {\em Journal of Political Economy}, 126(3):1027--1084.

\bibitem[Foutz and Srivastava, 1977]{Foutz77as}
Foutz, R.~V. and Srivastava, R.~C. (1977).
\newblock The performance of the likelihood ratio test when the model is
  incorrect.
\newblock {\em The Annals of Statistics}, 5(6):1183--1194.

\bibitem[Gandhi et~al., 2020]{Gandhi2020}
Gandhi, A., Navarro, S., and Rivers, D.~A. (2020).
\newblock {On the Identification of Gross Output Production Functions}.
\newblock {\em Journal of Political Economy}, 128(8):2973--3016.

\bibitem[Garel, 2001]{garel01jspi}
Garel, B. (2001).
\newblock Likelihood ratio test for univariate {Gaussian} mixture.
\newblock {\em Journal of Statistical Planning and Inference}, 96:325--350.

\bibitem[Garel, 2005]{garel05jspi}
Garel, B. (2005).
\newblock Asymptotic theory of the likelihood ratio test for the identification
  of a mixture.
\newblock {\em Journal of Statistical Planning and Inference}, 131:271--296.

\bibitem[Ghosh and Sen, 1985]{ghoshsen85book}
Ghosh, J.~K. and Sen, P.~K. (1985).
\newblock On the asymptotic performance of the log-likelihood ratio statistic
  for the mixture model and related results.
\newblock In Le~Cam, L. and Olshen, R., editors, {\em Proceedings of the
  Berkeley Conference in Honor of Jerzy Neyman and Jack Kiefer}, volume~2,
  pages 789--806. Belmont, CA: Wadsworth.

\bibitem[Hao, 2017]{Hao2017}
Hao, J. (2017).
\newblock {\em NormalRegPanelMixture: Finite Mixture Model with Normal Panel
  Data}.
\newblock R package version 1.0.

\bibitem[Hartigan, 1985]{Hartigan1985}
Hartigan, J. (1985).
\newblock Failure of log-likelihood ratio test.
\newblock In {\em Proceedings of the Berkeley Conference in Honor of Jerzy
  Neyman and Jack Kiefer}, volume~2, pages 807--810. University of California
  Press 2. Berkeley.

\bibitem[Heckman and Singer, 1984]{Heckman1984}
Heckman, J. and Singer, B. (1984).
\newblock {A method for minimizing the impact of distributional assumptions in
  econometric models for duration data}.
\newblock {\em Econometrica}, 52(2):271--320.

\bibitem[Kamakura and Russell, 1989]{Kamakura1989}
Kamakura, W. and Russell, G. (1989).
\newblock {A probabilistic choice model for market segmentation and elasticity
  structure}.
\newblock {\em Journal of Marketing Research}, 26(4):379--390.

\bibitem[Kasahara and Rodrigue, 2008]{KASAHARA2008}
Kasahara, H. and Rodrigue, J. (2008).
\newblock Does the use of imported intermediates increase productivity?
  plant-level evidence.
\newblock {\em Journal of Development Economics}, 87(1):106--118.

\bibitem[Kasahara et~al., 2022]{Kasahara2022esri}
Kasahara, H., Schrimpf, P., and Suzuki, M. (2022).
\newblock Identification and estimation of production function with unobserved
  heterogeneity.
\newblock Technical report, ESRI Discussion Paper Series No.368.

\bibitem[Kasahara and Shimotsu, 2009]{Kasahara2009}
Kasahara, H. and Shimotsu, K. (2009).
\newblock {Nonparametric Identification of Finite Mixture Models of Dynamic
  Discrete Choices}.
\newblock {\em Econometrica}, 77(1):135--175.

\bibitem[Kasahara and Shimotsu, 2012]{Kasahara2012}
Kasahara, H. and Shimotsu, K. (2012).
\newblock {Testing the number of components in finite mixture models}.

\bibitem[Kasahara and Shimotsu, 2014]{Kasahara2014}
Kasahara, H. and Shimotsu, K. (2014).
\newblock Non-parametric identification and estimation of the number of
  components in multivariate mixtures.
\newblock {\em Journal of the Royal Statistical Society. Series B (Statistical
  Methodology)}, 76(1):97--111.

\bibitem[Kasahara and Shimotsu, 2015]{Kasahara2015a}
Kasahara, H. and Shimotsu, K. (2015).
\newblock {Testing the number of components in normal mixture regression
  models}.
\newblock {\em Journal of the American Statistical Association},
  110(512):1632--1645.

\bibitem[Kasahara and Shimotsu, 2019]{Kasahara2019}
Kasahara, H. and Shimotsu, K. (2019).
\newblock {Testing the Order of Multivariate Normal Mixture Models}.

\bibitem[Kasahara et~al., 2021]{kasahara21}
Kasahara, H., Suzuki, M., and Sawada, Y. (2021).
\newblock The effect of bank recapitalization policy on credit allocation and
  corporate investment: Evidence from a banking crisis in japan: Economic and
  social research institute.
\newblock Technical report, ESRI Discussion Paper Series No.365.

\bibitem[Keane and Wolpin, 1997]{Keane1997}
Keane, M.~P. and Wolpin, K.~I. (1997).
\newblock {The career decisions of young men}.
\newblock {\em Journal of Political Economy}, 105(3):473--522.

\bibitem[Lehmann and Romano, 2005]{lehmannromano05book}
Lehmann, E.~L. and Romano, J.~P. (2005).
\newblock {\em Testing Statistical Hypotheses}.
\newblock Springer, third edition edition.

\bibitem[Lemdani and Pons, 1997]{lemdanipons97spl}
Lemdani, M. and Pons, O. (1997).
\newblock Likelihood ratio tests for genetic linkage.
\newblock {\em Statistics and Probability Letters}, 33:15--22.

\bibitem[Levinsohn and Petrin, 2003]{levinsohn2003estimating}
Levinsohn, J. and Petrin, A. (2003).
\newblock {E}stimating {P}roduction {F}unctions {U}sing {I}nputs to {C}ontrol
  for {U}nobservables.
\newblock {\em Review of Economic Studies}, pages 317--341.

\bibitem[Li and Chen, 2010]{Li2010}
Li, P. and Chen, J. (2010).
\newblock Testing the order of a finite mixture.
\newblock {\em Journal of the American Statistical Association},
  105:1084--1092.

\bibitem[Li et~al., 2009]{lcm09bm}
Li, P., Chen, J., and Marriott, P. (2009).
\newblock Non-finite {Fisher} information and homogeneity: An {EM} approach.
\newblock {\em Biometrika}, 96:411--426.

\bibitem[Li and Sasaki, 2017]{LiSasaki17arxiv}
Li, T. and Sasaki, Y. (2017).
\newblock Constructive identification of heterogeneous elasticities in the
  cobb-douglas production function.

\bibitem[Lin and Ng, 2012]{LinNg12jem}
Lin, C.-C. and Ng, S. (2012).
\newblock Estimation of panel data models with parameter heterogeneity when
  group membership is unknown.
\newblock {\em Journal of Econometric Methods}, 1:42--55.

\bibitem[Lindsay, 1995]{Lindsay1995}
Lindsay, B.~G. (1995).
\newblock {Mixture Models: Theory, Geometry and Applications NSF-CBMS Regional
  Conference Series in Probability and Statistics}.
\newblock {\em Source: NSF-CBMS Regional Conference Series in Probability and
  Statistics}, 5:1--163.

\bibitem[Liu et~al., 2018]{Liu2018}
Liu, G., Fu, Y., Li, P., and Pu, X. (2018).
\newblock Using differential variability to increase the power of the
  homogeneity test in a two-sample problem.
\newblock {\em Statistica Sinica}, 28(1):27--41.

\bibitem[Liu and Shao, 2003]{liushao03as}
Liu, X. and Shao, Y. (2003).
\newblock Asymptotics for likelihood ratio tests under loss of identifiability.
\newblock {\em Annals of Statistics}, 31:807--832.

\bibitem[Lu and Su, 2017]{LuSu17qe}
Lu, X. and Su, L. (2017).
\newblock Determining the number of groups in latent panel structures with an
  application to income and democracy.
\newblock {\em Quantitative Economics}, 8(3):729--760.

\bibitem[McLachlan and Peel, 2004]{McLachlan2000}
McLachlan, G. and Peel, D. (2004).
\newblock {\em {Finite Mixture Models}}.

\bibitem[Melnykov and Maitra, 2010]{Melnykov2010}
Melnykov, V. and Maitra, R. (2010).
\newblock {Finite mixture models and model-based clustering}.
\newblock {\em Statistics Survey}, 4:80--116.

\bibitem[Newey and McFadden, 1994]{Newey1994}
Newey, W.~K. and McFadden, D. (1994).
\newblock {Large sample estimation and hypothesis testing}.
\newblock In {\em Handbook of Econometrics}, volume~4, pages 2111--2245.
  Elsevier.

\bibitem[Niu et~al., 2011]{Niu2011}
Niu, X., Li, P., and Zhang, P. (2011).
\newblock Testing homogeneity in a multivariate mixture model.
\newblock {\em The Canadian Journal of Statistics / La Revue Canadienne de
  Statistique}, 39(2):218--238.

\bibitem[Olley and Pakes, 1996]{olley1996dynamics}
Olley, G.~S. and Pakes, A. (1996).
\newblock {T}he {D}ynamics of {P}roductivity in the {T}elecommunications
  {E}quipment {I}ndustry.
\newblock {\em Econometrica}, pages 1263--1297.

\bibitem[Shen and He, 2015]{Shen2015}
Shen, J. and He, X. (2015).
\newblock Inference for subgroup analysis with a structured logistic-normal
  mixture model.
\newblock {\em Journal of the American Statistical Association},
  110(509):303--312.

\bibitem[Su et~al., 2016]{SuShiPhillips16ecma}
Su, L., Zhentao, S., and Phillips, P. (2016).
\newblock Identifying latent structures in panel data.
\newblock {\em Econometrica}, 84:2215--2264.

\bibitem[Titterington et~al., 1985]{Titterington1985}
Titterington, D.~M., Smith, A.~F., and Makov, U.~E. (1985).
\newblock {\em {Statistical Analysis of Finite Mixture Distributions}}.
\newblock Wiley.

\bibitem[White, 1982]{white82em}
White, H. (1982).
\newblock Maximum likelihood estimation of misspecified models.
\newblock {\em Econometrica}, 50(1):1--25.

\bibitem[Zhu and Zhang, 2004]{zhuzhang04jrssb}
Zhu, H.-T. and Zhang, H. (2004).
\newblock Hypothesis testing in mixture regression models.
\newblock {\em Journal of the Royal Statistical Society, Series B}, 66:3--16.

\end{thebibliography}

\appendix

\section{Proofs of propositions} \label{sec:appendixa}

\begin{proof}[Proof of Proposition \ref{prop:unbounded_likelihood}]

	We first consider a model with an intercept parameter and a variance parameter but without covariates with  $\mathbf W_i = \{ y_{it} \}_{t=1}^T$. 

		Define
	\[
	s_i^2 = \frac{1}{T-1}\sum_{t=1}^T (Y_{it}-\bar Y_i)^2 \quad\text{with}\quad \bar Y_i = \frac{1}{T} \sum_{t=1}^T Y_{it},
	\]
	where $s_i^2$ follows a chi-square distribution with $T-1$ degrees of freedom.
Let $i^* = \arg\min_{i=1,\ldots,n} \{s_i^2\}$ so that  $s_{i^*}^2  = \min\{s_1^2,\ldots,s_n^2\}$ is the minimum of $s_i^2$ across all values of $i$.
We consider a sequence of parameters $\bs\vartheta_{2,n}=(\alpha_n,\bs\theta_{1,n}\t,\bs{\theta}_{2,n}\t)\t$ with $\alpha_n = 1/n$,  $\bs\theta_{1,n}=(\mu_{1,n},\sigma_{1,n}^2)\t=(\bar Y_{i^*},s_{i^*}^2)\t$, and  $\bs\theta_{2,n}=\bs\theta^*=(\mu^*,\sigma^*)\t$ for all $n$. Because $LR_n^*(\bs\vartheta_{2,n})\leq LR_n^*(\tilde{\bs\vartheta}_{2,n})$, it suffices to show that $LR_n^*(\bs\vartheta_{2,n})$ is unbounded in probability.

	Define \[
	\ell(\bs{W}_i;\bs\theta) : = \log f(\bs{W}_i;\bs\theta) = -\frac{T}{2}\log \sigma^2 - \frac{T}{2}\log(2\pi) - \frac{1}{2}\sum_{t=1}^T \left(\frac{Y_{it}-\mu}{\sigma}\right)^2.
	\]
	Then, the LRT statistic for a two-component mixture is written as
		\begin{align}
		LR_n^*(\bs\vartheta_{2,n}) & = 2 \left\{\sum_{i=1}^n \log\left(\alpha_n \prod_{t=1}^T \frac{1}{\sigma_{1,n}}\phi\left(   \frac{Y_{it}-\mu_{1,n}}{\sigma_{1,n}}\right)
	+(1-\alpha_n) \prod_{t=1}^T \frac{1}{\sigma^*}\phi\left(  \frac{Y_{it}-\mu^*}{\sigma^*}\right)\right)-  \sum_{i=1}^n\ell(\bs{W}_i;\bs\theta^*) \right\}\nonumber\\
	& =
	2\sum_{i\neq i^*} \left\{\log\left(   \exp(\log\alpha_n+\ell(\bs{W}_i;{\bs\theta}_{1,n})) + \exp(\log (1-\alpha_n)+\ell(\bs{W}_i;{\bs\theta}^*))\right)-\ell(\bs{W}_i;\bs\theta^*)\right\} \nonumber \\
	&\quad +2 \left\{\log\left(   \exp(\log\alpha_n+\ell(\bs{W}_{i^*};{\bs\theta}_{1,n})) + \exp(\log (1-\alpha_n)+\ell(\bs{W}_{i^*};{\bs\theta}^*))\right)-\ell(\bs{W}_{i^*};\bs\theta^*)\right\} \label{lr-1}.
	\end{align}

	The first term on the right-hand side of (\ref{lr-1}) can be rewritten as
	\begin{align*}
	 = 2(n-1) \log \left(\frac{n-1}{n}\right)
	+ 2	\sum_{i\neq i^*}\log\left(1+\frac{1}{n-1} \exp(\ell(\bs{W}_i;{\bs\theta}_{1,n})-\ell(\bs{W}_i;{\bs\theta}^*))\right),
	\end{align*}
	which is bounded from below by $-1$ as $n\rightarrow \infty$ because  $\lim_{n\rightarrow \infty} 2(n-1) \log \left(\frac{n-1}{n}\right)  =-1$ and $\log\left(1+\frac{1}{n-1} \exp(\ell(\bs{W}_i;{\bs\theta}_{1,n})-\ell(\bs{W}_i;{\bs\theta}^*))\right)\geq 0$ for all $n$.

The second term  on the right-hand side of (\ref{lr-1}) is written as
\begin{align}
2\{ - \log n + \ell(\bs{W}_{i^*};{\bs\theta}_{1,n})\} + 2 \log\left(1 + (n-1)  \exp(\ell(\bs{W}_{i^*};{\bs\theta}^*)-\ell(\bs{W}_{i^*};{\bs\theta}_{1,n}))\right) - 2 \ell(\bs{W}_{i^*};\bs\theta^*),\label{lr-2}
\end{align}
where  $2\{ - \log n + \ell(\bs{W}_{i^*};{\bs\theta}_{1,n})\}$ diverges to infinity as $n\rightarrow
\infty$ by   Lemma \ref{lemma:unbounded_likelihood}, the second term in (\ref{lr-2}) is bounded below from zero, and the third term is bounded in probability because $\ell(\bs{W}_{i^*};\bs\theta^*)=O_p(1)$. Therefore,    for any $M<\infty$, we have
	$\Pr \Big(LR_n^*(\bs\vartheta_{2,n}) \le M \Big) \to 0$ as $n \to \infty$. The stated result follows from $LR_n^*(\bs\vartheta_{2,n})\leq LR_n^*(\tilde{\bs\vartheta}_{2,n})$ for all $n$.

For a model with covariates, we can consider a sequence of parameters $\bs\vartheta_{2,n}=(\alpha_n,\bs\theta_{1,n}\t,\bs{\theta}_{2,n}\t,\bs\gamma_n\t)\t$ with $\alpha_n = 1/n$,  $\bs\theta_{1,n}=(\mu_{1,n},\sigma_{1,n}^2,\bs\beta_{1,n}\t)\t=(\bar Y_{i^*}-\bar {\bs Z}_{i^*}\t\bs\gamma^*, s_{i^*}^2,\bs 0\t)\t$ with $\bar {\bs Z}_{i^*}=(1/T)\sum_{t=1}^T \bs Z_{it}$,    $\bs\theta_{2,n}=\bs\theta^*=(\mu^*,\sigma^*,(\bs\beta^*)\t)\t$, and $\bs\gamma_n=\bs\gamma^*$. Then, repeating the above argument, the stated result follows. 
\end{proof}

\begin{proof}[Proof of Proposition \ref{prop:vartheta_convergence}]
The stated result follows from repeating the proof of Proposition \ref{prop:vartheta_convergence_M}.
\end{proof}

\begin{proof}[Proof of Proposition \ref{prop:expansion}]
The proof follows that of Proposition 2 in \cite{Kasahara2012}.
For a vector $\bs{x}$ and a function $f(\bs{x})$, let $\nabla_{\bs{x}^k}f(\bs{x})$ denote its $k$-th derivative with respect to $\bs{x}$, which can be a multidimensional array. Observe that for any finite $k$ and for a neighborhood $\mathcal{N}$ of $\bs{\psi}^*$, we obtain
\begin{equation}
\begin{aligned}
&E|| \nabla_{\bs{\psi}^k} g(\bs{W}_i;\bs{\psi}^*,\alpha)/ g(\bs{W}_i;\bs{\psi}^*,\alpha)||^2<\infty, \\
&E||\sup_{\bs{\psi}\in\Theta_{\bs{\psi}} \cap \mathcal{N}}\nabla_{\bs{\psi}^k} \log g(\bs{W}_i;\bs{\psi},\alpha)||^2<\infty
\end{aligned} \label{nabla_0}
\end{equation}
because each element of $\nabla_{\bs{\psi}^k} \log g(y|\bs{x},\bs{z};\bs{\psi},\alpha)$ is written as a sum of products of Hermite polynomials. 
 Note also that  the following holds:
\begin{align}
& \nabla_{\eta \lambda_j }L_n(\bs\psi^*,\alpha)=0, \quad \nabla_{\lambda_i \lambda_j\lambda_k } L_n(\bs\psi^*,\alpha)=O_p(n^{1/2}), \label{nabla_1} \\
& \nabla_{\eta\eta \lambda_i }L_n(\bs\psi^*,\alpha) =O_p(n), \quad \nabla_{\eta\eta\eta}L_n(\bs\psi^*,\alpha)=O_p(n), \label{nabla_2}
\end{align} 
where equation (\ref{nabla_1}) follows from Proposition \ref{lemma:KS2018_lemma7}{(a) and (c)} and (\ref{nabla_0}) and equation (\ref{nabla_2}) is a simple consequence of (\ref{nabla_0}).
Furthermore, for a neighborhood $\mathcal{N}$ of $\bs\psi^*$,
\begin{align}
& \sup_{\bs\psi\in\Theta_{\bs\psi} \cap \mathcal{N}} \left| n^{-1}\nabla^{(4)} L_n(\bs\psi ,\alpha)- E\nabla^{(4)}  \log g(\bs W_i;\bs\psi ,\alpha) \right| = o_p(1), \label{nabla_3}\\
& E\nabla^{(4)} g(\bs W_i;\bs\psi,\alpha) \text{ is continuous in }\psi  {\in \Theta_{\bs\psi} \cap \mathcal{N}}. \label{nabla_4}
\end{align} 
  Equations (\ref{nabla_3}) and (\ref{nabla_4})  
follow  from Lemma 2.4 of \cite{Newey1994} and the fact that $\nabla_{\bs{\psi}^k} \log g(\bs{w};\bs{\psi},\alpha)$ is written as a sum of products of Hermite polynomials. 

Taking a fourth-order Taylor expansion of $L_n(\bs{\psi},\alpha)$ around $\bs{\psi}^*$ and using (\ref{nabla_0}) and (\ref{nabla_1}), we can write $L_n(\bs{\psi},{\alpha}) -L_n(\bs{\psi}^*,\alpha)$ as the sum of the relevant terms and the remainder term as follows:
\begin{align}
\lefteqn{ L_n(\bs{\psi},{\alpha}) -L_n(\bs{\psi}^*,\alpha) = } \nonumber\\
& \quad \nabla_{\bs{\eta}}L_n^* (\bs{\eta} - \bs{\eta}^*) + \frac{1}{2!} (\bs{\eta} - \bs{\eta}^*)^{\top}\nabla_{\bs{\eta} \bs{\eta}^{\top}}L_n^*(\bs{\eta} - \bs{\eta}^*) +\frac{1}{2!}  \sum_{i=1}^{q+2}\sum_{j=1}^{q+2} \nabla_{\lambda_i\lambda_j }L_n^*\lambda_i\lambda_j \label{LR0-score0} \\ 
&\quad + \frac{3}{3!} \sum_{i=1}^{q+2} \sum_{j=1}^{q+2} (\bs{\eta} - \bs{\eta}^*)\t\nabla_{\bs\eta \lambda_i\lambda_j} L_n^* \lambda_i\lambda_j    \label{LR0-hessian0} \\
 &\quad + \frac{1}{4!}  \sum_{i=1}^{q+2}\sum_{j=1}^{q+2}\sum_{k=1}^{q+2} \sum_{\ell=1}^{q+2} \nabla_{\lambda_i\lambda_j\lambda_k\lambda_\ell}L_n^*\lambda_i\lambda_j\lambda_k\lambda_\ell + R_{n}(\bs{\psi},\alpha), \label{LR0-hessian1}   \end{align}
where $\nabla L_n^*$ denotes the derivative of $L_n(\bs{\psi},\alpha)$ evaluated at $(\bs{\psi}^*,\alpha)$. In view of (\ref{nabla_1}) and (\ref{nabla_2}), the remainder term is written as
\begin{align}
\lefteqn{ R_n(\bs\psi,\alpha) =  O_p(n^{1/2}) \sum_{i=1}^{q+2}\sum_{j=1}^{q+2}\sum_{k=1}^{q+2} \lambda_i\lambda_j \lambda_k + O_p(n) \left(  \sum_{i=1}^{q+2} ||\bs{\eta}-\bs{\eta}^*||^2 \lambda_i +  ||\bs{\eta}-\bs{\eta}^*||^3 \right)} \label{Rn_2} \\
& +O_p(n)\sum_{i=1}^{q+2}\sum_{j=1}^{q+2}\sum_{k=1}^{q+2}\left( ||\bs{\eta}-\bs{\eta}^*||^4 + ||\bs{\eta}-\bs{\eta}^*||^3 |\lambda_i|+ ||\bs{\eta}-\bs{\eta}^*||^2 |\lambda_i \lambda_j| + ||\bs{\eta}-\bs{\eta}^*|| |\lambda_i \lambda_j \lambda_k| \right) \qquad \label{Rn_3} \\
& + \frac{1}{4!}  \sum_{i=1}^{q+2}\sum_{j=1}^{q+2}\sum_{k=1}^{q+2}\sum_{\ell=1}^{q+2}  \{\nabla_{\lambda_i \lambda_j \lambda_k \lambda_\ell} L_n(\bs{\psi}^\dag,\alpha) - \nabla_{\lambda_i \lambda_j \lambda_k \lambda_\ell} L_n(\bs{\psi}^*,\alpha) \} \lambda_i \lambda_j \lambda_k \lambda_\ell
\label{Rn_4}
\end{align}
with $\bs{\psi}^\dag$ being between $\bs{\psi}$ and $\bs{\psi}^*$. Because $||\sqrt{n}t(\bs{\psi},\alpha)||^2 = n||\bs{\eta} - \bs{\eta}^*||^2 + n \sum_{i=1}^{q+2}\sum_{j=1}^i \alpha^2(1-\alpha)^2 |\lambda_i\lambda_j|^2$, the right-hand side of (\ref{Rn_2}) and the terms in (\ref{Rn_3}) are bounded by $O_p(1)(||\sqrt{n}t(\bs{\psi},\alpha)||+||\sqrt{n}t(\bs{\psi},\alpha)||^2)(||\bs{\eta}-\bs\eta^*||+||\lambda||)$. In view of (\ref{nabla_3}) and (\ref{nabla_4}), (\ref{Rn_4}) is bounded by  $||\sqrt{n}t(\bs{\psi},\alpha)||^2[d(\bs{\psi}^\dagger)+o_p(1)]$ with $d(\bs{\psi}^\dagger)\rightarrow 0$ as $\bs{\psi}^\dagger \rightarrow \bs{\psi}^*$, where a function $d(\bs{\psi}^\dagger)$ corresponds to $n^{-1}\E[\nabla_{\lambda_i \lambda_j \lambda_k \lambda_\ell}L_n(\bs{\psi}^\dagger,\alpha) - \nabla_{\lambda_i \lambda_j \lambda_k \lambda_\ell}L_n(\bs{\psi}^*,\alpha)]$. Therefore, $R_n(\bs{\psi},\alpha)=(1+||\sqrt{n}t(\bs{\psi},\alpha)||)^2[d(\bs{\psi}^\dagger)+o_p(1)  + O_p(||\bs{\psi}-\bs{\psi}^*||)]$, and part (a) follows.

Part (b) follows from  Lemma \ref{lemma:KS2018_lemma7}(c) and (d), the Lindeberg--Levy central limit theorem, and the finiteness of $\bs{\mathcal{I}}$ in part (c).

For part (c), we first provide the formula of $\bs{\mathcal{I}}_n$. Partition $\bs{\mathcal{I}}_n$ as
\[
\bs{\mathcal{I}}_n = \left( \begin{array}{cc}
\bs{\mathcal{I}}_{\bs\eta n} & \bs{\mathcal{I}}_{\bs\eta\bs\lambda n}\\
\bs{\mathcal{I}}_{\bs\eta\bs\lambda n}\t & \bs{\mathcal{I}}_{\bs\lambda n}
\end{array}\right), \quad \bs{\mathcal{I}}_{\bs\eta n}: (p+q+2)\times (p+q+2), \quad \bs{\mathcal{I}}_{\bs\eta\bs\lambda n}: (p+q+2)\times q_\lambda, \quad \bs{\mathcal{I}}_{\bs\lambda n}: q_\lambda \times q_\lambda,
\]
where $q_\lambda$ represents the number of unique terms in $ \sum_{i=1}^{q+2}\sum_{j=1}^{q+2}\sum_{k=1}^{q+2} \sum_{\ell=1}^{q+2} \lambda_i\lambda_j\lambda_k\lambda_\ell$.
$\bs{\mathcal{I}}_{\bs\eta n}$ is given by $\bs{\mathcal{I}}_{\bs\eta n} = - n^{-1}\nabla_{\bs\eta\bs\eta\t}L_n(\psi^*,\alpha)$. For $\bs{\mathcal{I}}_{\bs\eta\bs\lambda n}$, let $A_{ij} = n^{-1}\nabla_{\bs\eta \lambda_i \lambda_j} L_n(\psi^*,\alpha)$  and write the  term in (\ref{LR0-hessian0})   as   $(n/2) \sum_{i=1}^{q+2} \sum_{j=1}^{q+2} (\bs\eta -\bs\eta^*)\t A_{ij} \lambda_i \lambda_j= n \sum_{i=1}^{q+2} \sum_{j=1}^i c_{ij}  (\bs\eta -\bs\eta^*)\t A_{ij} \lambda_i \lambda_j$, where the values of $c_{ij}$ are defined when we introduce $\widetilde{\nabla}_{\bs\theta\bs\theta\t} f^*$ after (\ref{eq:s_1}). Then, by defining  $\bs{\mathcal{I}}_{\bs\eta\bs\lambda n} = -(c_{11}A_{11}, \ldots,  c_{qq}A_{qq}, c_{12} A_{12}, \ldots, c_{q-1,q} A_{q-1,q}) /\alpha(1-\alpha)$, the  term in (\ref{LR0-hessian0}) equals $- n(\bs\eta -\bs\eta^*)\t \bs{\mathcal{I}}_{\bs\eta\bs\lambda n}[\alpha(1-\alpha)v(\bs\lambda)]$. For $\bs{\mathcal{I}}_{\bs\lambda n}$, define $B_{ijk\ell} = n^{-1}(8/4!)\nabla_{\lambda_i \lambda_j\lambda_k \lambda_\ell} L_n(\bs\psi^*,\alpha)$ so that the first term in (\ref{LR0-hessian1}) is written as $(n/8)\sum_{i=1}^{q+2} \sum_{j=1}^{q+2} \sum_{k=1}^{q+2} \sum_{\ell=1}^{q+2} B_{ijk\ell} \lambda_i \lambda_j \lambda_k \lambda_\ell= (n/2) \sum_{i=1}^{q+2} \sum_{j=1}^i \sum_{k=1}^{q+2} \sum_{\ell=1}^k c_{ij}c_{k\ell}B_{ijk\ell}\lambda_i \lambda_j \lambda_k \lambda_\ell$. Define $\bs{\mathcal{I}}_{\bs\lambda n}$ such that the $(ij,k\ell)$ element of $\bs{\mathcal{I}}_{\bs\lambda n}$ is $-c_{ij}c_{k\ell} B_{ijk\ell}/\alpha^2(1-\alpha)^2$, where the values of $ij$ run over $\{(1,1),\ldots,(q,q),(1,2),\ldots,(q-1,q)\}$. Then,  the first term in (\ref{LR0-hessian1}) equals $-(n/2) [\alpha(1-\alpha)v(\bs\lambda)]'\bs{\mathcal{I}}_{\bs\lambda n}[\alpha(1-\alpha)v(\bs\lambda)]$. With this definition of $\bs{\mathcal{I}}_n$, the expansion (\ref{LR0-score0})-(\ref{LR0-hessian1}) is written as (\ref{eq:LR0}) in terms of $\sqrt{n}t(\psi,\alpha)$.

We now show that $\bs{\mathcal{I}}_n \rightarrow_p \bs{\mathcal{I}}$. $\bs{\mathcal{I}}_{\bs\eta n} \rightarrow_p \bs{\mathcal{I}}_{\eta}$ holds trivially. For $\bs{\mathcal{I}}_{\bs\eta\bs\lambda n}$, it follows from Lemma \ref{lemma:KS2018_lemma7}(c) and the law of large numbers that $A_{ij} \rightarrow_p -\E[\nabla_{\bs\eta} l(\bs W;\psi^*,\alpha) \nabla_{\lambda_i\lambda_j}l(\bs W;\psi^*,\alpha)]$, giving $\bs{\mathcal{I}}_{\bs\eta\bs\lambda n} \rightarrow_p E\left[\bs s_{\bs\eta} \bs s_{\bs\lambda\bs\lambda}\t/\alpha(1-\alpha)\right]=\bs{\mathcal{I}}_{\bs\eta \bs \lambda}$. For $\bs{\mathcal{I}}_{\bs\lambda n}$, Lemma \ref{lemma:KS2018_lemma7}(d) and the law of large numbers imply that  $\sum_{i=1}^{q+2} \sum_{j=1}^{q+2} \sum_{k=1}^{q+2} \sum_{\ell=1}^{q+2} B_{ijk\ell} \lambda_i \lambda_j \lambda_k \lambda_\ell \rightarrow_p \\ - \sum_{i=1}^{q+2} \sum_{j=1}^{q+2} \sum_{k=1}^{q+2} \sum_{\ell=1}^{q+2} E[\nabla_{\lambda_i \lambda_j}l(\bs W;\psi^*,\alpha)\nabla_{\lambda_k \lambda_\ell}l(\bs W;\psi^*,\alpha)]\lambda_i \lambda_j \lambda_k \lambda_\ell$, where the factor $(8/4!)=1/3$ in $B_{ijk\ell}$ and the three derivatives on the right-hand side of Lemma \ref{lemma:KS2018_lemma7}(d) cancel each other out. Therefore, we have  $\bs{\mathcal{I}}_{\bs\lambda n} \rightarrow_p E\left[\bs s_{\bs\lambda\bs\lambda}\bs s_{\bs\lambda\bs\lambda}\t  /\alpha^2(1-\alpha)^2\right]=\bs{\mathcal{I}}_{\bs\lambda}$, and $\bs{\mathcal{I}}_n \rightarrow_p \bs{\mathcal{I}}$ follows.

We complete the proof of part (c) by showing that $\bs{\mathcal{I}} = E[\bs{s}(\bs{W}) \bs{s}(\bs{W})\t]$ is finite and non-singular. 
Note that $\bs s(\bs W) $ can be expressed in Hermite polynomials as in (\ref{eq:s_1_hermite}). Then, the finiteness of $\bs{\mathcal{I}}$ follows from  Assumption \ref{assumption:2}(a) and the definition of Hermite polynomials.

To show that $\bs{\mathcal{I}}$ is positive definite, it suffices to show that there exists no multicollinearity in $\bs s(\bs w)$. Suppose, to the contrary, that $\bs {s}(\bs w)$ is multicollinear and that there exists a non-zero vector $\bs a$ that solves the  equation $\bs a\t \bs s(\bs w) = 0$ for all values of $\bs w $. Partition $ s(\bs w)$ as $\bs s(\bs w)=(\bs s_{(\mu)} \t, \bs s_{(\beta)}\t )\t$  with $\bs s_{(\mu)} = (s_{\mu},s_{\sigma},s_{\lambda_{\mu \mu }} ,s_{\lambda_{\mu \sigma }} ,s_{\lambda_{\sigma \sigma }})\t$  and $\bs s_{(\beta)} = (\bs s_{\bs\beta }\t,\bs s_{\bs \gamma }\t, \bs s_{\lambda_{\mu\bs \beta }}\t,\bs s_{\lambda_{\sigma\bs \beta }}\t,\bs s_{\lambda_{\bs \beta \bs \beta }}\t)\t$,
where $\bs s(\bs w) $ is defined in (\ref{eq:s_1}) and  (\ref{eq:s_1_hermite}).
Similarly, partition $\bs a$ as $\bs a = (\bs a_{(\mu)}\t, \bs a_{(\beta)}\t )\t$ so that 
\begin{equation}\label{eq:as}
\bs a\t \bs s(\bs w) = \bs a_{(\mu)}\t \bs s_{(\mu)} + \bs a_{(\beta)}\t \bs s_{(\beta)}.
\end{equation}
By Assumption \ref{assumption:2}(b) and the property of Hermite polynomials, if $\bs a\t \bs s(\bs w) = \bs 0$ for all $\bs w$, then $\bs a_{(\beta)} = \bs 0$.  

Then, in view of (\ref{eq:as}), the stated result follows if we can show that $\bs a_{(\mu)}\t  \bs s_{(\mu)} = \bs 0$ for all $\bs w$ implies $\bs a_{(\mu)}=0$.
Suppose that
\begin{align*}
	\bs a_{(\mu)}\t  \bs s_{(\mu)} & = a_{\mu } \sum_{t=1}^T  H^{1*}_{i,t} +  (a_{\sigma } + a_{\lambda_{\mu \mu }}) \sum_{t=1}^T  H^{2*}_{i,t} +    \frac{a_{\lambda_{\mu \mu }}}{2} \sum_{t=1}^T \sum_{s \neq t} H^{1*}_{i,t} H^{1*}_{i,s}  \\
	& + a_{\lambda_{\mu \sigma }} \sum_{t=1}^T H^{3*}_{i,t} + a_{\lambda_{\mu \sigma }} \sum_{t=1}^T \sum_{s \neq t} H^{1*}_{i,t} H^{2*}_{i,s} +
	3  a_{\lambda_{\sigma \sigma }}\sum_{t=1}^T H^{4*}_{i,t} +  \frac{a_{\lambda_{\sigma \sigma }}}{2} \sum_{t=1}^T \sum_{s \neq t} H^{2*}_{i,t} H^{2*}_{i,s}  = 0
\end{align*}
for all $\bs w$,
where $H^{j*}_{i,t}$ for $j=1,2,3$ is defined in (\ref{eq:Hermite_polynomial})  in Appendix \ref{sec:appendix_score_1}.
 
Because the above equation holds for all values of $\bs w$, with the property of the Hermite polynomials, we have $a_{\mu } = 0, (a_{\sigma } + a_{\lambda_{\mu \mu }}) = 0, a_{\lambda_{\mu \mu }} = 0, a_{\lambda_{\mu \sigma }} = 0, a_{\lambda_{\sigma \sigma }} = 0$. This implies that $\bs a_{(\mu)}=0$.
Therefore,  no multicollinearity exists in $\bs s(\bs w)$ and $\bs{\mathcal{I}}$ is non-singular, proving part (c). 
\end{proof}

\begin{proof}[Proof of Proposition \ref{prop:t2_distribution}]


The proof  is similar to that of Proposition 3 in \cite{Kasahara2015a}.  

The proof of part (a) closely follows the proof of Theorem 1 of \cite{Andrews1999}. Let $\boldsymbol{T}_{n}: = \boldsymbol{\mathcal{I}}_{n}^{1/2}\sqrt{n}\bs{t}(\hat{\bs\psi}_\alpha,\alpha)$. Then, in view of (\ref{eq:LR0}), we have
\begin{align*}
o_p(1)&\leq L_n(\hat{\boldsymbol{\psi}}_\alpha,\alpha) - L_n(\boldsymbol{\psi}^*,\alpha)\\
 &= \boldsymbol{T}_{n}' \boldsymbol{\mathcal{I}}_{n}^{-1/2} \boldsymbol{S}_{n} - \frac{1}{2} ||\boldsymbol{T}_{n}||^2 + R_n(\hat{\boldsymbol{\psi}}_\alpha,\alpha)\\
 &= O_p(||\boldsymbol{T}_{n}||) - \frac{1}{2} ||\boldsymbol{T}_{n}||^2 + (1 + || \boldsymbol{\mathcal{I}}_{n}^{-1/2} \boldsymbol{T}_{n}||)^2 o_p(1)\\
 &= ||\boldsymbol{T}_{n}||O_p(1) - \frac{1}{2} ||\boldsymbol{T}_{n}||^2 + o_p(||\boldsymbol{T}_{n}||) + o_p(||\boldsymbol{T}_{n}||^2) + o_p(1),
\end{align*}
where the third equality holds because $\boldsymbol{\mathcal{I}}_{n}^{-1/2} \boldsymbol{S}_{n} = O_p(1)$ and $R_n(\hat{\boldsymbol{\psi}}_\alpha,\alpha) = o_p((1 + ||\boldsymbol{\mathcal{I}}_{n}^{-1/2}\boldsymbol{T}_{n}||)^2)$ from Propositions \ref {prop:vartheta_convergence} and \ref{prop:expansion}. Rearranging this equation yields $||\boldsymbol{T}_n||^2 \leq 2 ||\boldsymbol{T}_n|| O_p(1)+o_p(1)$. Denote the $O_p(1)$ term by $\varsigma_{n}$. Then, $(||\boldsymbol{T}_{n}||-\varsigma_{n})^2\leq \varsigma_{n}^2 + o_p(1) = O_p(1)$; taking its square root gives $||\boldsymbol{T}_{n}|| \leq O_p(1)$. In conjunction with $\boldsymbol{\mathcal{I}}_{n} \rightarrow_p \boldsymbol{\mathcal{I}}$, we obtain $\sqrt{n}\bs{t}(\hat{\bs\psi}_\alpha,\alpha) = O_p(1)$, and part (a) follows.

For part (b), noting that $L_n(\bs{{\psi}}^*,\alpha)= L_{0,n}(\bs{\gamma}^*_0,\bs{\theta}^*_0)$, 
write
\begin{align}
LR_n & =   \max_{\alpha \in [\epsilon, 1- \epsilon]} 2 \{  L_n(\bs{\hat{\psi}}_\alpha,\alpha)  -   L_n(\bs{{\psi}}^*,\alpha)  \} -  2\{ L_{0,n}(\hat{\bs{\gamma}}_0,\hat{\bs{\theta}}_0)   -   L_{0,n}(\bs{\gamma}^*_0,\bs{\theta}^*_0)   \}. \label{eq:split}
\end{align}

Define \begin{equation*}
\bs{S}_n = \begin{pmatrix}
\bs{S}_{\bs{\eta} n } \\
\bs{S}_{\bs{\lambda} n }
\end{pmatrix}:=  \begin{pmatrix}
n^{-1/2} \sum_{i=1}^n{\bs s}_{\bs\eta}(\bs W_i)\\
n^{-1/2} \sum_{i=1}^n{\bs s}_{\bs\lambda\bs\lambda}(\bs W_i)
\end{pmatrix}, \quad \begin{matrix}
\bs{S}_{\bs{\lambda}, \bs{\eta} n }  := \bs{S}_{\bs{\lambda} n } - \bs{\mathcal{I}}_{\bs{\lambda} \bs{\eta}}  \bs{\mathcal{I}}_{\bs{\eta}}^{-1} \bs{S}_{\bs{\eta} n } , \quad  \bs{G}_{\bs{\lambda}, \bs{\eta} n } := \bs{\mathcal{I}}_{\bs{\lambda},\bs{\eta}}^{-1} \bs{S}_{\bs{\lambda}, \bs{\eta} n }, \\
\bs{t}_{\bs{\eta}, \bs{\lambda} }   := \bs{t}_{\bs{\eta} } -
\bs{\mathcal{I}}_{\bs{\eta} } \bs{\mathcal{I}}_{\bs{\eta} \bs{\lambda} }^{-1} \bs{t}_{ \bs{\lambda} }(\bs{\lambda},\alpha),
\end{matrix}
\end{equation*} 
and split the quadratic form in (\ref{eq:LR0}) to obtain
\begin{align}\label{eq:LR2}
2\{L_n(\bs{\psi},\alpha)-  L_n(\bs{\psi}^*,\alpha)\} =  B_n (\sqrt{n} \bs{t}_{\bs{\eta},\bs{\lambda}} )  + C_n (\sqrt{n} \bs{t}_{\bs{\lambda}}(\bs{\lambda},\alpha) ) +R_n(\bs\psi,\alpha),
\end{align}
where \begin{equation} \label{eq:split2}
\begin{split}
B_n ( \bs{t}_{\bs{\eta},\bs{\lambda}} )  & = 2 \bs{t}_{\bs{\eta},\bs{\lambda}}^\top \bs{S}_{\bs{\eta} n} - \bs{t}_{\bs{\eta},\bs{\lambda}}^\top \bs{\mathcal{I}}_{\bs{\eta}} \bs{t}_{\bs{\eta},\bs{\lambda}},\\
C_n ( \bs{t}_{\bs{\lambda}} ) & = 2 \bs{t}_{\bs{\lambda}}^\top \bs{S}_{\bs{\lambda},\bs{\eta} n} - \bs{t}_{\bs{\lambda}} ^\top \bs{\mathcal{I}}_{\bs{\lambda},\bs{\eta}} \bs{t}_{\bs{\lambda}}\\
& = \bs{G}_{\bs{\lambda}, \bs{\eta} n}^\top \bs{\mathcal{I}}_{\bs{\lambda},\bs{\eta}} \bs{G}_{\bs{\lambda}, \bs{\eta} n} - (\bs{t}_{\bs{\lambda}} - \bs{G}_{\bs{\lambda}, \bs{\eta} n})^\top \bs{\mathcal{I}}_{\bs{\lambda},\bs{\eta}}  (\bs{t}_{\bs{\lambda}}  - \bs{G}_{\bs{\lambda}, \bs{\eta} n}),
\end{split}
\end{equation}
with
$\bs{G}_{\bs{\lambda},\bs{\eta}n}\overset{d}{\rightarrow} \bs{G}_{\bs{\lambda},\bs{\eta}}=(\bs{\mathcal{I}}_{ \bs{\lambda},\bs{\eta}} )^{-1} \bs{S}_{\bs{\lambda},\bs{\eta}  } $ and $\bs{S}_{\bs{\lambda}, \bs{\eta} n } \overset{d}{\to} \bs{S}_{\bs{\lambda},\bs{\eta}  } \sim N(\bs 0, \bs{\mathcal{I}}_{ \bs{\lambda},\bs{\eta}} )$. 
In addition, $R_n(\hat{\boldsymbol{\psi}}_\alpha,\alpha) = o_p(1)$ holds from  Proposition \ref{prop:expansion}(a) and $\sqrt{n}\bs{t}(\hat{\bs\psi}_\alpha,\alpha) = O_p(1)$.

Because $\Delta_{(\gamma,\theta)}f(x; \hat{\bs{\gamma}}_0^*,\hat{\bs{\theta}}_0^*)$ is identical to $\Delta_{\bs\eta} f(x; \bs\psi^*,\alpha)$, a standard analysis gives $2 [L_{0,n}(\hat{\bs{\gamma}}_0,\hat{\bs{\theta}}_0)   -   L_{0,n}(\bs{\gamma}^*_0,\bs{\theta}^*_0)]  = \max_{\bs{t}_{\bs{\eta} }}  B_n (\sqrt{n} \bs{t}_{\bs{\eta} } ) + o_p(1)$.  
Note that the possible values of both $\sqrt{n}\bs{t}_{\bs{\eta}}$ and $\sqrt{n}\bs{t}_{\bs{\eta},\bs{\lambda}}$ approach $\R^{q+2}$. Therefore,  in view of (\ref{eq:LR2}) and (\ref{eq:split2}), we can write equation (\ref{eq:split}) as
\begin{equation}\label{eq:LRTS_expansion_homo_repar}
LR_n  = C_n (\sqrt{n} \bs{t}_{\bs{\lambda}}(\hat{\bs{\lambda}}, \alpha))  + o_p(1),
\end{equation}
where $\hat{\bs{\lambda}}$ is as defined in (\ref{eq:pmle}).

%
The asymptotic distribution of $LR_n$ follows from applying Theorem 3(c)  of \cite[][p. 1362]{Andrews1999} to (\ref{eq:LR2}) and (\ref{eq:LRTS_expansion_homo_repar}).
First,   Assumption 2 of \cite{Andrews1999} holds because Assumption 2* of \cite{Andrews1999} holds because of Proposition 2(a). 
Second, Assumption 3 of \cite{Andrews1999} holds with $B_T = n^{1/2}$  and $T=n$ because $\bs{S}_{\bs{\lambda}, \bs{\eta} n } \overset{d}{\to} \bs{S}_{\bs{\lambda},\bs{\eta}  } \sim N(\bs 0, \bs{\mathcal{I}}_{ \bs{\lambda},\bs{\eta}} )$
 and $\bs{\mathcal{I}}_{ \bs{\lambda},\bs{\eta}} $ is non-singular.
Assumption 4 of \cite{Andrews1999} holds from part (a).
Assumption 5 of \cite{Andrews1999} follows from Assumption 5* and Lemma 3 of \cite{Andrews1999} with $b_T=n^{1/2}$ because $\alpha(1-\alpha)v(\Theta_{\bs\lambda})$ is locally equal to $\Lambda_{\bs{\lambda}}$.  
Therefore, it follows from Theorem 3(c) of \cite{Andrews1999} that $C_n (\sqrt{n} \bs{t}_{\bs{\lambda}}(\hat{\bs{\lambda}},\alpha) )   \overset{d}{\to} (\hat{\bs{t}}_{\bs\lambda} )^\top \bs{\mathcal{I}}_{\bs\lambda,\bs\eta} \hat{\bs{t}}_{\bs\lambda}$, 
where $\hat{\bs{t}}_{\bs\lambda}$ is defined by (\ref{eq:t_lambda_def}).

\end{proof}

\begin{proof}[Proof of Proposition \ref{prop:infinite_fisher}]
	Under $H_{2,0}$,  we obtain $\vartheta_{M_0+1} \in \Upsilon_{2h}^*$, 
	\begin{equation}
		\begin{split}
		& \E [\{ \nabla_{\alpha_h}  \log f_{M_0+1}  (\bs W_i, \vartheta_{M_0+1} )\}^2  ]  \\
		& = \int  \frac{ \{ f( \bs w; \bs \mu_{h}, \bs \sigma_{h} )   - f( \bs w; \bs \mu_{M_0}^*, \bs \sigma_{M_0}^* )  \}^2 }{\sum_{j=1}^{M_0} \alpha_j^* f( \bs w; \bs \mu_j^*, \bs \sigma_j^* ) } d \bs w \\
		& = \int  \frac{ \{ f( \bs w; \bs \mu_{h}, \bs \sigma_{h} ) \}^2 }{\sum_{j=1}^{M_0} \alpha_j^* f( \bs w; \bs \mu_j^*, \bs \sigma_j^* ) } d \bs w  + \int  \frac{ \{  f( \bs w; \bs \mu_{M_0}^*, \bs \sigma_{M_0}^* )  \}^2 }{\sum_{j=1}^{M_0} \alpha_j^* f( \bs w; \bs \mu_j^*, \bs \sigma_j^* ) } d \bs w -   2  \int  \frac{  f( \bs w; \bs \mu_{h}, \bs \sigma_{h} )  f( \bs w; \bs \mu_{M_0}^*, \bs \sigma_{M_0}^* )   }{\sum_{j=1}^{M_0} \alpha_j^* f( \bs w; \bs \mu_j^*, \bs \sigma_j^* ) } d \bs w .
	\end{split} \label{eq:infinite_fisher}
 \end{equation}
The latter two terms on the right-hand side of (\ref{eq:infinite_fisher}) are bounded because 
$f( \bs w; \bs \mu_{M_0}^*, \bs \sigma_{M_0}^* )  / \sum_{j=1}^{M_0} \alpha_j^* f( \bs w; \bs \mu_j^*, \bs \sigma_j^* )  \le (1 / \alpha_{M_0}^*)$ for any $\bs w$ and $f(\bs w;\mu,\sigma)$ integrates to one. 
Therefore, the left-hand side of (\ref{eq:infinite_fisher}) goes to infinity if and only if the first term on the right-hand side of (\ref{eq:infinite_fisher}) goes to infinity. 

Because $\max_j \alpha_j \le \sum_{j}^{M_0} \alpha_j \le M_0 \max_j \alpha_j$, we obtain 
\[  \frac{1}{M_0}  \frac{ \{  f( \bs w; \bs \mu_{h}, \bs \sigma_{h} )  \}^2 }{ \max_j  \{ \alpha_j^* f( \bs w; \bs \mu_j^*, \bs \sigma_j^* )  \} }  \le   \frac{ \{  f( \bs w; \bs \mu_{h}, \bs \sigma_{h} )  \}^2 }{ \sum_{j=1}^{M_0} \alpha_j^* f( \bs w; \bs \mu_j^*, \bs \sigma_j^* ) }  \le   \frac{ \{  f( \bs w; \bs \mu_{h}, \bs \sigma_{h} )  \}^2 }{ \max_j  \{ \alpha_j^* f( \bs w; \bs \mu_j^*, \bs \sigma_j^* )  \} }. \]

Without loss of generality, we assume that $\sigma_{M_0}^* = \max \{ \sigma_1^*,\ldots, \sigma_{M_0}^2 \}$ and that the maximum is unique. Then, there exists $M \in (0, \infty )$, such that $\max_j \{ \alpha_j^* f (\bs w, \bs \mu_j^*, \bs \sigma_j^2) \} = \alpha_{M_0}^* f (\bs w, \bs \mu_{M_0}^*, \bs \sigma_{M_0}^2)$ when $| y_t| > M  ~ \forall t = 1,\ldots, T$.
Note that 
\begin{equation}\label{bound}
\begin{split}
\frac{ \{  f( \bs w; \bs \mu_{h}, \bs \sigma_{h} )  \}^2 }{ f( \bs w; \bs \mu_{M_0}^*, \bs \sigma_{M_0}^* )  \} } & = \prod_{t=1}^T \frac{ (\sigma_{M_0}^{*})^2 }{(2 \pi )^{1/2} \sigma_h^2 } \exp \left\{ -\frac{1}{\sigma_h^2} (y_t - \mu_h)^2 + \frac{1}{2 (\sigma_{M_0}^{*})^2 } (y_t - \mu_{M_0}^{*})^2 \right\} \\
& = \frac{(\sigma_{M_0}^*)^{2T}}{(2 \pi )^{1/2} \sigma_h^{2T} } \exp \left\{ -  \frac{1}{\sigma_h^2}\sum_{t=1}^T( y_t - \mu_h)^2 + \frac{1}{2 (\sigma_{M_0}^{*})^2} \sum_{t=1}^T(y_t - \mu_{M_0}^{*})^2 \right\}.
\end{split}
\end{equation}

The stated result follows because the integral of the right-hand side of (\ref{bound})  over $|y| \ge M$ is finite if $\sigma_h^2/ \sigma_{M_0}^{2*} < 2$ and infinite if  $\sigma_h^2/ \sigma_{M_0}^{2*} > 2$. When $\sigma_h^2/ \sigma_{M_0}^{2*} = 2$, it is finite if $\mu_h = \mu_{M_0}^*$ and infinite if $\mu_h \neq \mu_{M_0}^*$.
\end{proof}
\begin{proof}[Proof of Proposition \ref{prop:vartheta_convergence_M}]
Our panel data model can be viewed as a special case of the $T$-dimensional multivariate normal mixture models, where the variance-covariance matrix for each component is given by a $T\times T$ diagonal matrix, $\Sigma_j:=\text{diag}(\sigma_j^2,\ldots,\sigma_2^2)$. 
\cite{Chen2009a} provide the consistency proof for the PMLE for a multivariate normal mixture under their conditions C1--C3 for the penalty function. However, \cite{Alexandrovich2014} identifies a weakness in the proof of  \cite{Chen2009a} and provides an alternative consistency proof by strengthening condition C3 of  \cite{Chen2009a}. Their $p_n(G)$ and $\tilde p_n(G)$ correspond to our  $\tilde p_n(\bs\vartheta_2)$ and $p_n(\bs\vartheta_2)$, respectively; consequently,
the conditions C1 and C2 in \cite{Chen2009a} and a version of condition C3 strengthened by \cite{Alexandrovich2014} can be stated in our notation as follows:
\begin{description}
\item[C1.] 
 The penalty function is written as $\tilde p_n(\bs\vartheta_M)=\sum_{j=1}^M p_n(\sigma_j^2)$.
 \item[C2.] For any  fixed $\bs\vartheta_M$ with $\sigma_j^2>0$ for $j=1,2,...,M$, we have $\tilde p_n(\bs\vartheta_M)=o(n)$ and $\sup_{\bs\vartheta_M\in\bs\Theta_M} \max\{0,\tilde p_n(\bs\vartheta_M)\}=o(n)$. In addition, $\tilde p_n(\bs\vartheta_M)$ is differentiable with respect to $\bs\vartheta_M$ and as $n\rightarrow \infty$, $\nabla_{\bs\vartheta_M} \tilde p_n(\bs\vartheta_M)=o(n^{1/2})$ at any fixed $\bs\vartheta$ such that $\sigma_j^2>0$ for $j=1,2,...,M.$
\item[A version of C3 by \cite{Alexandrovich2014}.] For a sufficiently large $n$, $p_n(\sigma_j^2) \leq \left(\frac{3}{4}\sqrt{n\log\log n}\right)\log(\sigma_j^2)$, when  $\sigma^2_j<c n^{-2}$ for some $c>0$.
\end{description} 
The consistency of the PMLE, $\hat {\bs\vartheta}_{M_0}$, and  $\hat {\bs\vartheta}_{M_0+1}$, follows from Theorems 1 and 3 of \cite{Chen2009a} and Corollary 3 of \cite{Alexandrovich2014} if we can show that the above three conditions hold for our penalty function  (\ref{eq:penalty}).
Given (\ref{eq:penalty}), C1 trivially holds. Under Assumption \ref{assumption:1}(b), C2 also holds because $a_n=O(n^{1/4-\zeta})$ with $\zeta>0$ implying $a_n=o(n)$ or $o(n^{1/2})$, and $\Delta_{\sigma_j^2} \tilde p_n(\bs\vartheta) = - a_n ( - \sigma_0^2/(\sigma_j^2) + 1/\sigma_j^2) = o(n^{1/4})$ if $\sigma_j^2>0$. For C3, suppose that $\sigma_j^2< n^{-2}$. Then, because $a_n=o(n^{1/4})$  and $a_n>0$, 
  $p_n(\sigma_j) = - a_n \left(  \sigma_j^{-2}\sigma_0^2  + 2\log (\sigma_j/\sigma_0) -1\right)  < - c_n\left( n^{9/4} \sigma_0^2  - 2n^{5/4}\log (n\sigma_0) -n^{1/4} \right) <  \left(\frac{3}{4}\sqrt{n\log\log n}\right)2\log(n) $ when $n$ is sufficiently large, where $c_n$ is a sequence of positive numbers that are bounded.   Therefore,  $\tilde p_n(\bs\vartheta_M)$ satisfies the above three conditions, and  the stated result follows from Theorems 1 and 3 of \cite{Chen2009a} and Corollary 3 of \cite{Alexandrovich2014}.
\end{proof}

\begin{proof}[Proof of Proposition \ref{prop:tm0_distribution}] 
    For $h = 1,\ldots, M_0$,  let $\mathcal{N}_h^* \subset \Theta_{\bs{\vartheta}_{M_0 + 1}}(\epsilon) $ be a sufficiently small closed neighborhood of $\Upsilon^*_{1h}$ such that $\alpha_h,\alpha_{h+1} > 0$ holds and $\Upsilon^*_{1k} \not\subset \mathcal{N}_h^*$ if $k \neq h$. Consider the following one-to-one reparameterization from the $(M_0 + 1)$-component model parameter $\bs{\vartheta}_{M_0 + 1} = (\alpha_1,\ldots,\alpha_{M_0},\bs{\theta}^\top_1,\ldots,\bs{\theta}^\top_h,\bs{\theta}^\top_{h+1},\ldots,\bs{\theta}^\top_{M_0 + 1},\bs{\gamma}^\top)^\top$.
     Similar to (\ref{eq:m0_repar2}), the one-to-one reparameterization  for testing the null hypothesis $H_{0,1h}$ is given by
    \begin{equation*}
    \begin{pmatrix}
    \bs{\lambda}_h \\
    \bs{\nu}_h
    \end{pmatrix} := \begin{pmatrix}
    \bs{\theta}_{h} - \bs{\theta}_{h+1} \\
    \tau \bs{\theta}_{h} + (1 - \tau) \bs{\theta}_{h+1}
    \end{pmatrix} \text{ so that }
    \begin{pmatrix}
    \bs{\theta}_{h} \\
    \bs{\theta}_{h+1}
    \end{pmatrix} = \begin{pmatrix}
    \bs{\nu} + ( 1- \tau) \bs{\lambda} \\
    \bs{\nu} - \tau \bs{\lambda}
    \end{pmatrix},
    \end{equation*} and  $\alpha_j$ is reparameterized for $j=1,2,...,M_0$ as
    \begin{align*}
     (\pi_1,\ldots,\pi_{h-1},\pi_h,\pi_{h+1},\ldots , \pi_{M_0 - 1}) &= (\alpha_1,\ldots,\alpha_{h-1}, (\alpha_h + \alpha_{h+1}),\alpha_{h+2},\ldots,\alpha_{M_0})\\
 \tau  &=  {\alpha_h}/({\alpha_h + \alpha_{h+1}})
      \end{align*}
    so that $\pi_h  = \alpha_h + \alpha_{h+1}$ and  $\pi_{M_0} = 1 - \sum_{j=1}^{M_0 - 1} \pi_j$. 
    
    Collect the reparameterized parameters except $\tau$ as 
\begin{equation*}
 \bs{\psi}_{h,\tau}  = (\bs\eta\t,\bs{\lambda}^\top_h)^\top\quad\text{with}\quad\bs \eta= (\pi_1,\ldots,\pi_{M_0-1},\bs{\theta}^\top_1,\ldots,\bs{\theta}^\top_{h-1},\bs{\nu}^\top_h,\bs{\theta}^\top_{h+2},\ldots, \bs{\theta}^\top_{M_0+1}, \bs{\gamma})^\top.
\end{equation*}  
 In the reparameterized model, the null restriction $\bs{\theta}_h  = \bs{\theta}_{h+1}$ implied by $H_{0,1h}$ holds if and only if $\bs{\lambda}_h = 0$. Under $H_{0,1h}$, we have $\lambda_h^*=0$ and 
 $\eta^*=(\alpha_1^*,...,\alpha_{M_0-1}^*, (\bs{\theta}^*_1)^\top,\ldots, (\bs{\theta}^*_{M_0})^\top, (\bs{\gamma}^*)^\top)^\top$.  
  Define the log-likelihood under the reparameterized parameters as
    \begin{equation*}
    f_{M_0+1}^{h}(\bs{w};\bs{\psi}_{h,\tau},\tau )  = \pi_h g^h(\bs{w},\bs{\psi}_{h,\tau}, \tau) +  \sum_{j=1}^{h-1} \pi_j f(\bs{w};\bs{\gamma},\bs{\theta}_j) +  \sum_{j=h}^{M_0} \pi_{j+1} f(\bs{w};\bs{\gamma},\bs{\theta}_{j+1}),
    \end{equation*}
    where $g^h(\bs{w},\bs{\psi}_{h,\tau}, \tau)$ is defined similarly to (\ref{eq:repar}) as
    \begin{equation}
    g^h(\bs{w},\bs{\psi}_{h,\tau}, \tau) = \tau f(\bs{w};\bs{\gamma},\bs{\nu}_h + (1 - \tau) \bs{\lambda}_h) + (1 - \tau) f(\bs{w};\bs{\gamma},\bs{\nu}_h - \tau \bs{\lambda}_h).
    \end{equation}

 Define the local PMLE of $\bs{\psi}_{h,\tau}$ by
 \begin{equation}\label{eq:psi_pmle}
 \bs{\hat{\psi}}_{h,\tau} := \argmax_{\bs{\psi}_{h,\tau}  \in \mathcal{N}_h^*}  L^h_n(\bs{\psi}_{h,\tau}, \tau)  + \sum_{j=1}^{M_0} p_n(\sigma_j^2(\bs{\psi}_{h,\tau}, \tau) ),
 \end{equation}
 where $L_n^h(\bs{\psi}_{h,\tau}, \tau) := \sum_{i=1}^N \log g^h(\bs{W}_i;\bs{\psi}_{h,\tau}, \tau) $ and $\sigma_j^2(\bs{\psi}_{h,\tau}, \tau) $ is the value of $\sigma_j^2$ implied by the values of $\bs{\psi}_{h,\tau}$ and $\tau$.
 Because $\bs{\psi}_{h,\tau}^*$ is the only parameter value in $\mathcal{N}_h^*$ that generates the true density, $\bs{\hat{\psi}}_{h,\tau} - \bs{\psi}_{h,\tau}^* = o_p(1)$ follows Proposition \ref{prop:t2_distribution}.

 For $\epsilon \in (0, 1/2)$, define the LRTS for testing $H_{0,1h}$ as $LR_{n}^{M_0,h}  := \max_{\tau \in [\epsilon,1 -\epsilon ]} 2 ( L_n^h(\hat{\bs{\psi}}_{h,\tau}, \tau)  - L_{0,n}( \hat{\bs{\vartheta}}_{M_0}))$. Because $\hat{\sigma}_j^2 - \sigma_{0,j}^2 = O_p(n^{-1/4})$ under the null hypothesis (cf. Proposition \ref{prop:t2_distribution}(a)),  we have $ \tilde p_n(\bs{\vartheta}_{M_0+1 })=o_p(1)$ by Assumption \ref{assumption:3}(c), and $PLR_{n}^{M_0,h}-LR_{n}^{M_0,h}=o_p(1)$ follows for $h=1,...,M_0$.

Then, in view of (\ref{eq:LR_M0_max}), the stated result holds if 
 \begin{equation}
(LR_{n}^{M_0,1},\ldots,LR_{n}^{M_0,M_0})^\top \overset{d}{\to} (\hat{\bs{t}}^1_{\lambda})^\top \bs{\mathcal{I}}^1_{\eta,\lambda} (\hat{\bs{t}}^1_{\lambda}) ,   \ldots, (\hat{\bs{t}}^{M_0}_{\lambda})^\top \bs{\mathcal{I}}^{M_0}_{\eta,\lambda} (\hat{\bs{t}}^{M_0}_{\lambda}) )^\top.
\end{equation}
Observe that $L^h_n(\bs{\psi}_{h,\tau}, \tau)  -  L^h_n(\bs{\psi}^*_{h,\tau}, \tau) $ admits the same expansion as
 $L_n(\bs{\hat{\psi}},\alpha)  -   L_n(\bs{{\psi}}^*,\alpha)  $ in (\ref{eq:LR1}) and (\ref{eq:LR2}) when
  $(\alpha,\bs{t}(\bs{\psi}, \alpha), \bs{t}_{\bs\lambda}(\bs{\lambda}, \alpha), \bs S_n, \bs G_n, \bs{\mathcal{I}}_n,R_n(\bs\psi,\alpha))$ is replaced with   $(\tau,\bs{t}^h(\bs{\psi}^h, \tau), \bs{t}_{\bs\lambda}^h(\bs{\lambda}^h, \tau),  \bs S_n^h, \bs G_n^h,\bs{\mathcal{I}}_n^h,R_n^h(\bs\psi^h,\tau))$, where $(\bs S_n^h, \bs{\mathcal{I}}_n^h)$ is defined similarly to $(\bs S_n, \bs{\mathcal{I}}_n)$ but  
  $(\bs{s}_{\bs{\eta}},\bs{s}_{\bs{\lambda\lambda}})$ is replaced with $(\tilde{\bs{s}}_{\bs{\eta}}, {\bs{s}}^h_{\bs{\lambda\lambda}})$ and $\bs G_n^h := (\bs{\mathcal{I}}_n^h)^{-1} \bs S_n^h$.  Applying the proof of Proposition \ref{prop:expansion}, we have ${\bs S}_n^h\overset{d}{\to} {\bs S}^h \sim N(\bs 0, \bs{\mathcal{I}}^h)$ and $\bs{\mathcal{I}}_n^h\overset{p}{\to} \bs{\mathcal{I}}^h$. Then, (\ref{eq:psi_pmle}) follows from the proofs of  Propositions \ref{prop:expansion} and \ref{prop:t2_distribution} for each local PMLE when $(\bs G_n, \hat{\bs t}_{\bs\lambda},\bs{\mathcal{I}}_{\bs\lambda,\bs\eta})$ is replaced with  $(\bs G_n^h, \hat{\bs t}_{\bs\lambda}^h,\bs{\mathcal{I}}_{\bs\lambda,\bs\eta}^h)$ and the results are collected; note that $(\bs S_n^1,...,\bs S_n^{M_0})\overset{d}{\to} (\bs S^1,...,\bs S^{M_0})$. 
\end{proof}

\begin{proof}[Proof of Proposition \ref{prop:KS18_prop4}]  

The proof  is similar to that of Proposition 7 in \cite{Kasahara2015a}.  
    Let $\omega_{n}^h$ denote the sample counterpart of $(\hat{\bs t}^h_{\bs\lambda})^\top \bs{\mathcal{I}}^h_{\bs\lambda,\bs\eta} \hat{\bs t}^h_{\bs\lambda}$ in Proposition \ref{prop:tm0_distribution} such that the LRTS satisfies $2 \{  L_n^h(\hat{\bs{\psi}}_{h,\tau},\tau) -  L_{0,n}(\hat{\bs{\vartheta}}_{M_0})\}  = \omega_{n}^h + o_p(1)$,
    where $\hat{\bs{\psi}}_{h,\tau} $ is the local PMLE as defined in (\ref{eq:psi_pmle}) and $\omega_{n}^h$ is defined similarly to $C_n(\sqrt{n} \bs t_{\bs\lambda}(\hat{\bs\lambda},\alpha))$ in (\ref{eq:LRTS_expansion_homo_repar}) but with $(\bs t_{\bs\lambda}(\hat{\bs\lambda},\alpha),\bs S_{n},\bs G_n,\bs{\mathcal{I}}_{\bs\lambda,\bs\eta})$ replaced with $(\bs t_{\bs\lambda}^h(\hat{\bs\lambda}^h,\tau),\bs S_n, \bs G^h_{n},\bs{\mathcal{I}}^h_{\bs\lambda,\bs\eta})$ in the proof of Proposition \ref{prop:tm0_distribution}.
     
  First, we show that $M_n^{h(1)}(\tau_0) =2 \{  PL_n^h( \bs{\vartheta}_{M_0 + 1}^{h(1)}(\tau_0),\tau_0 ) -  L_{0,n}(\hat{\bs{\vartheta}}_{M_0})\}= \omega_{n}^h + p(\tau_0) + o_p(1)$.  Define $\bs{\vartheta}_{M_0 + 1}^{h *}(\tau_0)$ by the value of $\bs{\vartheta}_{M_0 + 1}$ in  $\Theta_{\bs{\vartheta}_{M_0 +1}}^h(\tau_0) := \{ \bs\vartheta \in  {\bs{\Psi}}_h^* : \alpha_h / (\alpha_h + \alpha_{h+1} ) =  \tau_0 \}$. Because $\bs{\vartheta}_{M_0 + 1}^{h *}(\tau_0)$ is the only value of $\bs{\vartheta}_{M_0 + 1}$  that yields the true density if $\bs{\vartheta}_{M_0 + 1}  \in  {\bs{\Psi}}_h^*$ in (\ref{Omega}) and $\alpha_h / (\alpha_h + \alpha_{h+1} ) =  \tau_0$, $\bs{\vartheta}_{M_0 + 1}^{h(1)}(\tau_0) $ equals a reparameterized local PMLE in the neighborhood of $\bs{\vartheta}_{M_0 + 1}^{h *}(\tau_0)$, and $\bs{\vartheta}_{M_0 + 1}^{h(1)}(\tau_0) -\bs{\vartheta}_{M_0 + 1}^{h *}(\tau_0)=o_p(1)$ holds in view of Proposition \ref{prop:vartheta_convergence_M}. Furthermore, by the consistency of $\sigma_j^{h(1)}$ and  $a_n=O(1)$, we have $\tilde p( \bs{\vartheta}_{M_0 + 1}^{h(1)}(\tau_0)) \overset{p}{\rightarrow} 0$. 
   Therefore, $M_n^{h(1)}(\tau_0) = \omega_{n}^h + p(\tau_0) + o_p(1)$ follows from repeating the proof of Proposition \ref{prop:tm0_distribution}. Finally, $EM_n^{(1)}\overset{d}{\rightarrow} \max_{h=1}^{M_0} \{\omega_{n}^h\} $ holds because $\{0.5\}\in \mathcal{T}$ and $p(0.5)=0$.
    
    We proceed to show that $M_n^{h(K)}(\tau_0) = \omega_{n}^h +p(\tau_0) +  o_p(1)$ for any finite $K$. Because a generalized EM step never decreases likelihood \citep{Dempster1977}, we have 
    \begin{equation}\label{eq:PL-K}
      PL_n( \bs{\vartheta}_{M_0 + 1}^{h(K)}(\tau_0),\tau^{h(K)}(\tau_0) )  > PL_n( \bs{\vartheta}_{M_0 + 1}^{h(1)}(\tau_0),\tau^{h(1)}(\tau_0) ).
     \end{equation} 
    Therefore, it follows from Theorem 1 of \cite{Chen2009a}, Lemma \ref{tau_update} in Appendix \ref{sec:appendixc}, and induction that $\bs{\vartheta}_{M_0 + 1}^{h(K)}(\tau_0)  - \bs{\vartheta}_{M_0 + 1}^{h*} = o_p(1)$ for any finite $K$.
    Let $\tilde{\bs{\vartheta}}_{M_0 + 1}^h$ be the maximizer of $PL_{M_0 +1}(\bs{\vartheta}_{M_0 +1},\tau^{h(K)}(\tau_0))$ under the constraint of $\alpha_h / ( \alpha_h + \alpha_{h+1}) = \tau^{h(K)}(\tau_0)$ in an arbitrary small neighborhood of $\bs{\vartheta}^{h *}_{M_0+1}(\tau^{(K)})$. Then, $2 \{  PL_n^h( \tilde{\bs{\vartheta}}_{M_0 + 1}^h,\tau^{h(K)}(\tau_0)) -  L_{0,n}(\hat{\bs{\vartheta}}_{M_0})\}  = \omega_{n}^h  + p(\tau_0) + o_p(1)$ holds from the definition of $\tilde{\bs{\vartheta}}_{M_0 + 1}^h$ and $\tilde p( \tilde{\bs{\vartheta}}_{M_0 + 1}^h) \overset{p}{\rightarrow} 0$   by repeating the proof of Proposition \ref{prop:tm0_distribution}.  It also follows from  the consistency of $\bs{\vartheta}_{M_0 + 1}^{h(K)}(\tau_0)$ that $PL_{n}(\tilde{\bs{\vartheta}}_{n}^h,\tau^{h(K)}(\tau_0)) \ge PL_{n}(\bs{\vartheta}_{M_0 + 1}^{h(K)}(\tau_0),\tau^{h(K)}(\tau_0)) + o_p(1)$.  Therefore, in view of (\ref{eq:PL-K}), we have
        \begin{equation} \label{pl-ineq}
    PL_{n}(\tilde{\bs{\vartheta}}_{n}^h,\tau^{h(K)}(\tau_0)) \ge PL_{n}(\bs{\vartheta}_{M_0 + 1}^{h(K)}(\tau_0),\tau^{h(K)}(\tau_0)) + o_p(1)\ge PL_n( \bs{\vartheta}_{M_0 + 1}^{h(1)}(\tau_0),\tau^{h(1)}(\tau_0)).
    \end{equation}
 Finally,  because $2 \{  PL_n( \tilde{\bs{\vartheta}}_{M_0 + 1}^h,\tau^{h(K)}(\tau_0)) -  L_{0,n}(\hat{\bs{\vartheta}}_{M_0})\}  = \omega_{n}^h +p(\tau_0) + o_p(1)$ and $2 \{  PL_n( \bs{\vartheta}_{M_0 + 1}^{h(1)}(\tau_0),\tau^{h(1)}(\tau_0) ) -  L_{0,n}(\hat{\bs{\vartheta}}_{M_0})\}  = \omega_{n}^h +p(\tau_0)  + o_p(1)$, it follows from (\ref{pl-ineq}) that $M_n^{h(K)}(\tau_0) =2 \{  PL_n( \bs{\vartheta}_{M_0 + 1}^{h(K)}(\tau_0) ) -  L_{0,n}(\hat{\bs{\vartheta}}_{M_0})\}  = \omega_{n}^h +p(\tau_0)+ o_p(1)$ holds for all $h$. The stated result then follows from the definition of $EM_n^{(K)}$  and  $\{0.5\}\in \mathcal{T}$. 
 \end{proof}

\begin{proof}[Proof of Proposition \ref{prop:local-power}]
Let $\bs\psi_n=((\bs\nu^*)\t,\bs\lambda_n\t)\t$ be the value of $\bs\psi$ under $H_{1,n}: \bs\vartheta=  \bs\vartheta_{2,n}$ and let $\bs{h}=(\bs{0}\t,\bs{h}_{\bs\lambda}\t)\t$, where $\bs{h}_{\bs\lambda}$ is defined by \eqref{eq:h}.
Let $\mathbb{P}_{\bs\vartheta}$ be the probability measure on $\{\bs W_i\}_{i=1}^n$ under $\bs\vartheta$.  
Denote the log-likelihood ratio of $\mathbb{P}_{\bs\vartheta_{2,n}}$  to $\mathbb{P}_{\bs\vartheta_{2}^*}$ by $\log \left( \frac{d\mathbb{P}_{\bs\vartheta_n}}{d\mathbb{P}_{\bs\vartheta^*}}\right)=L_n(\bs\psi_n,\alpha^*) - L_n(\bs\psi^*,\alpha^*)$. Then,  it follows from (\ref{eq:LR0}) and Proposition \ref{prop:expansion} that
\begin{equation}\label{eq:likelihoodratio}
\log \frac{d\mathbb{P}_{\bs\vartheta_{2,n}}}{d\mathbb{P}_{\bs\vartheta^*_2}} = \bs{h} \bs{S}_n - \bs{h}\t \bs{\mathcal{I}}\bs{h}/2 + o_{p}(1)\quad\text{under $\mathbb{P}_{\bs\vartheta_{2}^*}$}.
\end{equation}
Furthermore, because $ \bs{S}_n \overset{d}{\rightarrow} N(\bs{0},\bs{\mathcal{I}})$ under $\mathbb{P}_{\bs\vartheta_{2}^*}$, $\frac{d\mathbb{P}_{\bs\vartheta_n}}{d\mathbb{P}_{\bs\vartheta^*}}$ converges in distribution under $\mathbb{P}_{\bs\vartheta_{2}^*}$ to $\exp(N(\mu,\sigma^2))$ with $\mu = -(1/2) \bs{h}\t \bs{\mathcal{I}}\bs{h}$ and $\sigma^2= \bs{h}\t \bs{\mathcal{I}}\bs{h}$ so that $E(\exp(N(\mu,\sigma^2))=1$. Consequently, $\mathbb{P}_{\bs\vartheta_{2,n}}$ is mutually contiguous with respect to $\mathbb{P}_{\bs\vartheta_{2}^*}$ from Le Cam's First Lemma \citep[see, e.g., Corollary 12.3.1  of][]{lehmannromano05book}, and in view of (\ref{eq:likelihoodratio}), we have
\[
\begin{pmatrix}
\bs{S}_n\\
\log \frac{d\mathbb{P}_{\bs\vartheta_{2,n}}}{d\mathbb{P}_{\bs\vartheta_2^*}}
\end{pmatrix}
 \overset{d}{\rightarrow} N\left( \begin{pmatrix}
 \bs{0}\\
 -(1/2) \bs{h}\t \bs{\mathcal{I}}\bs{h}
 \end{pmatrix},
 \begin{pmatrix}
 \bs{\mathcal{I}}& \bs{\mathcal{I}}\bs{h}\\
   \bs{h}\t \bs{\mathcal{I}} &
  \bs{h}\t \bs{\mathcal{I}}\bs{h}
 \end{pmatrix}\right)
 \quad\text{under $\mathbb{P}_{\bs\vartheta_{2}^*}$}
 \]
and 
\[
\bs{S}_n \overset{d}{\rightarrow} N(\bs{\mathcal{I}}\bs{h}, \bs{\mathcal{I}})\quad\text{under $\mathbb{P}_{\bs\vartheta_{2,n}}$}
\]
 from  Le Cam's Third Lemma \citep[see, e.g.,  12.3.2  of][]{lehmannromano05book}.
 Therefore, the proof of Proposition \ref{prop:t2_distribution} goes through under $\mathbb{P}_{\bs\vartheta_{2,n}}$ if we replace $\bs{S}_{\bs{\lambda},\bs{\eta}n}\overset{d}{\rightarrow} \bs{S}_{\bs{\lambda},\bs{\eta}}$ with 
$\bs{S}_{\bs{\lambda},\bs{\eta}n}\overset{d}{\rightarrow} \bs{S}_{\bs{\lambda},\bs{\eta}} + (\bs{\mathcal{I}}_{\bs\lambda }-\bs{\mathcal{I}}_{\bs\eta\bs\lambda }\bs{\mathcal{I}}_{\bs\eta }^{-1}\bs{\mathcal{I}}_{\bs\eta\bs\lambda }) \bs{h}_{\bs\lambda} =  \bs{S}_{\bs{\lambda},\bs{\eta}} +\bs{\mathcal{I}}_{\bs\lambda,\bs\eta} \bs{h}_{\bs\lambda}$, and the stated result follows.
\end{proof}

 \begin{proof}[Proof of Proposition \ref{prop:sht}]
 We provide a proof for $\hat M_{\text{ PLR}}$. The consistency proof for $\hat M_{\text{EM}}$ is similar.  We first prove that when $M<M_0$, $\Pr( PLR_n(M)> \hat c^M_{1-q_n})\rightarrow 1$ as $n\rightarrow \infty$.  Let $\tilde Q_n^M(\bs{\vartheta}_M):= Q_n^M( \bs{\vartheta}_M) + n^{-1} \tilde p_n(\bs{\vartheta}_{M})$ for $M\geq 2$. By Assumptions  \ref{assumption:3}(c) and \ref{assumption:4}(b), $n^{-1} \tilde p_n(\bs{\vartheta}_{M})\overset{p}{\rightarrow}0$ uniformly over $\Theta_{\bs{\vartheta}_M}$. Then, it follows from Lemma 2.4 of \cite{Newey1994} that
\begin{equation}\label{eq:uniform-conv}
\sup_{\bs{\vartheta}_M\in\Theta_{\bs{\vartheta}_M}} |\tilde Q_n^M(\bs{\vartheta}_M) -Q^M( \bs{\vartheta}_M)| = o_p(1),
\end{equation}
and  Assumption \ref{assumption:4}(a) and (b) and
 the standard consistency proof \citep[e.g., Theorem 2.1 of][]{Newey1994} give $\hat{\bs{\vartheta}}_M\overset{p}{\rightarrow} \bs{\vartheta}_M^*$ for $M<M_0$. Furthermore, by Assumption \ref{assumption:3}(c), $n^{-1/4}\nabla \tilde p_n(\bs{\vartheta}_{M})=o_p(1)$ uniformly over $\Theta_{\bs{\vartheta}_M}$, and it follows from the argument in Theorem 3.2 of \cite{white82em} that 
 \begin{equation}\label{eq:asynormal}
 \sqrt{n} (\hat{\bs{\vartheta}}_M- \bs{\vartheta}_M^*) \overset{p}{\rightarrow} N(0, A^M(\bs{\vartheta}_M^*)^{-1} B^M(\bs{\vartheta}_M^*)A^M(\bs{\vartheta}_M^*)^{-1}).
 \end{equation}
 
Then, from (\ref{eq:uniform-conv}), (\ref{eq:asynormal}), and the mean value expansion,  we have $\tilde Q_n^M(\hat{\bs{\vartheta}}_M)- Q^M({\bs{\vartheta}}_M^*) = O_p(n^{-1/2})$, and
\begin{align*}
\frac{ PLR_n(M)}{n} &:= \tilde Q_n^{M+1}(\hat{\bs{\vartheta}}_{M+1}) - \tilde Q_n^M(\hat{\bs{\vartheta}}_M) = Q^{M+1}({\bs{\vartheta}}_{M+1}^*) -Q^M({\bs{\vartheta}}_M^*)  + o_p(1).
\end{align*}
Because $Q^{M+1}({\bs{\vartheta}}_{M+1}^*) -Q^M({\bs{\vartheta}}_M^*)>0$ by Assumption \ref{assumption:4}(f), ${ PLR_n(M)}/{n}\rightarrow \infty$ as $n\rightarrow \infty$. By Lemma \ref{lemma: sht}, $- n^{-1} \ln q_n=o(1)$ and $\hat c^M_{1-q_n}-c^M_{1-q_n}=o_p(1)$ implies that $n^{-1} \hat c^M_{1-q_n}=o_p(1)$. Therefore,  when $M<M_0$,
we have $\Pr(  PLR_n(M)>\hat  c^M_{1-q_n})=\Pr(  PLR_n(M)/n>\hat c^M_{1-q_n}/n) \rightarrow 1$ as $n\rightarrow \infty$.

When $M=M_0$, because $  PLR_n(M_0) =O_p(1)$ by Proposition \ref{prop:tm0_distribution} and $\hat c^{M}_{1-q_n}\rightarrow\infty$ by $q_n=o(1)$, $ \Pr(  PLR_n(M_0)>\hat c^{M_0}_{1-q_n}) \rightarrow 0$ as $n\rightarrow \infty$.
 \end{proof}

\section{Auxiliary results and their proofs}\label{sec:appendixc}
 \subsection{Lemmas}

\begin{lemma}\label{lemma:unbounded_likelihood}
For any $M<\infty$, $ \Pr\left( -\log n + \ell(\bs W_{i^*};\bar Y_{i^*},s_{i^*}^2) < M\right) \rightarrow 0$ as $n\rightarrow \infty$.
\end{lemma}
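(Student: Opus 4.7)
The plan is to (i) reduce the claim to an extreme-value statement about $\min_i s_i^2$, and then (ii) exploit the behavior of the chi-square density near the origin.

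For step (i), I would substitute $\mu = \bar Y_{i^*}$ and $\sigma^2 = s_{i^*}^2$ into the log-density and use the identity $\sum_{t=1}^T (Y_{i^*t} - \bar Y_{i^*})^2 = (T-1)\,s_{i^*}^2$ to obtain the closed form
\[
\ell(\bs W_{i^*}; \bar Y_{i^*}, s_{i^*}^2) = -\tfrac{T}{2}\log s_{i^*}^2 - \tfrac{T}{2}\log(2\pi) - \tfrac{T-1}{2}.
\]
For any fixed $M$, the event $\{-\log n + \ell(\bs W_{i^*}; \bar Y_{i^*}, s_{i^*}^2) < M\}$ is therefore equivalent to $\{n^{2/T} s_{i^*}^2 > C_M\}$ for some positive constant $C_M$, so it suffices to show $n^{2/T} s_{i^*}^2 \overset{p}{\to} 0$.

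For step (ii), the variables $U_i := (T-1)\,s_i^2/(\sigma^*)^2$ are i.i.d.\ $\chi^2_{T-1}$ under the one-component null (with $T \ge 2$ in the panel setting). The $\chi^2_{T-1}$ density behaves like $x^{(T-3)/2}$ near the origin, so its CDF admits a local lower bound $F(u) \ge c\,u^{(T-1)/2}$ for all sufficiently small $u > 0$. The independence across $i$ then gives
\[
\Pr\!\big( n^{2/T} s_{i^*}^2 > C \big) = \big( 1 - F_{s^2}(C n^{-2/T}) \big)^{n} \le \big( 1 - c'\, n^{-(T-1)/T} \big)^{n}
\]
for large $n$, where $c' > 0$ depends on $C$, $T$, and $\sigma^*$. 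Because $n \cdot n^{-(T-1)/T} = n^{1/T} \to \infty$, the right-hand side tends to zero.

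The crux of the argument is the rate comparison $\tfrac{2}{T} < \tfrac{2}{T-1}$ (equivalently $\tfrac{1}{T} > 0$), which is what makes $\min_i s_i^2$ shrink fast enough to overwhelm the $-\log n$ penalty; this is specific to panel data with $T \ge 2$, and the argument fails at $T = 1$ (where $s_i^2$ is not even defined). Everything else reduces to the standard small-$x$ behavior of the chi-square density and a Bernoulli-inequality bound, so I do not anticipate any further technical obstacles.
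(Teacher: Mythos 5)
Your proof is correct and follows essentially the same route as the paper: reduce the claim to the tail probability $\Pr(n\,(s_{i^*}^2)^{T/2}>\epsilon)=[1-F(\epsilon^{2/T}n^{-2/T})]^n$ for the minimum order statistic and then exploit the small-$x$ behavior of the $\chi^2_{T-1}$ distribution near the origin. The only difference is in the final step, where your explicit lower bound $F(u)\ge c\,u^{(T-1)/2}$ combined with $(1-x)^n\le e^{-nx}$ is a cleaner substitute for the paper's rewriting via $(1-F)^{1/F}\to e^{-1}$ and a L'H\^opital computation, but both hinge on the identical rate comparison $n\cdot n^{-(T-1)/T}=n^{1/T}\to\infty$.
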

\begin{proof}[Proof of Lemma \ref{lemma:unbounded_likelihood}]
Because $\sum_{t=1}^T \frac{\left(Y_{it}-\mu \right)^2}{s_{i^*}^2}= T-1$ when $i=i^*$, we have
\begin{align}
-\log n+\ell(\bs W_{i^*};\bar Y_{i^*},s_{i^*}^2)
&= -\log n - \frac{T}{2} \log s_{i^*}^2 -\frac{T}{2}\log(2\pi) - \frac{T-1}{2} \nonumber\\
&=-\log\left( C n  (s_{i^*}^2 )^{T/2} \right) , 
\end{align}
for some positive constant $C$.

Therefore, to prove the stated result, it suffices to show that for any $\epsilon>0$, $\Pr(n  (s_{i^*}^2 )^{T/2}>\epsilon)\rightarrow 0$ as $n\rightarrow\infty$. Given the property of the first-order statistic, the distribution of $s_{i^*}^2$ is given by $1-[1-F_{{T-1}}(s)]^n$, where $F_{{T-1}}(s)$ is the cumulative distribution function for chi-squared variables of degree $T-1$. It follows that
\[
\Pr(n  (s_{i^*}^2 )^{T/2}>\epsilon) = [1-F_{{T-1}}( (\epsilon/n)^{2/T} )]^n.
\]
When $T=3$, $1-F_{T-1}(s)= e^{-s/2}$ and $\Pr(n  (s_{i^*}^2 )^{T/2}>\epsilon) = e^{-C n^{1/3}}$ for some positive constant $C$, and therefore, $\Pr(n  (s_{i^*}^2 )^{T/2}>\epsilon)\rightarrow 0$ as $n\rightarrow \infty$, and the stated result follows.

For general $T\geq 2$, write
\begin{equation}\label{eq-lemma1}
\Big[1-F_{{T-1}}\big( (\epsilon/n)^{2/T} \big) \Big]^n =  \Big\{ \Big[1-F_{{T-1}}\big( (\epsilon/n)^{2/T} \big) \Big]^{\frac{1}{  F_{{T-1}}( (\epsilon/n)^{2/ T} )} }\Big\}^{ n F_{{T-1}}( (\epsilon/n)^{2/T} )  }.
\end{equation}
Then, because  $(1 - F)^{\frac{1}{F}} \to \frac{1}{e}$ when $ F \to 0 $, the stated result follows from (\ref{eq-lemma1}) if we can show  $$\frac{  F_{{T-1}}( (\epsilon x)^{2/T} )}{x} \to \infty \text{ as }  x \to 0$$ for $x=1/n$.
By applying L'Hôpital's rule, we have
\[\begin{split}
\lim_{x \to 0} \frac{  F_{{T-1}}( (\epsilon x )^{2T} )}{x } = \lim_{x \to 0}  f_{T-1}( (\epsilon x )^{2/T} ) \epsilon^{2T}  x^{2/T-1},
\end{split}\]
where $f_{k}$ is the PDF of the $\chi$-square distribution with $k$ degrees of freedom.
Note that $f_{T-1}((\epsilon x )^{2/T} ) = \frac{1}{2^{(T-1)/2} \Gamma((T-1)/2)} ((\epsilon x )^{2/T} )^{(T-1)/2-1} e^{- ((\epsilon x )^{2/T} )/2}$; then, $$ \lim_{x \to 0}  f_{T-1}( (\epsilon x )^{2/T} )  x^{2/T-1} = \lim_{x \to 0} C_{T,\epsilon} e^{- ((\epsilon x )^{2/T} )/2} x^{ -\frac{1}{T}} = \infty, $$
where $C_{T,\epsilon} = \frac{\epsilon^{(T-1)/T}}{2^{(T-1)/2} \Gamma((T-1)/2)} $ because $e^{- ((\epsilon x )^{2/T} )/2}\to 1$  and $x^{ -\frac{1}{T}} \to \infty$ as $x \to 0$ for any finite $T\geq 2$.
Therefore, $\lim_{x \to 0} \frac{  F_{{T-1}}( (\epsilon x )^{2T} )}{x } =\infty$, and the stated result for $T\geq 2$ follows from (\ref{eq-lemma1}).

\end{proof}

\begin{lemma}\label{lemma: sht}
Suppose that the assumptions in Proposition \ref{prop:sht} hold. If  $-n^{-1}\ln q_n=o(1)$, then  $n^{-1} c^M_{1-q_n}=o(1)$.

\end{lemma}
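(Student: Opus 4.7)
The plan is to exploit the fact that the limiting random variable whose quantile defines $c^M_{1-q_n}$ is dominated by a chi-squared-type variable, which has a sub-exponential tail; this immediately controls the quantile at the logarithmic rate $-\log q_n$, and the hypothesis $-n^{-1}\ln q_n = o(1)$ then delivers the conclusion.

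Concretely, let $Z^M := \max_{1\le h\le M}(\hat{\bs t}^h_{\bs\lambda})^\top \bs{\mathcal{I}}^h_{\bs\lambda,\bs\eta}\hat{\bs t}^h_{\bs\lambda}$, so that $c^M_{1-q_n}$ is the $(1-q_n)$-quantile of $Z^M$ (cf.\ Propositions \ref{prop:tm0_distribution} and \ref{prop:KS18_prop4}). Since $\bs 0 \in \Lambda_{\bs\lambda}$ (because $v(\bs 0)=\bs 0$), the minimization in \eqref{eq:t_h} implies $r^h_{\bs\lambda}(\hat{\bs t}^h_{\bs\lambda}) \le r^h_{\bs\lambda}(\bs 0) = \bs G^{h\top}_{\bs\lambda,\bs\eta} \bs{\mathcal{I}}^h_{\bs\lambda,\bs\eta} \bs G^h_{\bs\lambda,\bs\eta}$. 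The standard projection identity onto a closed convex cone, $\langle \bs G^h_{\bs\lambda,\bs\eta}-\hat{\bs t}^h_{\bs\lambda},\hat{\bs t}^h_{\bs\lambda}\rangle_{\bs{\mathcal{I}}^h_{\bs\lambda,\bs\eta}}=0$, then gives
$(\hat{\bs t}^h_{\bs\lambda})^\top \bs{\mathcal{I}}^h_{\bs\lambda,\bs\eta}\hat{\bs t}^h_{\bs\lambda} = \bs G^{h\top}_{\bs\lambda,\bs\eta}\bs{\mathcal{I}}^h_{\bs\lambda,\bs\eta}\bs G^h_{\bs\lambda,\bs\eta} - r^h_{\bs\lambda}(\hat{\bs t}^h_{\bs\lambda}) \le \bs S^{h\top}_{\bs\lambda,\bs\eta}(\bs{\mathcal{I}}^h_{\bs\lambda,\bs\eta})^{-1}\bs S^h_{\bs\lambda,\bs\eta}$. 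Because $\bs S^h_{\bs\lambda,\bs\eta}\sim N(\bs 0,\bs{\mathcal{I}}^h_{\bs\lambda,\bs\eta})$, each right-hand side is a $\chi^2_{d_h}$ variable with $d_h = (q+2)(q+3)/2$ degrees of freedom. Applying Markov's inequality to the chi-squared moment generating function $\E[e^{t\chi^2_{d_h}}] = (1-2t)^{-d_h/2}$ at $t=1/4$, we obtain $\Pr(\chi^2_{d_h} > c) \le 2^{d_h/2} e^{-c/4}$, so a union bound yields the sub-exponential tail
\[
\Pr(Z^M > c) \;\le\; \sum_{h=1}^M \Pr\!\big(\bs S^{h\top}_{\bs\lambda,\bs\eta}(\bs{\mathcal{I}}^h_{\bs\lambda,\bs\eta})^{-1}\bs S^h_{\bs\lambda,\bs\eta} > c\big) \;\le\; K_0\, e^{-c/4},
\]
where $K_0 := M\cdot 2^{D/2}$ with $D:=\max_{h\le M} d_h < \infty$.

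Finally, choosing $c^* := 4\log(K_0/q_n)$ makes the right-hand side of the displayed inequality equal to $q_n$, so $\Pr(Z^M \le c^*) \ge 1-q_n$, and by definition of the quantile $c^M_{1-q_n} \le c^* = 4\log K_0 - 4\log q_n$. Dividing by $n$,
\[
n^{-1} c^M_{1-q_n} \;\le\; 4 n^{-1}\log K_0 \;-\; 4 n^{-1}\log q_n \;=\; o(1),
\]
by the hypothesis $-n^{-1}\ln q_n = o(1)$, which is the claim. The only nontrivial step is the passage from the cone projection to the chi-squared dominating variable; everything else is a routine tail-bound computation.
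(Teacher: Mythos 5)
Your proof is correct, and it reaches the conclusion by the same overall mechanism as the paper -- an exponential (chi-squared--type) tail bound on the limiting random variable forces its $(1-q_n)$-quantile to grow at most like $-\log q_n$, after which the hypothesis $-n^{-1}\ln q_n=o(1)$ finishes the job -- but the way you establish the chi-squared domination is genuinely different. The paper invokes Theorem 2.1 of \cite{Foutz77as} to assert that $PLR_n(M)$ converges to a weighted sum $\sum_{j=1}^K b_j\chi_j^2$ of one-degree-of-freedom chi-squares and then applies a Chernoff bound to that representation; this is somewhat loose, since the actual limit in Proposition \ref{prop:tm0_distribution} is a maximum of cone-projected quadratic forms rather than a weighted sum of independent chi-squares. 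You instead work directly with the limit variable $Z^M$ as defined in the paper and exploit the structure of the projection: because $\bs 0\in\Lambda_{\bs\lambda}$, each $(\hat{\bs t}^h_{\bs\lambda})^\top\bs{\mathcal{I}}^h_{\bs\lambda,\bs\eta}\hat{\bs t}^h_{\bs\lambda}$ is dominated by the unprojected quadratic form $\bs S^{h\top}_{\bs\lambda,\bs\eta}(\bs{\mathcal{I}}^h_{\bs\lambda,\bs\eta})^{-1}\bs S^h_{\bs\lambda,\bs\eta}\sim\chi^2_{d_h}$, and a union bound over $h$ handles the maximum. This is more self-contained and better matched to the actual limit law; the paper's route is shorter but leans on an external theorem whose applicability here is not spelled out. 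One small imprecision: $\Lambda_{\bs\lambda}$ is a closed cone but not a convex one (the image of $v(\cdot)$ consists of vectorized rank-one outer products), so the "standard projection identity onto a closed convex cone" is not literally available. This is harmless: the orthogonality $\langle\bs G^h_{\bs\lambda,\bs\eta}-\hat{\bs t}^h_{\bs\lambda},\hat{\bs t}^h_{\bs\lambda}\rangle_{\bs{\mathcal{I}}^h_{\bs\lambda,\bs\eta}}=0$ still holds for any closed cone by minimizing $s\mapsto\|s\hat{\bs t}^h_{\bs\lambda}-\bs G^h_{\bs\lambda,\bs\eta}\|^2_{\bs{\mathcal{I}}^h_{\bs\lambda,\bs\eta}}$ over $s\ge 0$, and in any case the inequality you actually need follows at once from the paper's own representation of the limit as $\bs G^\top\bs{\mathcal{I}}\bs G-\inf_{\bs t\in\Lambda}(\bs t-\bs G)^\top\bs{\mathcal{I}}(\bs t-\bs G)$ with a nonnegative infimum. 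The exact value of $d_h$ is immaterial since only its finiteness enters the constant $K_0$.
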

\begin{proof}
For brevity of notation,  
write $c_n=c^{M}_{1-q_n}$. 
By Theorem 2.1 of \cite{Foutz77as}, $PLR_n(M)\overset{d}{\rightarrow} \sum_{j=1}^K b_j \chi_j^2$ for $0<b_j<\infty$ and $K$ is finite, where $\chi_1^2$, ..., $\chi_K^2$ are independent chi-square random variables with one degree of freedom.
Then, we have 
\begin{align*}\label{eq:q_n}
q_n&= \Pr\left( \sum_{j=1}^K b_j \chi_j^2 \geq c_n\right)\leq \sum_{j=1}^K  \Pr\left(  \chi_j^2 \geq \frac{c_n}{ b_j}\right)\leq  \frac{K}{\sqrt{1-2t}}\exp\left(-t \frac{c_n}{ b^*}\right)\quad\text{for $0<t<\frac{1}{2}$}
\end{align*}
with $b^*=\arg\max\{b_1,...,b_K\}$, where the last inequality follows from a Chernoff bound: $\Pr\left(  \chi_j^2 \geq \frac{c_n}{ b^*}\right)\leq \frac{\E[\exp(t(\chi_j^2-1))]}{\exp(t(\frac{c_n}{ b^* }-1))}=\frac{1}{\sqrt{1-2t}}\exp\left(-t \frac{c_n}{ b^*}\right)$ for $0<t<\frac{1}{2}$. Therefore, 
$-\frac{\ln q_n}{n} \geq - \frac{1}{n}\ln\left(\frac{K}{\sqrt{1-2t}}\right) +\frac{1}{2b_j^*} \frac{c_n}{n}$,
and the stated result follows.

\end{proof}

\begin{lemma}\label{lemma:KS2018_lemma7}
    Suppose that $g(\bs{w};\bs{\psi},\alpha)$ is defined as (\ref{eq:repar}), where $\bs{\psi} = (\bs{\eta}^\top,\bs{\lambda}^\top)^\top$. Let $g^*$, $\nabla g^*$, and $\nabla\log g^*$ denote $g(\bs{W};\bs{\psi},\alpha)$, $\nabla g(\bs{W};\bs{\psi},\alpha)$,  and $\nabla \log g(\bs{W};\bs{\psi},\alpha)$ evaluated at $(\bs{\psi},\alpha)$, respectively. Let $\nabla f^*$ denote $\nabla f(\bs{W};\bs{\gamma}^*,\bs{\theta}^*)$.
    The following statements hold.
    \begin{enumerate}[label=(\alph*)]
        \item  For $l = 0,1,\ldots, \nabla_{ (\bs{\lambda} \otimes \bs{\eta}^{\otimes l})\t}g^* = 0$;
        \item $\nabla_{(\bs{\lambda}^{\otimes 2})\t } g^*  = \alpha (1 - \alpha) \nabla_{(\bs{\theta}^{\otimes 2})\t } f^*$;
        \item $\E[\nabla_{\lambda_i\lambda_j}\log g^*]=0$, $\E[\nabla_{\lambda_i\lambda_j\lambda_k}\log g^*]=0$, 
and  $\E[\nabla_{\eta\lambda_i\lambda_j }\log g^*]=-\E[\nabla_{\eta}\log g^*\nabla_{\lambda_i\lambda_j}\log g^*]$;  
 \item $\E[\nabla_{\lambda_i\lambda_j\lambda_k\lambda_\ell}\log g^*]=
-\E[\nabla_{\lambda_i\lambda_j}\log g^*\nabla_{\lambda_k\lambda_\ell}\log g^*+\nabla_{\lambda_i\lambda_k}\log g^*\nabla_{\lambda_j\lambda_\ell}\log g^*+\nabla_{\lambda_i\lambda_\ell}\log g^*\nabla_{\lambda_j\lambda_k}\log g^*]$.
    \end{enumerate}
\end{lemma}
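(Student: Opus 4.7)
\bigskip
\noindent\textit{Proof sketch for Lemma~\ref{lemma:KS2018_lemma7}.}

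The plan rests on two pillars. First, an \emph{algebraic} observation: because the reparameterization splits the weight $\alpha$ symmetrically between the two components, the coefficient $\alpha(1-\alpha)^k + (-1)^k(1-\alpha)\alpha^k$ multiplies the $k$-th $\bs{\lambda}$-derivative of $g$ at $\bs{\lambda}=0$, which equals $0$ for $k=1$. Second, the \emph{Bartlett-type identity} $\int \nabla^{(k)} g \,dw = 0$ for every $k$, which follows from $\int g\,dw=1$. Parts (a) and (b) are then direct computations: differentiating $g(\bs w;\bs\psi,\alpha)=\alpha f(\bs w;\bs\gamma,\bs\nu+(1-\alpha)\bs\lambda)+(1-\alpha)f(\bs w;\bs\gamma,\bs\nu-\alpha\bs\lambda)$ once with respect to $\lambda_i$ and evaluating at $\bs\lambda^*=\bs 0$ gives $\alpha(1-\alpha)\partial_{\theta_i}f^* - (1-\alpha)\alpha\,\partial_{\theta_i}f^*=0$, and since this vanishes identically in $\bs\eta$, any further $\bs\eta$-derivatives remain zero, proving (a). A second $\lambda$-differentiation yields the coefficient $\alpha(1-\alpha)^2+\alpha^2(1-\alpha)=\alpha(1-\alpha)$ multiplying $\nabla_{\bs\theta\bs\theta^\top}f^*$, which is (b).

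For (c) and (d) I would write out the Fa\`a di Bruno / log-derivative expansions explicitly. Writing $L=\log g$, $g_i=\nabla_{\lambda_i}g$, etc., direct differentiation gives
\begin{align*}
L_{ij} &= \tfrac{g_{ij}}{g}-\tfrac{g_ig_j}{g^2},\\
L_{ijk} &= \tfrac{g_{ijk}}{g}-\tfrac{g_{ij}g_k+g_{ik}g_j+g_ig_{jk}}{g^2}+\tfrac{2g_ig_jg_k}{g^3},\\
\nabla_\eta L_{ij} &= \tfrac{g_{\eta ij}}{g}-\tfrac{g_{ij}g_\eta}{g^2}-\tfrac{g_{\eta i}g_j+g_ig_{\eta j}}{g^2}+\tfrac{2g_ig_jg_\eta}{g^3},
\end{align*}
with an analogous fourth-order formula for $L_{ijk\ell}$ containing the terms $g_{ijk\ell}/g$, products $g_{\text{pair}}g_{\text{pair}}/g^2$ (three partitions), products $g_{\text{triple}}g_{\text{single}}/g^2$ (four), terms of type $g_{\text{pair}}g_\cdot g_\cdot/g^3$ (six), and $-6g_ig_jg_kg_\ell/g^4$. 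Evaluating at $\bs\lambda^*=\bs 0$, every term containing a bare single-index factor $g_m^*$ vanishes by part (a). Taking expectations kills the surviving pure-$g$ terms by $\int \nabla^{(k)}g\,dw=0$, and the remaining pieces collapse precisely to the identities claimed: $\E[L_{ij}^*]=0$, $\E[L_{ijk}^*]=0$, $\E[\nabla_\eta L_{ij}^*]=-\E[L_\eta^* L_{ij}^*]$ for (c), and
\[
\E[L_{ijk\ell}^*]=-\E\!\left[L_{ij}^*L_{k\ell}^*+L_{ik}^*L_{j\ell}^*+L_{i\ell}^*L_{jk}^*\right]
\]
for (d) after noting that at $\bs\lambda^*=\bs 0$ the ratio $g_{ij}^*/g^*$ coincides with $L_{ij}^*$.

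There is no genuine obstacle; the work is entirely bookkeeping. The one place to be careful is the fourth-order expansion in (d): one must track the correct combinatorial multiplicities ($1{+}4{+}3{+}6{+}1$ term types) so that, after annihilating every term containing an isolated $g_m^*$, exactly the three pair-partition products $L_{ij}^*L_{k\ell}^*$, $L_{ik}^*L_{j\ell}^*$, $L_{i\ell}^*L_{jk}^*$ survive with coefficient $-1$. Once that bookkeeping is done and the Bartlett integral $\int g_{ijk\ell}^*\,dw=0$ is invoked, the identity is immediate. Interchanges of differentiation and integration are justified by the bounds in~\eqref{nabla_0} established in the proof of Proposition~\ref{prop:expansion}.
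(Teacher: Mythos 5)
Your proof is correct and follows essentially the same route as the paper: parts (a) and (b) are the same direct cancellation computations (the coefficient $\alpha(1-\alpha)-(1-\alpha)\alpha=0$ on the first $\bs\lambda$-derivative, and $\alpha(1-\alpha)^2+\alpha^2(1-\alpha)=\alpha(1-\alpha)$ on the second), and parts (c)--(d) rest on the same two ingredients, namely part (a) annihilating every term containing an isolated first-order $\bs\lambda$- (or mixed $\bs\eta\bs\lambda$-) derivative of $g$, together with Bartlett-type identities justified by interchanging differentiation and integration. The only cosmetic difference is that the paper generates those identities by repeatedly differentiating $\int \nabla_{\lambda_i}\log g\cdot g\,d\bs w=0$ with respect to $\lambda_j$, $\lambda_k$, $\lambda_\ell$, and $\bs\eta$, whereas you expand $\nabla^{(k)}\log g^*$ directly in terms of $\nabla^{(k)}g^*/g^*$ and invoke $\int\nabla^{(k)}g^*\,d\bs w=0$; the two derivations are algebraically equivalent.
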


\begin{proof}[Proof of Lemma \ref{lemma:KS2018_lemma7}]
    Recall that $$g(\bs{w};\bs{\psi},\alpha)  = \alpha f(\bs{w};\bs{\gamma},\bs{\nu}  + (1 - \alpha) \bs{\lambda}) + (1 - \alpha)  f(\bs{w};\bs{\gamma},\bs{\nu} - \alpha \bs{\lambda}). $$
    First, we show that for $l=0$ holds for (a),
    $\nabla_{\bs{\lambda}} g^* = \alpha ( 1 - \alpha) \nabla_{\bs{\theta}} f^* -  \alpha  ( 1 - \alpha) \nabla_{\bs{\theta}} f^*  = 0 $.
    For $l > 0$, by Fubini's theorem, we have
    \[
    \begin{split}
    \nabla_{ (\bs{\lambda}^{\otimes 2})\t } g & = \left. \nabla_{\bs{\lambda}}\Big( \alpha \nabla_{ (\bs{\gamma}, \bs{\theta})^{\otimes l}} f(\bs{w};\bs{\gamma},\bs{\nu}  + (1 - \alpha)  \bs{\lambda}) + (1 - \alpha) \nabla_{ (\bs{\gamma}, \bs{\theta})^{\otimes l}} f(\bs{w};\bs{\gamma},\bs{\nu} - \alpha \bs{\lambda}) \right|_{\bs{\nu} = \bs{\theta}^*, \bs{\lambda} = \bs{0} } \Big) \\
    & = \left. \Big( \alpha ( 1- \alpha) \nabla_{ (\bs{\gamma}^{\otimes l}, \bs{\theta}^{\otimes l+1})} f(\bs{w};\bs{\gamma},\bs{\nu}  + (1 - \alpha)  \bs{\lambda}) - \alpha (1 - \alpha) \nabla_{ (\bs{\gamma}^{\otimes l}, \bs{\theta}^{\otimes l+1})} f(\bs{w};\bs{\gamma},\bs{\nu} - \alpha \bs{\lambda}) \right|_{\bs{\nu} = \bs{\theta}^*, \bs{\lambda} = \bs{0} } \Big) \\
    & = 0.
    \end{split}
    \]
    To show part $(b)$, note that \begin{equation*}
    \begin{split}
    \nabla_{(\bs{\lambda}^{\otimes 2})\t } g & =  \nabla_{\bs{\lambda}} \Big( \alpha ( 1 - \alpha) \nabla_{\bs{\lambda}^\top } f(\bs{w};\bs{\gamma},\bs{\nu}  - \alpha (1 - \alpha)  \bs{\lambda})
    + (1 - \alpha) \nabla_{\bs{\lambda}^\top} f(\bs{w};\bs{\gamma},\bs{\nu} - \alpha \bs{\lambda}) \Big) \\
    & = \left.   \alpha ( 1 - \alpha)^2 \nabla_{(\bs{\lambda}^{\otimes 2})\t } f(\bs{w};\bs{\gamma},\bs{\nu}  + \alpha^2 (1 - \alpha)  \bs{\lambda})
    + (1 - \alpha) \nabla_{(\bs{\lambda}^{\otimes 2})\t } f(\bs{w};\bs{\gamma},\bs{\nu} - \alpha \bs{\lambda}) \right|_{\bs{\nu} = \bs{\theta}^*, \bs{\lambda} = \bs{0} } \\
    & = \nabla_{(\bs{\lambda}^{\otimes 2})\t } f^*.
    \end{split}
    \end{equation*}
 For  parts (c) and (d), observe that $\int  \nabla_{\lambda_i}  \log g(\bs w;\psi,\alpha)g(\bs w;\psi,\alpha)dx=0$ holds for any $\psi$ in the interior of $\Theta_\psi$, and differentiating this equation w.r.t.\ $\lambda_j$ gives
\begin{equation}
\int \{ \nabla_{\lambda_i\lambda_j} \log g(\bs w;\psi,\alpha)  +  \nabla_{\lambda_i} \log g(\bs w;\psi,\alpha)  \nabla_{\lambda_j} \log g(\bs w;\psi,\alpha) \} g(\bs w;\psi,\alpha) dx=0.\label{IME}
\end{equation}
Evaluating (\ref{IME}) at $\psi=\psi^*$ in conjunction with  part (a) gives the first equation in  part (c). Differentiating (\ref{IME}) w.r.t.\ $\lambda_k$ or $\eta$ and evaluating  at $\psi=\psi^*$ gives the latter two equations in part (c).  Part (d) follows from differentiating (\ref{IME}) w.r.t.\ $\lambda_k$ and $\lambda_\ell$ and evaluating at $\psi=\psi^*$ in conjunction with  parts (a) and (c).

\end{proof}

\begin{lemma} \label{tau_update} Suppose that the assumptions of Proposition \ref{prop:KS18_prop4} hold. If $\bs{\vartheta}_{M_0 + 1}^{h (K)}(\tau_0) - \bs{\vartheta}_{M_0 + 1}^{h *}(\tau_0)= o_p(1)$ and $\tau^{(K)}- \tau_0 = o_p(1)$, then (a) $\alpha_m^{(K+1)}/[\alpha_h^{(K+1)}+\alpha_{h+1}^{(K+1)}] - \tau_0 = o_p(1)$  and (b)  $\tau^{(K+1)} - \tau_0 = o_p(1)$.
\end{lemma}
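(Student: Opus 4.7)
The plan is to track how one more generalized EM iteration moves the mixing weights $(\alpha_h^{(K)}, \alpha_{h+1}^{(K)})$ and the splitting fraction $\tau^{(K)}$ once we already know, by hypothesis, that both $\bs{\vartheta}_{M_0+1}^{h(K)}(\tau_0) \to_p \bs{\vartheta}_{M_0+1}^{h*}(\tau_0)$ and $\tau^{(K)} \to_p \tau_0$. The key structural fact is that at the limiting value $\bs{\vartheta}_{M_0+1}^{h*}(\tau_0)$ one has $\bs{\theta}_h^{h*} = \bs{\theta}_{h+1}^{h*} = \bs{\theta}_h^*$, so the component densities $f(\bs{W}_i;\bs{\gamma}^*,\bs{\theta}_h^*)$ and $f(\bs{W}_i;\bs{\gamma}^*,\bs{\theta}_{h+1}^*)$ coincide, and the ratio of the $h$-th and $(h{+}1)$-th weights reduces exactly to $\tau_0/(1-\tau_0)$ in the probability limit.

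First I would pass the E-step weights through the continuous mapping theorem. Using the closed forms for $w_{ih}^{(k)}$ and $w_{i,h+1}^{(k)}$ from the E-step displayed before Proposition \ref{prop:KS18_prop4}, together with the consistency hypotheses and the fact that $f_{M_0+1}(\bs{W}_i; \bs{\vartheta}_{M_0+1}^{h(K)}(\tau_0)) \to_p f_{M_0}(\bs{W}_i;\bs{\vartheta}_{M_0}^*)$ uniformly on compacta, I obtain, pointwise in $i$,
\begin{equation*}
w_{ih}^{(K)} \to_p \tau_0\,\alpha_h^*\, \frac{f(\bs{W}_i;\bs{\gamma}^*,\bs{\theta}_h^*)}{f_{M_0}(\bs{W}_i;\bs{\vartheta}_{M_0}^*)}, \qquad w_{i,h+1}^{(K)} \to_p (1-\tau_0)\,\alpha_h^*\, \frac{f(\bs{W}_i;\bs{\gamma}^*,\bs{\theta}_h^*)}{f_{M_0}(\bs{W}_i;\bs{\vartheta}_{M_0}^*)}.
\end{equation*}
By a uniform law of large numbers over a neighborhood of $\bs{\vartheta}_{M_0+1}^{h*}(\tau_0)$ (the integrands are dominated by integrable envelopes because $f(\bs{W}_i;\bs{\gamma},\bs{\theta})/f_{M_0}(\bs{W}_i;\bs{\vartheta}_{M_0}^*)$ has finite mean under the null), the averages satisfy
\begin{equation*}
\frac{1}{n}\sum_{i=1}^n w_{ih}^{(K)} \to_p \tau_0\,\alpha_h^*, \qquad \frac{1}{n}\sum_{i=1}^n w_{i,h+1}^{(K)} \to_p (1-\tau_0)\,\alpha_h^*.
\end{equation*}
Part (a) is then immediate from the M-step update $\alpha_j^{(K+1)} = n^{-1}\sum_i w_{ij}^{(K)}$ together with the continuous mapping theorem and $\alpha_h^* > \epsilon > 0$ (Assumption \ref{assumption:3}(a)), giving $\alpha_h^{(K+1)}/[\alpha_h^{(K+1)}+\alpha_{h+1}^{(K+1)}] \to_p \tau_0$.

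For part (b), I would analyze the M-step update for $\tau$. The first-order condition of the penalized criterion is
\begin{equation*}
\sum_{i=1}^n \frac{w_{ih}^{(K)}}{\tau} - \sum_{i=1}^n \frac{w_{i,h+1}^{(K)}}{1-\tau} + p'(\tau) = 0,
\end{equation*}
which can be solved as $\tau^{h(K+1)} = [\sum_i w_{ih}^{(K)} + \tau^{h(K+1)}(1-\tau^{h(K+1)})p'(\tau^{h(K+1)})]/\sum_i [w_{ih}^{(K)} + w_{i,h+1}^{(K)}]$. Because $p(\tau) = \log(2\min\{\tau,1-\tau\})$ is differentiable on $(0,1)\setminus\{1/2\}$ with $p'$ bounded on any compact subset of $(0,1)$, and because the denominator is $n\alpha_h^* + o_p(n)$, the penalty contribution $\tau(1-\tau)p'(\tau)/n$ is $o_p(1)$. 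Combined with the limits from the previous paragraph, the ratio converges in probability to $\tau_0\alpha_h^*/\alpha_h^* = \tau_0$, provided we can rule out the pathological case $\tau^{h(K+1)} \to 0$ or $1$, which is excluded because $p(\tau) \to -\infty$ at the boundary while the leading $O(n)$ terms attain finite values there only with probability going to zero (since $\tau_0 \in (0,1)$ makes both $\sum_i w_{ih}^{(K)}$ and $\sum_i w_{i,h+1}^{(K)}$ bounded away from zero with high probability).

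The main obstacle is the interaction between the penalty $p(\tau)$ and the possibility that $\tau^{h(K+1)}$ drifts toward the boundary. The cleanest way to handle this is to first confine $\tau^{h(K+1)}$ to a compact subinterval $[\delta, 1-\delta]$ of $(0,1)$ with probability tending to one — which follows because the unpenalized objective is strictly concave in $\tau$ with an interior maximum at $\tau_0 + o_p(1)$ and dominates $p(\tau) = O(1)$ — and then apply the first-order condition on this compact set where $p'$ is bounded, so the $n^{-1}p'(\tau)$ term is uniformly $o_p(1)$.
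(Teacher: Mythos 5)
Your proof is correct in substance, and part (a) is essentially the paper's argument (pointwise limits of the E-step weights plus a law of large numbers using $\E[f(\bs W;\bs\gamma^*,\bs\theta_h^*)/f_{M_0}(\bs W;\bs\vartheta_{M_0}^*)]=1$). For part (b), however, you take a genuinely different route. The paper never touches the first-order condition: it sets $H(\tau):=\sum_i w_{ih}^{(K)}\log\tau+\sum_i w_{i,h+1}^{(K)}\log(1-\tau)$, notes that $H$ alone is maximized at $\tilde\tau=\alpha_h^{(K+1)}/(\alpha_h^{(K+1)}+\alpha_{h+1}^{(K+1)})=\tau_0+o_p(1)$ by part (a), and then chains the inequalities $p(\tilde\tau)\le p(\tilde\tau)-p(\tau^{(K+1)})\le H(\tau^{(K+1)})-H(\tilde\tau)=H''(\bar\tau)(\tau^{(K+1)}-\tilde\tau)^2$ (using $p\le 0$, the definition of $\tau^{(K+1)}$ as maximizer of $H+p$, and $H'(\tilde\tau)=0$), which with $|H''|\ge n(\alpha_h^{(K+1)}+\alpha_{h+1}^{(K+1)})$ gives $(\tau^{(K+1)}-\tilde\tau)^2=O_p(n^{-1})$ outright. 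That argument buys three things your route does not: it needs no preliminary confinement of $\tau^{(K+1)}$ to $[\delta,1-\delta]$, it never differentiates $p$ (so the kink of $p(\tau)=\log(2\min\{\tau,1-\tau\})$ at $\tau=1/2$ is irrelevant), and it delivers an $O_p(n^{-1/2})$ rate rather than bare consistency. Your FOC route does go through, but it has one loose end you should close: when $\tau_0\ne 1/2$ the penalized objective could in principle be maximized at the kink $\tau=1/2$, where the stationarity equation you solve does not hold; you need to observe that $|H'(1/2)|=2n|\alpha_h^{(K+1)}-\alpha_{h+1}^{(K+1)}|$ grows like $n$ while the subdifferential of $p$ at $1/2$ is the bounded interval $[-2,2]$, so the maximizer sits at the kink only with probability tending to zero (and when $\tau_0=1/2$ the conclusion is immediate). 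Your boundary-confinement argument is otherwise sound, since $H(\tilde\tau)-H(\tau)$ grows linearly in $n$ outside any fixed neighborhood of $\tau_0$ while $p$ is $O(1)$ there.
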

 
\begin{proof}  
The proof is similar to the proof of Lemma 3 of \cite{chenli09as} and Lemma 10 in Appendix D of \cite{Kasahara2019}. We suppress $(\tau_0)$ from $\bs{\vartheta}_{M_0+1}^{h(K)}(\tau_0)$ and $\bs{\vartheta}_{M_0+1}^{h*}(\tau_0)$.  We suppress $\bs{Z}$ for brevity. Let $f_i(\bs\gamma,\bs{\theta}_j)$ and $f_i(\bs{\vartheta}_{M_0+1})$ denote $f(\bs{W}_i; \bs{\gamma}, \bs{\theta}_j)$ in (\ref{eq:f1}) and $f_{M_0+1}(\bs{W}_i; \bs{\vartheta}_{M_0+1})$ in (\ref{eq:fm0_1}), respectively. Applying a Taylor expansion to $\alpha_h^{(K+1)}= n^{-1}\sum_{i = 1}^n w_{ih}^{(K)}$ and using $\bs{\vartheta}_{M_0+1}^{h(K)} - \bs{\vartheta}_{M_0+1}^{h*} = o_p(1)$, we obtain
\[
\begin{aligned}
\alpha_m^{(K+1)}& = \frac{1}{n} \sum_{i = 1}^n \frac{\tau^{(K)}(\alpha_{h}^{(K)}+\alpha_{h+1}^{(K)})f_i(\bs\gamma^{(K)},\bs{\theta}_h^{(K)})}{f_i(\bs{\vartheta}_{M_0+1}^{h(K)})} \\
& = \frac{1}{n} \sum_{i = 1}^n \frac{\tau_0 \alpha_{h}^*f_i(\bs{\gamma}^*,\bs{\theta}_h^*)}{f_i(\bs{\vartheta}_{M_0+1}^{h*})} + o_p(1)
= \tau_0 \alpha_{h}^*+ o_p(1),
\end{aligned}
\]
where  the last equality follows from $\E[f_i(\bs{\gamma}^*,\bs{\theta}_h^*)/f_i(\bs{\vartheta}_{M_0+1}^{h*}) ] = 1$ and the law of large numbers. A similar argument gives $\alpha_{h+1}^{(K+1)} = (1 - \tau_0)\alpha_{h}^* + o_p(1)$, and part (a) follows. 

For part (b), define $H(\tau):=\sum_{i=1}^n w_{ih}^{(K)} \log(\tau) + \sum_{i=1}^n w_{i,h+1}^{(K)} \log(1-\tau) = n\alpha_h^{(K+1)} \log(\tau) + n\alpha_{h+1}^{(K+1)}$; then, $\tau^{(K+1)}$ maximizes $H(\tau)+p(\tau)$. $H(\tau)$ is maximized at $\tilde\tau:=  {\alpha_h^{(K+1)}}/(\alpha_h^{(K+1)}+\alpha_{h+1}^{(K+1)})= ({\tau_0\alpha_h^*+o_p(1)})/({ \tau_0\alpha_h^* + (1-\tau_0)\alpha_h^* + o_p(1)}) = \tau_0 + o_p(1)$. Observe that with $\bar\tau$ between $\tau^{(K+1)}$ and $\tilde \tau$, 
\begin{equation}\label{eq:p-H}
p(\tilde\tau)\leq p(\tilde\tau)-p(\tau^{(K+1)})\leq H(\tau^{(K+1)}) -H(\tilde\tau) = H''(\bar\tau)(\tau^{(K+1)}-\tilde\tau)^2,
\end{equation}
where the first inequality follows from $p(\tau)\leq 0$, the second inequality holds because $\tau^{(K+1)}$ maximizes $H(\tau)+p(\tau)$, and the last equality follows from expanding $H(\tau^{(K+1)})$ twice around $\tilde\tau$ and noting that $H'(\tilde\tau)=0$ because $\tilde\tau$ maximizes $H(\tau)$.  
Note that $H''(\tau) = -n \times\left\{\frac{\alpha_h^{(K+1)}}{\tau^2}  + \frac{\alpha_{h+1}^{(K+1)}}{(1-\tau)^2}\right\}<0$ and $\inf_\tau H''(\tau) \geq  -n  (\alpha_h^{(K+1)}+\alpha_{h+1}^{(K+1)})$. Therefore, in view of $\tilde\tau-\tau_0 =o_p(1)$ and (\ref{eq:p-H}), we have 
$(\tau^{(K+1)}-\tilde\tau)^2\leq p(\tilde\tau)/H''(\bar\tau) = O_p(n^{-1})$, and part (b) holds.
 
\end{proof}

\subsection{Score function for testing $H_0:m = 1$ against $H_A:m=2$}\label{sec:appendix_score_1}
$H^j(\cdot)$ is defined as the $j$-th order Hermite polynomial. $H^1(t) = t$, $H^2(t) = t^2 - 1$ , $H^3(t) = t^3 - 3t$, and $H^4(t) = t^4 - 6t^2 + 3$.
As shown in the supplementary material of \cite{Kasahara2015a}, the derivative of $\{ \frac{1}{\sigma} \phi(\frac{t}{\sigma}) \}$ is  $$\frac{\nabla_{\mu^m}\nabla_{ (\sigma^2)^{\ell}} \{ \frac{1}{\sigma} \phi(\frac{t}{\sigma}) \} }{\{ \frac{1}{\sigma} \phi(\frac{t}{\sigma}) \}}
 = \left(\frac{1}{2}\right)^\ell \left(\frac{1}{\sigma}\right)^{m+2\ell} H^{m+2\ell}\left(\frac{t}{\sigma}\right). $$

Let
\begin{equation}\label{eq:Hermite_polynomial}
\begin{split}
    f^* =  f(\bs{w};\gamma^*,\theta^* ), \nabla f^* =  \nabla f(\bs{w};\gamma^*,\theta^* ) , H^{j*}_{i,t} = \frac{1}{\sigma^* j!} H^{j}\left(\frac{y_{it} - \bs{x}_{it}^\top \bs{\beta}^* - \bs{z}_{it}^\top \bs{\gamma}^* - \mu^*  }{\sigma^*} \right);
\end{split}
\end{equation}
then, the first-order derivatives of the density functions are
\begin{align*}
    \nabla_{\mu} f^* & =   f^* \sum_{t=1}^T \frac{1}{\sigma} H^{1*}_{i,t}  ;    \nabla_{\sigma^2} f^*  =  f^*  \sum_{t=1}^T \frac{1}{2} \frac{1}{\sigma^2} H^{2*}_{it} ;\\
    \nabla_{\bs{\beta}} f^*  & =   f^*  \sum_{t=1}^T \frac{1}{\sigma} H^{1*}_{it} \bs{x}_{it}; \nabla_{\bs{\gamma}} f^*   =   f^* \sum_{t=1}^T \frac{1}{\sigma} H^{1*}_{it} \bs{z}_{it} .
\end{align*}

The score function defined in (\ref{eq:s_1}) is then written in terms of the Hermite polynomials:
\begin{equation}\label{eq:s_1_hermite}
\begin{split}
\bs s_{\bs \eta }  = \begin{pmatrix}
s_{\mu } \\
s_{\sigma } \\
\bs s_{\bs\beta } \\
\bs s_{\bs \gamma }
\end{pmatrix}= \begin{pmatrix}
\sum_{t=1}^T  H^{1*}_{i,t} \\
\sum_{t=1}^T  H^{2*}_{i,t} \\
\sum_{t=1}^T  H^{1*}_{i,t}\bs x_{it}\\
\sum_{t=1}^T  H^{1*}_{i,t}\bs z_{it}\\
\end{pmatrix},\qquad
\bs{s}_{\bs{\lambda} \bs\lambda} = \begin{pmatrix}
s_{\lambda_\mu\lambda_\mu}\\
s_{\lambda_\mu\lambda_\sigma}\\
s_{\lambda_\sigma\lambda_\sigma}\\
\bs s_{\lambda_\mu\lambda_{\bs\beta}}\\
\bs s_{\lambda_\sigma\lambda_{\bs\beta}}\\
\bs s_{\lambda_{\bs\beta}\lambda_{\bs\beta}}
  \end{pmatrix},
\end{split}
\end{equation}
where
\begin{equation}\label{eq:s-vector}
\begin{split}
 \begin{pmatrix}
s_{\lambda_{\mu \mu } }\\
s_{\lambda_{\mu \sigma }}\\
s_{\lambda_{\sigma \sigma }}\\
\bs s_{\lambda_{\mu\bs \beta }}\\
\bs s_{\lambda_{\sigma\bs \beta }}\\
\end{pmatrix} &=
\begin{pmatrix}
\sum_{t=1}^T H^{2*}_{i,t} +  \frac{1}{2} \sum_{t=1}^T \sum_{s \neq t} H^{1*}_{1,i,t} H^{1*}_{i,s} \\
3 \sum_{t=1}^T H^{3*}_{i,t} +  \sum_{t=1}^T \sum_{s \neq t} H^{1*}_{i,t} H^{2*}_{i,s} \\
3 \sum_{t=1}^T H^{4*}_{i,t} + \frac{1}{2} \sum_{t=1}^T \sum_{s \neq t} H^{2*}_{i,t} H^{2*}_{i,t}  \\
2 \sum_{t=1}^T H^{2*}_{i,t}\bs x_{it} +  \sum_{t=1}^T \sum_{s \neq t} H^{1*}_{i,t}H^{1*}_{i,s}\bs x_{it}  \\
3  \sum_{t=1}^T H^{3*}_{i,t}\bs x_{it} + 2 \sum_{t=1}^T\sum_{s \neq t} H^{1*}_{i,t}H^{2*}_{i,s}\bs x_{it}
\end{pmatrix},\quad\text{and} \\
\bs s_{\lambda_{\bs\beta\bs\beta}} & =
 \begin{pmatrix}
\sum_{t=1}^T H^{2*}_{i,t}x^2_{it,1} +  \frac{1}{2} \sum_{t=1}^T \sum_{s \neq t} H^{1*}_{i,t}x_{it,1} H^{1*}_{i,s} x_{is,1}\\
\vdots \\
\sum_{t=1}^T H^{2*}_{i,t}x^2_{it,q} +  \frac{1}{2} \sum_{t=1}^T \sum_{s \neq t} H^{1*}_{i,t}x_{it,q} H^{1*}_{i,s} x_{is,q} \\
2 \sum_{t=1}^T H^{2*}_{i,t}x_{it,1} x_{it,2} +  \sum_{t=1}^T \sum_{s \neq t} H^{1*}_{i,t}x_{it,1} H^{1*}_{i,s} x_{is,2}\\
\vdots \\
2 \sum_{t=1}^T H^{2*}_{i,t}x_{it,1}x_{it,q} + \sum_{t=1}^T \sum_{s \neq t} H^{1*}_{i,t}x_{it,1} H^{1*}_{i,s} x_{is,q} \\
2 \sum_{t=1}^T H^{2*}_{i,t}x_{it,2} x_{it,3} +  \sum_{t=1}^T \sum_{s \neq t} H^{1*}_{i,t}x_{it,2} H^{1*}_{i,s} x_{is,3}\\
\vdots \\
2 \sum_{t=1}^T H^{2*}_{i,t}x_{it,q-1}x_{it,q} + \sum_{t=1}^T \sum_{s \neq t} H^{1*}_{i,t}x_{it,q-1} H^{1*}_{i,s} x_{is,q}
\end{pmatrix}.
\end{split}
\end{equation}
When $T = 1$, the score functions are as follows:
\begin{equation} 
\bs s_{\bs\eta } = \begin{pmatrix}
s_{\mu } \\
s_{\sigma } \\
\bs s_{\bs\beta } \\
\bs s_{\bs \gamma }
\end{pmatrix}  = \begin{pmatrix}
  H^{1*}_{i} \\
  H^{2*}_{i} \\
  H^{1*}_{i}\bs x_{i}\\
  H^{1*}_{i}\bs z_{i}\\
\end{pmatrix},\quad
\begin{pmatrix}
s_{\lambda_{\mu \mu } }\\
s_{\lambda_{\mu \sigma }}\\
s_{\lambda_{\sigma \sigma }}\\
\bs s_{\lambda_{\mu\bs \beta }}\\
\bs s_{\lambda_{\sigma\bs \beta }}\\
\end{pmatrix} =
\begin{pmatrix}
 H^{2*}_{i}  \\
3  H^{3*}_{i} \\
3  H^{4*}_{i}   \\
2  H^{2*}_{i}\bs x_{i} \\
3   H^{3*}_{i}\bs x_{i}
\end{pmatrix},\quad \text{and }
\bs s_{\lambda_{\bs\beta\bs\beta}} =  \begin{pmatrix}
 H^{2*}_{i}x^2_{i,1} \\
\vdots \\
 H^{2*}_{i}x^2_{i,q} \\
2  H^{2*}_{i}x_{i,1} x_{i,2} \\
\vdots \\
2  H^{2*}_{i}x_{i,1}x_{i,q}  \\
2  H^{2*}_{i}x_{i,2} x_{i,3}\\
\vdots \\
2  H^{2*}_{i}x_{i,q-1}x_{i,q} \\
\end{pmatrix}.
\end{equation}
Note that $s_{\sigma  }$ and $ s_{\lambda_{\mu \mu }} $ are perfectly collinear, and therefore, the Fisher information matrix associated with the proposed score function is singular under this reparameterization for data with $T=1$.

\subsection{Score function for testing $H_0:m = M_0$ against $H_A:m=M_0 + 1$}\label{sec:appendixb:score_m}
The derivative of the reparameterized density w.r.t. $\lambda$ at $\psi^{h*}_{\tau}$ is identically zero similarly to the test of the homogeneity case.
The values of the score function $s_{\eta i}$ contain the first-order derivatives w.r.t. the $\pi$s $\gamma$ and $\nu$ at $\psi^{h*}_{\tau}$:
\begin{equation}
\begin{split}
\nabla_{\pi^j} l^h(\bs{w};\psi^{h*}_{\tau},\tau) & = \frac{f(\bs{w};\gamma^*,\theta_0^{j*}) - f(\bs{w};\gamma^*,\theta_0^{M_0 *}) }{\sum_{j=1}^{M_0} \alpha_0^{j*}  f(\bs{w};\gamma^*,\theta_0^{j*})};\\
\nabla_{\gamma} l^h(\bs{w};\psi^{h*}_{\tau},\tau) & = \frac{  \sum_{j=1}^{M_0} \alpha_0^{j*} \nabla_{\gamma} f(\bs{w};\gamma^*,\theta_0^{j*})}{\sum_{j=1}^{M_0} \alpha_0^{j*}  f(\bs{w};\gamma^*,\theta_0^{j*})};\\
\nabla_{\nu} l^h(\bs{w};
\psi^{h*}_{\tau},\tau) & = \frac{\nabla_{\theta} f(\bs{w};\gamma^*,\theta_0^{h*})}{\sum_{j=1}^{M_0} \alpha_0^{j*}  f(\bs{w};\gamma^*,\theta_0^{j*})}.
\end{split}
\end{equation}

Define $H^{b*}_{j,i,t}$ as an abridged expression for $\frac{1}{b!} \frac{1}{\sigma_0^* } H^{b}(\frac{y_{it} - \mu_0^{j*} - x_{it}' \beta_0^{j*} - z_{it}' \gamma^* }{\sigma_0^{j*}}) $.  Define the weight $w_{i}^{j*}$ as
\[
w_{i}^{j*}   =  \frac{\alpha_0^{j*} f(\{\bs{W}_{it}\}^T_{t=1};\gamma^*,\theta_0^{j*})}{ f_{M_0}(\{\bs{W}_{it}\}^T_{t=1};\vartheta_{M_0}^*)},j=1,\ldots,M_0,
\]
where $f_{M_0}(\{\bs{W}_{it}\}^T_{t=1};\vartheta_{M_0}^*)$ is defined by equation (\ref{eq:fm0}).

As shown in section \ref{sec:appendixb:score_m}, the score functions are
\[
\begin{split}
\bs s_{\bs\alpha}(\bs w_i)  = \begin{pmatrix}
\frac{f(\bs{w} | \theta_0^{1*}) - f(\bs{w} | \theta^{M_0*}_0) }{\sum_{l} \alpha_0^{l*} f(\bs{w} | \theta^{l*}_0)}\\
\vdots \\
\frac{f(\bs{w} | \theta_0^{M_0-1 *}) - f(\bs{w} | \theta_0^{M_0*}) }{\sum_{l} \alpha_0^{l*} f(\{\bs{W}_{it}^*\}_{t=1}^T | \theta_0^{l*})}
\end{pmatrix},
\bs s_{\mu }(\bs w_i)  = \begin{pmatrix}
w_{i}^{1*} \sum_{t=1}^T  H^{1*}_{1,i,t} \\
\vdots \\
w_{i}^{M_0*}  \sum_{t=1}^T  H^{1*}_{M_0,i,t}\end{pmatrix}, 
\bs s_{\bs\beta}(\bs w_i)  = \begin{pmatrix}
w_{i}^{1 *} \sum_{t=1}^T  H^{1*}_{1,i,t}x_{it}\\
\vdots \\
w_{i}^{M_0 *} \sum_{t=1}^T H^{1*}_{M_0,i,t}x_{it} \end{pmatrix},\\
\bs s_{\sigma }(\bs w_i) = \begin{pmatrix}
w_{i}^{1 *} \sum_{t=1}^T H^{2*}_{1,i,t} \\
\vdots \\
w_{i}^{M_0 *} \sum_{t=1}^T H^{2*}_{M_0,i,t} \end{pmatrix},
\bs s_{\bs\gamma }(\bs w_i) = \begin{pmatrix}
w_i^{1 *} \sum_{t=1}^T  H^{1*}_{1,i,t}z_{it}\\
\vdots \\
w_i^{M_0 *} \sum_{t=1}^T H^{1*}_{M_0,i,t}z_{it}
\end{pmatrix}.
\end{split}
\]
The score function for $\bs s_{\bs\lambda\bs\lambda}^h$ is obtained analogously to  $\bs s_{\bs\lambda\bs\lambda}$ by replacing 
$H^{b*}_{i,t}$ with $H^{b*}_{h,i,t}$ for $b=1,..,4$ so that
\[
\begin{split}
\bs s^h_{\bs\lambda_{\mu \sigma} }(\bs w_i) &  =  w_i^{h *}
\begin{pmatrix}
\sum_{t=1}^T H^{2*}_{h,i,t} +  \frac{1}{2} \sum_{t=1}^T \sum_{s \neq t} H^{1*}_{h,i,t} H^{1*}_{h,i,s} \\
3 \sum_{t=1}^T H^{4*}_{h,i,t} + \frac{1}{2} \sum_{t=1}^T \sum_{s \neq t} H^{2*}_{h,i,t} H^{2*}_{h,i,t}  \\
3 \sum_{t=1}^T H^{3*}_{h,i,t} +  \sum_{t=1}^T \sum_{s \neq t} H^{1*}_{h,i,t} H^{2*}_{h,i,s} \\
2 \sum_{t=1}^T H^{2*}_{h,i,t}x_{it} +  \sum_{t=1}^T \sum_{s \neq t} H^{1*}_{h,i,t}x_{it} H^{1*}_{h,i,s} \\
3  \sum_{t=1}^T H^{3*}_{h,i,t}x_{it} + 2 \sum_{t=1}^T\sum_{s \neq t} H^{1*}_{h,i,t}x_{it} H^{2*}_{h,i,s}
\end{pmatrix} , \\
\bs s^h_{\bs\lambda_{\beta}}(\bs w_i)  & =  w_i^{h *}  \begin{pmatrix}
\sum_{t=1}^T H^{2*}_{h,i,t}x^2_{it,1} +  \frac{1}{2} \sum_{t=1}^T \sum_{s \neq t} H^{1*}_{h,i,t}x_{it,1} H^{1*}_{h,i,s} x_{is,1}\\
\vdots \\
\sum_{t=1}^T H^{2*}_{h,i,t}x^2_{it,q} +  \frac{1}{2} \sum_{t=1}^T \sum_{s \neq t} H^{1*}_{h,i,t}x_{it,q} H^{1*}_{h,i,s} x_{is,q} \\
2 \sum_{t=1}^T H^{2*}_{h,i,t}x_{it,1} x_{it,2} +  \sum_{t=1}^T \sum_{s \neq t} H^{1*}_{h,i,t}x_{it,1} H^{1*}_{h,i,s} x_{is,2}\\
\vdots \\
2 \sum_{t=1}^T H^{2*}_{h,i,t}x_{it,1}x_{it,q} + \sum_{t=1}^T \sum_{s \neq t} H^{1*}_{h,i,t}x_{it,1} H^{1*}_{h,i,s} x_{is,q} \\
2 \sum_{t=1}^T H^{2*}_{h,i,t}x_{it,2} x_{it,3} +  \sum_{t=1}^T \sum_{s \neq t} H^{1*}_{h,i,t}x_{it,2} H^{1*}_{h,i,s} x_{is,3}\\
\vdots \\
2 \sum_{t=1}^T H^{2*}_{h,i,t}x_{it,q-1}x_{it,q} + \sum_{t=1}^T \sum_{s \neq t} H^{1*}_{h,i,t} x_{it,q-1} H^{1*}_{h,i,s} x_{is,q}
\end{pmatrix}.
\end{split}
\]

\subsection{How to simulate the asymptotic distribution}\label{sec:appendixb:simulate}

\section{Other tables}\label{sec:appendixd}

\begin{table}[H] \centering
	\caption{Parameter Specification for Null Models with $M_0 = 1,2,3,4$}
	\label{table:parameter}
	\begin{tabular}{l|l }
		\toprule
		~ & {$M_0 = 1$} \\
		\midrule
		$N$ & $\{ 100,500 \}$  \\
		$T$ & $\{ 2,5,10 \}$ \\
		$a_n$ & $(0.001,0.005, 0.01, 0.05,0.1,0.2, 0.3, 0.4)$ \\
		\midrule
		~ & {$M_0 = 2$} \\
		\midrule
		$N$ & $\{ 100,500 \}$  \\
		$T$ & $\{ 2,5,10 \}$ \\
		$\alpha$ & $\{ (0.5,0.5);(0.2,0.8) \}$  \\
		$\mu$ & $\{ (-1,1), (-0.5,0.5), (-0.8,0.8) \}$ \\
		$\sigma$ & $\{ (1, 1), (1.5, 0.75), (0.8,1.2)\}$ \\
		$a_n$ & $(0.01, 0.05,0.1,0.2, 0.3, 0.4)$ \\
		\midrule
		~ & {$M_0 = 3$} \\
		\midrule
		$N$ &  $\{ 100,500 \}$ \\
		$T$ &  $\{ 2,10 \}$\\
		$\alpha$ &$\{ (1/3,1/3,1/3);(0.25,0.5,0.25) \}$ \\
		$\mu$ &  $\{(-4, 0, 4);(-4, 0, 5);(-5, 0, 5);(-4, 0, 6);(-5, 0, 6);(-6, 0, 6)\}$\\
		$\sigma$ & $\{(1, 1, 1);(0.75, 1.5, 0.75)\}$ \\
		$a_n$ & $(0.01, 0.05,0.1,0.2, 0.3, 0.4)$ \\
		\midrule
		~ & {$M_0 = 4$} \\
		\midrule
		$N$ &  $\{ 100,500 \}$ \\
		$T$ &  $\{ 2,10 \}$\\
		$\alpha$ &$\{ (0.25, 0.25, 0.25, 0.25)\}$ \\
		$\mu$ &  $\{(-4,-1,1,4);(-5,-1,1,5);(-6,-2,2,6);(-6,-1,2,5);(-5,0,2,4);(-6,0,2,4)\}$\\
		$\sigma$ & $\{(1, 1, 1, 1);(1, 0.75, 0.5, 0.25)\}$ \\
		$a_n$ & $(0.01, 0.05,0.1,0.2, 0.3, 0.4)$ \\
		\bottomrule
	\end{tabular}
\end{table}

\begin{table}[H] \centering 
	\caption{The Estimated $a_n$-Function Based on the Simulated Nominal Size} 
	\label{tab:rho_estim} 
	\small
	\begin{tabularx}{\textwidth}{l @{\extracolsep{\fill}} cccc} 
		\\[-1.8ex]\hline 
		\hline \\[-1.8ex] 
		& \multicolumn{4}{c}{\textit{Dependent variable}} \\ 
		\cline{2-5} 
		\\[-1.8ex] & \multicolumn{4}{c}{$\log\left(\frac{\hat{s}}{1- \hat{s}}\right) -  \log\left(\frac{ {0.05}}{1-  {0.05}}\right)$} \\ 
		\\[-1.8ex] & (1) & (2) & (3) & (4)\\ 
		\hline \\[-1.8ex] 
		$1/T$ & 0.776$^{***}$ & $-$0.288$^{***}$ & 0.611$^{***}$ & 0.258$^{***}$ \\ 
		& (0.238) & (0.074) & (0.050) & (0.087) \\ 
		& & & & \\ 
		$1/N$ & 28.143$^{***}$ & 4.637 & 21.156$^{***}$ & 8.585$^{**}$ \\ 
		& (10.127) & (3.124) & (2.524) & (4.334) \\ 
		& & & & \\ 
		$\log\left(\frac{ {a_n}}{1 -  {a_n}}\right)$ & $-$0.016 & $-$0.101$^{***}$ & $-$0.111$^{***}$ & $-$0.128$^{***}$ \\ 
		& (0.019) & (0.009) & (0.007) & (0.030) \\ 
		& & & & \\ 
		$\log\left(\frac{\omega(\bs{\vartheta}_{M_0};M_0)}{1 - \omega(\bs{\vartheta}_{M_0};M_0)}\right)$ &  & $-$0.197$^{***}$ & 0.002 & $-$0.013$^{***}$ \\ 
		&  & (0.029) & (0.006) & (0.003) \\ 
		& & & & \\ 
		Constant & $-$0.616$^{***}$ & $-$0.811$^{***}$ & $-$0.680$^{***}$ & $-$0.735$^{***}$ \\ 
		& (0.113) & (0.047) & (0.060) & (0.068) \\ 
		& & & & \\ 
		\hline \\[-1.8ex] 
		Observations & 48 & 648 & 576 & 288 \\ 
		\hline 
		\hline \\[-1.8ex] 
		\textit{Note:}  & \multicolumn{4}{r}{$^{*}$p$<$0.1; $^{**}$p$<$0.05; $^{***}$p$<$0.01}. \\ 
	\end{tabularx} 
\end{table}

\end{document}